\definecolor{maroon}{RGB}{128,0,0}
\definecolor{gold}{RGB}{255,215,0}
\tikzstyle{startstop} = [rectangle, rounded corners, minimum width=2.5cm, minimum height=2.5cm,text centered, draw=black, fill=blue!20]
\tikzstyle{process} = [rectangle,rounded corners, minimum width=2.5cm, minimum height=2.5cm, text centered, draw=black, fill=green!20]
\tikzstyle{arrow} = [very thick,->,>=stealth]
\tikzstyle{decision} = [rectangle, rounded corners, minimum width=2.5cm, minimum height=2.5cm, text centered, draw=black, fill=red!20]
\theoremstyle{plain}
\newtheorem{theorem}{Theorem}
\newtheorem{lemma}{Lemma}
\newtheorem{corollary}{Corollary}
\newtheorem{proposition}{Proposition}
\newcommand{\ind}{\mathbbm{1}}
\newcommand{\var}{\mathrm{Var}}
\def\##1\#{\begin{align}#1\end{align}}
\def\$#1\${\begin{align*}#1\end{align*}}
\definecolor{myblue}{rgb}{.8, .8, 1}
\definecolor{mathblue}{rgb}{0.2472, 0.24, 0.6} % mathematica's Color[1, 1--3]
\definecolor{mathred}{rgb}{0.6, 0.24, 0.442893}
\definecolor{mathyellow}{rgb}{0.6, 0.547014, 0.24}
\newcommand{\tP}{{\tilde{P}}}
\newcommand{\tX}{{\tilde{X}}}
\newcommand{\calB}{{\mathcal{B}}}
\newcommand{\calN}{{\mathcal{N}}}
\newcommand{\calP}{{\mathcal{P}}}
\newcommand{\calX}{{\mathcal{X}}}
\newcommand{\calY}{{\mathcal{Y}}}
\newcommand{\RR}{\mathbb{R}}
\def\P{{\mathbb P}}
\def\E{{\mathbb E}}
\let\emptyset\varnothing
\let\hat\widehat
\let\tilde\widetilde
\def \eps {\varepsilon}
\newcommand{\printfnsymbol}[1]{%
  \textsuperscript{\@fnsymbol{#1}}%
}
\newcommand*\samethanks[1][\value{footnote}]{\footnotemark[#1]}
\newcommand{\keywords}[1]{\textbf{\textit{Keywords:}} #1}
\title{Conformal prediction with local weights:\\ randomization enables robust guarantees}
\author[1]{Rohan Hore\thanks{Department of Statistics, University of Chicago} \ and Rina Foygel Barber\samethanks}
\date{\today}
\begin{document}

\maketitle

\begin{abstract}
    In this work, we consider the problem of building distribution-free prediction intervals with finite-sample conditional coverage guarantees. Conformal prediction (CP) is an increasingly popular framework for building such intervals with distribution-free guarantees, but these guarantees only ensure marginal coverage: the probability of coverage is averaged over both the training and test data, meaning that there might be substantial undercoverage within certain subpopulations. Instead, ideally we would want to have local coverage guarantees that hold for each possible value of the test point's features. While the impossibility of achieving pointwise local coverage is well established in the literature, many variants of conformal prediction algorithm show favourable local coverage properties empirically. Relaxing the definition of local coverage can allow for a theoretical understanding of this empirical phenomenon.
   
    % We aim to bridge this gap between theoretical validation and empirical performance by proving achievable and interpretable guarantees for a relaxed notion of local coverage.
    % Building on the localized CP method of \citet{guan2023localized} and the weighted CP framework of \citet{tibshirani2019conformal}, 
    We propose randomly-localized conformal prediction (RLCP), a method that builds on localized CP and weighted CP techniques to return prediction intervals that are not only marginally valid but also offer relaxed local coverage guarantees and validity under covariate shift. 
    Through a series of simulations and real data experiments, we validate these coverage guarantees of RLCP while comparing it with the other local conformal prediction methods.
\end{abstract}

\keywords{distribution free prediction intervals, conformal prediction, local coverage guarantees, localized conformal prediction, weighted conformal prediction, distribution shift}

\section{Introduction}\label{sec:motivation}
Consider a training dataset $(X_1,Y_1),(X_2,Y_2),\cdots,(X_n,Y_n)$ and a test point $(X_{n+1},Y_{n+1})$, where each data point $(X_i,Y_i)$ consists of a (possibly high-dimensional) feature $X_i\in \calX$ and a response $Y_i\in \calY$. 
The problem of predictive inference is as follows: if we observe training data $\{(X_i,Y_i)\}_{i\in[n]}$ (where $[n]$ denotes $\{1,\dots,n\}$), and are given a new feature vector $X_{n+1}$ for a new test point, we want to construct a prediction interval $\hat{C}_n (X_{n+1})$---a subset of $\calY$ that is likely to contain, or ``cover'', the true value of the unseen test response $Y_{n+1}$. 
The framework of \emph{distribution-free prediction} aims to provide this type of guarantee with no assumptions on the distribution $P$ of the data: that is, the construction $\hat{C}_n$ should satisfy
\begin{equation}\label{eqn:marginal_coverage}\P\{Y_{n+1}\in\hat{C}_n(X_{n+1})\}\geq 1-\alpha,\end{equation}
for some target coverage level $1-\alpha$, for data drawn i.i.d.\ from \emph{any} distribution $P$.  This type of distribution-free guarantee is achieved by 
the conformal prediction method \citep{vovk2005algorithmic} and related methodologies, which we will describe more below.

While a distribution-free guarantee of the type above is very useful in terms of avoiding relying on strong assumptions about the data distribution, it is limited in other ways. The coverage probability is averaged over the random drawn of both the training data (the data points $(X_1,Y_1),\dots,(X_n,Y_n)$ used to fit the prediction interval $\hat{C}_n$) and over the test point $(X_{n+1},Y_{n+1})$; moreover, it relies on the test point being drawn from the same distribution as the training data. For example, in a medical setting where we are trying to predict patient outcomes $Y$ based on patient features $X$ (e.g., genetic or demographic information, medical history, etc), this guarantee ensures, say, $90\%$ predictive accuracy, \emph{on average} over a new patient $(X_{n+1},Y_{n+1})$ sampled from the \emph{same} distribution as the training sample. It does not guarantee that $90\%$ coverage will hold within a particular demographic of patients (e.g., among patients over age $50$), and also does not guarantee that $90\%$ coverage will hold on average if the general population differs from the training sample (e.g., if the patients for whom we generate predictions, have a different age distribution than the population of patients that were recruited into our study).

\paragraph{Test-conditional coverage.} 
For our prediction intervals to be more useful, then, we might like to have the assurance that the prediction interval $\hat{C}_n (X_{n+1})$ would indeed contain test response $Y_{n+1}$ with fairly high probability at all features $X_{n+1}$ (i.e., among all patient demographics), rather than only on average.
In other words, we would ideally like to be able to achieve the \emph{test-conditional coverage} guarantee (sometimes also referred to as \emph{local coverage}):
\begin{equation}\label{eqn:test_conditional_coverage}\P\{Y_{n+1}\in\hat{C}_n(X_{n+1}) \mid X_{n+1}\}\geq 1-\alpha\textnormal{ for all $P$, almost surely over $X_{n+1}$.}\end{equation}
Comparing to the marginal coverage guarantee~\eqref{eqn:marginal_coverage}, this stronger guarantee would address the issues of uneven coverage over different patient demographics, and of differences between the feature distribution within training sample versus the general population.
Unfortunately, however, in the setting of a nonatomic feature distribution (i.e., the distribution of $X$ has no point masses, meaning, no value $x$ has $\P\{X=x\}>0$), this guarantee is known to be impossible \citep{vovk2012conditional,lei2014distribution}, aside from trivial solutions such as returning a meaningless prediction interval $\hat{C}_n(X_{n+1})\equiv\calY$.

Consequently, we are interested in finding a meaningful compromise between test-conditional coverage~\eqref{eqn:test_conditional_coverage}, which cannot be achieved in a distribution-free regime (aside from settings where $X$ is discrete), and marginal coverage~\eqref{eqn:marginal_coverage}, which can be achieved distribution-free but is too weak for many practical applications. In particular, we might ask for coverage to hold when we condition, not on a specific value of $X_{n+1}$, but instead in a weaker sense, e.g., for any $X_{n+1}$ within some neighbourhood---for instance, $90\%$ coverage may not be guaranteed for patients who are exactly 25.75 years old, but perhaps could be guaranteed to hold on average over any 5-year age range.
To formalize this notion mathematically, let's consider a collection $\calB$ of subsets of feature space, $B\subseteq \calX$, such that we seek valid coverage guarantees if the test feature lies in $B$, for each $B\in\calB$ (for instance, each $B$ might be a subset of patients defined by a particular 5-year age range). 
This can be expressed as the  following desired guarantee:
\begin{equation}\label{eqn:test_conditional_coverage_in_B}
    \P\{Y_{n+1}\in \hat{C}_n(X_{n+1})\mid X_{n+1}\in B\}\geq 1-\alpha ~~~\mbox{for all $P$ and for all $B\in\calB$.}
\end{equation}
Different choices of the collection of subsets $\calB$ represent different coverage goals for $\hat{C}_n(X_{n+1})$. For example, if we take $\calB=\{\calX\}$, then the coverage guarantee~\eqref{eqn:test_conditional_coverage_in_B} is simply the marginal coverage guarantee~\eqref{eqn:marginal_coverage}. 
At the other extreme, if $\calB = \{ \{x \} : x\in\calX\}$ contains all the singleton sets, then~\eqref{eqn:test_conditional_coverage_in_B} becomes equivalent to the (impossible) pointwise test-conditional coverage guarantee~\eqref{eqn:test_conditional_coverage} (and more broadly, the hardness result in \citet{barber2021limits} states that it is impossible to achieve such a guarantee whenever the Vapnik–Chervonenkis (VC)-dimension of $\calB$ is too large). Therefore, to capture a notion of coverage that lies in between the two, we are interested in collections $\calB$ that lie somewhere in between these two extremes---for instance, if $\calX=\RR^d$, we might consider all open balls of a certain radius.

\paragraph{Coverage with respect to covariate shift.} A different relaxation of pointwise test-conditional coverage~\eqref{eqn:test_conditional_coverage} frames the question as a covariate shift problem. Suppose that the test point is drawn from a distribution $\tP = \tP_X \times \tP_{Y|X}$, while training data points are drawn from $P = P_X\times P_{Y|X}$. Here $P_X$ denotes the feature distribution, i.e., the marginal distribution of $X$, while $P_{Y\mid X}$ is the conditional distribution of $Y\mid X$, and similarly for $\tP_X$ and $\tP_{Y\mid X}$. The covariate shift setting assumes that the conditional distribution of $Y|X$ is shared across the two, with $P_{Y|X} = \tP_{Y|X}$; the difference then lies in the marginal distributions of the features (or covariates) $X$, i.e., $P_X$ versus $\tP_X$. If pointwise test-conditional coverage~\eqref{eqn:test_conditional_coverage} were to hold, then this would imply that
\begin{equation}\label{eqn:covariate_shift_all}
    \P_{P^n\times \tP}\{Y_{n+1}\in \hat{C}_n(X_{n+1})\}\geq 1-\alpha ~~~\mbox{for all $P,\tP$ with $P_{Y|X}=\tP_{Y|X}$},
\end{equation}
where the probability is taken with respect to training data $(X_1,Y_1),\dots,(X_n,Y_n)$ sampled i.i.d.\ from $P$, and test point $(X_{n+1},Y_{n+1})$ sampled from $\tP$. This aim is again impossible (indeed, by taking $\tP_X$ to be a point mass at some point $x$, we see that the guarantee~\eqref{eqn:covariate_shift_all} would imply pointwise test-conditional coverage~\eqref{eqn:test_conditional_coverage}, which is impossible to attain). Thus we can again consider taking a relaxation:
\begin{equation}\label{eqn:covariate_shift_some}
    \P_{P^n\times \tP}\{Y_{n+1}\in \hat{C}_n(X_{n+1})\}\geq 1-\alpha ~~~\mbox{for all $P,\tP$ with $P_{Y|X}=\tP_{Y|X}$ and $\tP_X\in\calP$},
\end{equation}
for some class of allowed covariate shifts $\calP$---e.g., we might constrain $\tP_X$ to be near $P_X$ in some metric, or, we might constrain $\tP_X$ to be sufficiently smooth relative to $P_X$ in some sense.

\subsection{Background: conformal prediction, local coverage, and covariate shift}\label{sec:background_CP}
We will next give some background on the conformal prediction methodology, and prior work that aims to achieve approximate local guarantees for conformal prediction (some of which are similar to~\eqref{eqn:test_conditional_coverage_in_B} or to~\eqref{eqn:covariate_shift_some} in flavour).

Conformal inference (CP) \citep{vovk2005algorithmic} is an extremely 
popular framework for calibrating prediction uncertainty, and is used to build prediction intervals satisfying distribution-free marginal validity as in~\eqref{eqn:marginal_coverage}. A common variant of this methodology is the \emph{split conformal prediction} procedure \citep{papadopoulos2002inductive}, which we briefly define now. Assume we have two datasets available $\{(X_i^{\textnormal{pre}},Y_i^{\textnormal{pre}})\}_{i=1,\dots,n_{\textnormal{pre}}}$, a dataset used for pretraining, and $\{(X_i,Y_i)\}_{i=1,\dots,n}$, the holdout set which will be used for calibration.\footnote{In practice, of course, we would simply partition a single available dataset into two disjoint subsets---but it will be convenient here to use the notation of two separate datasets.}  Suppose that we fit a model on $\{(X_i^{\textnormal{pre}},Y_i^{\textnormal{pre}})\}_{i=1,\dots,n_{\textnormal{pre}}}$, and obtain a \emph{conformity score function} $s(x,y)$ that returns a large value if the point $(x,y)$ appears unusual (does not ``conform'') relative to the trends observed in the data. For example, if we run a regression procedure to train a model $\hat{f}(x)$ (which predicts $Y$ given $X=x$), we might choose the commonly used \emph{residual score} $s(x,y) = |y - \hat{f}(x)|$, which returns a large value if the value of $y$ is 
not near the predicted value $\hat{f}(x)$. The dataset $\{(X_i,Y_i)\}_{i=1,\dots,n}$, which contains $n$ many data points that were not used for training $\hat{f}$, is then used for calibrating a threshold for the score function: defining $\hat{q}$ as the $(1-\alpha)(1+1/n)$ quantile of the holdout scores $\{s(X_i,Y_i)\}_{i=1,\dots,n}$, we return the prediction interval\footnote{We use the terminology ``prediction interval'' throughout the paper as this is most familiar in statistical settings where often we have a real-valued response (i.e., $\calY=\RR$), but it might be more correct to say ``prediction set'', since even when $\calY=\RR$ the resulting set may not necessarily consist of a single interval.}
\begin{equation}\label{eqn:split_CP}\hat{C}_n(X_{n+1}) = \left\{y\in\calY : s(X_{n+1},y)\leq \hat{q}\right\}.\end{equation}
It will be useful later on to observe that this threshold can equivalently be defined as a quantile of the empirical distribution of the scores with mass added at  $+\infty$: concretely, $\hat{q} = \textnormal{Quantile}_{1-\alpha}\left(\sum_{i=1}^n \frac{1}{n+1} \delta_{s(X_i,Y_i)} + \frac{1}{n+1}\delta_{+\infty}\right)$,
where $\delta_s$ denotes the point mass at $s$. This can equivalently be defined by choosing $\hat{q}$ as the $(1-\alpha)(1+1/n)$-quantile of the empirical distribution of the holdout scores $s(X_1,Y_1),\dots,s(X_n,Y_n)$.

This method is guaranteed to satisfy distribution-free marginal coverage as in~\eqref{eqn:marginal_coverage}---but what if we would like to achieve (relaxed) test-conditional coverage and/or coverage with respect to (some classes of) covariate shift?
If our goal is to achieve test-conditional coverage relative to a \emph{single} known subset $B\subseteq\calX$, we can simply run split conformal prediction using only holdout points $i$ with $X_i\in B$ to calibrate $\hat{q}$. Similarly, if our goal is to achieve coverage under covariate for a \emph{single} known shifted distribution $\tP_X$, the Weighted Conformal Prediction (WCP) method of \citet{tibshirani2019conformal} uses a weighted quantile, with weights reflecting the shift from $P_X$ to $\tP_X$, to define $\hat{q}$ in a way that achieves the desired coverage guarantee: the threshold $\hat{q}$ in~\eqref{eqn:split_CP} is replaced with a \emph{weighted} quantile,
$\hat{q} = \textnormal{Quantile}_{1-\alpha}\left(\sum_{i=1}^n w_i \delta_{s(X_i,Y_i)} + w_{n+1}\delta_{+\infty}\right)$, where weights $w_i$ are chosen to correct for the covariate shift; for example, if $P_X$ has density $f_X$ and $\tP_X$ has density $\tilde{f}_X$, we choose $w_i \propto \tilde{f_X}(X_i)/f_X(X_i)$. 

However, in this work, we are interested in achieving coverage conditional on $X_{n+1}\in B$ for \emph{any} $B$ (in some broad class of allowed sets $\calB$, e.g., all balls of a certain radius), or, coverage with respect to covariate shift for \emph{any} $\tP_X$ within some allowed class of shifted distributions. 
For these more challenging problems, a range of methods have been proposed in the literature, including work by \citet{lei2014distribution,cauchois2024robust,izbicki2022cd,guan2023localized,gibbs2023conformal}. Some of these methods offer theoretical results aiming towards relaxed guarantees similar to~\eqref{eqn:test_conditional_coverage_in_B} (for test-conditional coverage) or to~\eqref{eqn:covariate_shift_some} (for covariate shift). We defer a more detailed discussion of this broad field to Section~\ref{sec:discussion_literature}, where we will be able to compare more closely to our own results.

\subsection{Our contributions}
As described above, many methods within the conformal prediction framework have attempted to provide approximate guarantees of test-conditional coverage and coverage with respect to covariate shift, but combining sharp empirical performance with interpretable theoretical guarantees has proved to be challenging. In our work, we aim to unify these questions, and bridge the gap between theoretical validity and empirical performance for these problems. Our aim is to provide a $\hat{C}_n$ that simultaneously satisfies all of the following aims: 

\begin{enumerate}[label=(\textbf{A\arabic*})]
    \item \label{A1} $\hat{C}_n$ provably achieves exact finite-sample distribution-free marginal coverage, i.e., $\P\{Y_{n+1}\in \hat{C}_n(X_{n+1})\}\geq 1-\alpha$ for data drawn i.i.d.\ from any distribution $P$. 
    \item \label{A2} $\hat{C}_n$ provably achieves a meaningful relaxed test-conditional coverage guarantee, with $\P\{Y_{n+1}\in \hat{C}_n(X_{n+1})\mid X_{n+1}\in B\}\gtrapprox 1-\alpha$ holding over all sets $B$ satisfying some regularity conditions that we will detail later on.

    \item \label{A3} $\hat{C}_n$ provably achieves approximate coverage guarantees for covariate shift problems, with $\P_{P^n\times \tP}\{Y_{n+1}\in \hat{C}_n(X_{n+1})\}\gtrapprox 1-\alpha$ holding over all $P$ and all covariate-shifted distributions $\tP=\tP_X\times P_{Y|X}$, where the shift from $P_X$ to $\tP_X$ satisfies some regularity conditions that we will detail later on.
\end{enumerate}
Of course, all of these aims could again be achieved by simply returning the uninformative interval $\hat{C}_n(X_{n+1})\equiv \calY$---so our goal is to provide a $\hat{C}_n$ that satisfies aims \ref{A1}--\ref{A3} without being too conservative, i.e., producing an interval that is as narrow as possible. We will see that randomly localized conformal prediction (RLCP) provides a tool that can allow us to achieve these
aims without being overly conservative.

\paragraph{Organization of the paper.}
The remainder of the paper is organized as follows. We will begin by reviewing the literature on framework for locally weighted conformal prediction methods in Section~\ref{sec:review}, giving details on some existing methods and their shortcomings in order to gain motivation for our proposed method RLCP. Then in Section~\ref{sec:method}, we formally introduce RLCP, and in Section~\ref{sec:theory}, we present theoretical results towards establishing that RLCP satisfies the desired properties \ref{A1}, \ref{A2}, \ref{A3}. In Section~\ref{sec:experiment}, we demonstrate the performance of RLCP (and compare it with these other locally weighted CP methods described in Section~\ref{sec:review}) on both simulated and real data. Finally, we conclude with a discussion in Section~\ref{sec:discussion}, where we compare our method with existing literature and examine the role that randomization plays in our method. Some proofs and some details of implementing the methods are deferred to the Appendix.

\section{Conformal prediction with local weights}
\label{sec:review}
In this section, we will review the generalized framework of locally weighted conformal prediction from \citet{guan2023localized} where two locally adaptive conformal methods have been introduced. Our terminology here will differ somewhat from the terminology used in that reference, to help us introduce the new perspectives and frameworks in this paper later on.

As for the split conformal method defined in~\eqref{eqn:split_CP}, we begin by pretraining a score function $s=s(x,y)$, where large values indicate that $(x,y)$ does not agree or ``conform'' with the trends observed in the pretraining set $\{(X^{\textnormal{pre}}_i,Y^{\textnormal{pre}}_i)\}_{i=1,\dots,n_{\textnormal{pre}}}$. Unlike the split conformal method, however, under a localized framework we will compute a local threshold for the score function, to produce a prediction interval of the form
\begin{equation}\label{eqn:LCP_generic_Chatn}\hat{C}_n(X_{n+1}) = \left\{y\in\calY: s(X_{n+1},y)\leq \hat{q}(X_{n+1})\right\},\end{equation}
where $\hat{q}(x)$ may now vary with $x$, to reflect local differences in the distribution of the score. For example, for the regression score $s(x,y) = |y - \hat{f}(x)|$, if the fitted model $\hat{f}$ appears to be a poor fit at (or near) some particular $x$, we would want $\hat{q}(x)$ to be larger at (or near) this value $x$ to ensure local predictive coverage.

The remaining question, then, is how to compute this local threshold $\hat{q}(x)$. \citet{guan2023localized}'s localized conformal prediction (LCP) framework addresses this question, as we will now describe.

\subsection{An initial approach: baseLCP}\label{sec:baseLCP}
The threshold $\hat{q}$ constructed by the split conformal method may not lead to local coverage at (or near) some particular $x$, because the distribution of the scores $s(X_i,Y_i)$ over the holdout set $i=1,\dots,n$ may be quite different from the distribution of the score $s(X_{n+1},Y_{n+1})$ if we condition on $X_{n+1}=x$ (or $X_{n+1}\approx x$). To address this, a natural approach is to compute a quantile $\hat{q}(x)$ that places more weight on holdout points $X_i$ that lie near $x$. This method is discussed by \citet{guan2023localized}, and formalized as follows. We first choose some \emph{localization kernel} given by a function
\[H: \calX\times\calX\rightarrow\RR_{\geq 0},\]
for instance, for $\calX=\RR^d$, common choices include a Gaussian kernel (with bandwidth $h$) given by
\[H(x,x') = \exp\left\{ - \|x-x'\|^2_2/2h^2\right\},\]
or a box kernel (again with bandwidth $h$), given by
\[H(x,x') = \ind\{\|x-x'\|_2\leq h\}.\]
The local quantile is then computed as
\begin{equation}\label{eqn:hatqx_baseLCP} \hat{q}_{1-\alpha}(X_{n+1}) = \textnormal{Quantile}_{1-\alpha}\left(\sum_{i=1}^n w_i \delta_{s(X_i,Y_i)} + w_{n+1} \delta_{+\infty}\right),\end{equation}
where $\delta_s$ is the point mass at $s$, and where the weights are given by
\[w_i = \frac{H(X_i,X_{n+1})}{\sum_{j=1}^{n+1} H(X_j,X_{n+1})}, \ i\in[n+1].\]
(Note that $\hat{q}_{1-\alpha}(X_{n+1})$ depends on the training data $\{(X_i,Y_i)\}_{i\in[n]}$ as well as the test point $X_{n+1}$, but this dependence is not made explicit in the notation.)

Finally, the resulting prediction interval is defined as in~\eqref{eqn:LCP_generic_Chatn}, i.e.,
\[\hat{C}_n^{\textnormal{baseLCP}}(X_{n+1}) = \left\{y\in\calY : s(X_{n+1},y)\leq \hat{q}_{1-\alpha}(X_{n+1})\right\}.\]
Observe that, if we choose the trivial localization kernel $H(x,x')\equiv 1$ (i.e., no localization), then we would have $\hat{q}_{1-\alpha}(x)$ constant over $x$, and we recover the original split conformal method~\eqref{eqn:split_CP}. 

 For the Gaussian kernel and box kernel described above (as well as other natural choice of $H$), the choice of bandwidth $h$ controls the extent of localization. At a high level, for larger values of $h$ the method behaves similarly to split conformal, while for smaller $h$ the method becomes more localized, and can offer better empirical local coverage. However, if we choose $h$ too small, performance will suffer due to low effective sample size---the quantile $\hat{q}(x)$ may even be estimated as $+\infty$. We will return later on to the question of choosing a bandwidth $h$ that appropriately addresses this trade-off.

This approach is very intuitive: to (approximately) achieve local coverage at $x$, we simply give more weight to holdout points for which $X_i\approx x$. Interestingly, however, this method may fail in practice: even marginal coverage may fail; empirically, the marginal coverage may be either higher or lower than $1-\alpha$, and \citet[Propositions 2.4, 2.5]{guan2023localized} show theoretically that marginal coverage may be arbitrarily high or arbitrarily low for this method. Consequently, we will refer to this method as the ``base'' version of localized conformal prediction, or ``baseLCP'' for short, to indicate that it forms a baseline that other methods will build upon. To correct for this issue in the marginal coverage level, \citet{guan2023localized} proposes an adjusted version of the method to address the problems of baseLCP, which we will present next.

\subsection{Calibrating to recover marginal coverage: calLCP}\label{sec:defining calLCP}

Now we describe \citet{guan2023localized}'s proposed method---this method is referred to as localized conformal prediction in that paper but we will refer to it as ``calibrated'' localized conformal prediction, or calLCP for short, to distinguish it from baseLCP and from our own method to be defined later on. The idea is to recover marginal coverage by changing the definition of the local threshold $\hat{q}(x)$. 

Recall that baseLCP uses the threshold $\hat{q}_{1-\alpha}(X_{n+1})$ for the score $s(X_{n+1},y)$, to determine whether this value $y$ belongs in the prediction interval---but as we have discussed above, this might lead to actual coverage being below or above the target level $1-\alpha$. The idea of calLCP is to replace the level $1-\alpha$ in the quantile with a different value, to recalibrate to the target coverage level.

Specifically, \citet{guan2023localized} defines the prediction interval as \[\hat{C}_n^{\textnormal{calLCP}}(X_{n+1}) = \left\{y\in\calY : s(X_{n+1},y)\leq \hat{q}_{1-\tilde{\alpha}(X_{n+1},y)}(X_{n+1})\right\},\]
for
\[
\hat{q}_{1-\tilde{\alpha}(X_{n+1},y)}(X_{n+1}) = \textnormal{Quantile}_{1-\tilde{\alpha}(X_{n+1},y)}\left(\sum_{i=1}^n w_i~ \delta_{s(X_i,Y_i)} + w_{n+1} ~\delta_{s(X_{n+1},y)}\right),
\]
where the adjusted quantile value $1-\tilde{\alpha}(X_{n+1},y)$ is chosen ensure marginal coverage, as follows. To define this value, we introduce notation 
\[Y^y_i = \begin{cases} Y_i, & i\in [n], \\ y, & i=n+1,\end{cases}\]
i.e., $(Y^y_1,\dots,Y^y_{n+1})$ is equal to the vector containing the $n$ training data response values $Y_1,\dots,Y_n$, augmented with the value $y$ in the last argument.
Then $\tilde{\alpha}(X_{n+1},y)$ is defined as
\begin{equation}\label{eqn:definition of true alpha_tilde}
\tilde\alpha(X_{n+1},y) = \max\left\{a \in \Gamma(w): \sum_{i=1}^{n+1} \ind\left\{s(X_i,Y^y_i)\leq \textnormal{Quantile}_{1-a}\left( \sum_{j=1}^{n+1} w_{i,j} \delta_{s(X_i,Y^y_i)}  \right) \right\}\geq (1-\alpha)(n+1)\right\},
\end{equation}
where, overloading notation, we let 
\[w_{i,j} = \frac{H(X_j,X_i)}{\sum_{k=1}^{n+1} H(X_k,X_i)}, \ i,j\in[n+1]\]
(and note that $w_{n+1,i}$ coincides with the weight $w_i$ defined above),
and where
\begin{equation}\label{eqn:def_Gamma_w_calLCP}\Gamma(w) = \left\{\sum_{j\in I} w_{i,j} : i\in[n+1], ~I\subseteq[n+1]\right\}\end{equation}
is the set of possible sums of weights that might occur in the calculation of the discrete Cumulative Distribution Fucntion (CDF) inside the quantile.

Examining this definition, it appears that implementing calLCP requires computing a recalibrated threshold $\tilde{\alpha}(X_{n+1},y)$ for every value $y\in\calY$ (i.e., a similar computational challenge to that faced by full conformal prediction \citep{vovk2005algorithmic}). However, one can bypasses this issue of computational complexity via an efficient implementation proposed in \citet[Section 3.2]{guan2023localized}.

With this recalibration step, 
\citet[Theorem 2.1]{guan2023localized} establishes that marginal coverage at level $1-\alpha$ holds for the calLCP method (unlike for baseLCP).
Empirically, the method also shows improvement in local coverage, but there are no known distribution-free theoretical results to guarantee this (however, under additional smoothness assumptions, a conditional coverage result is established in \citet[Theorem 5.2]{guan2023localized}). Interestingly, the marginal coverage result can be recovered via a new insight: we will show in Section~\ref{sec:calLCP_fullCP}
that \citet{guan2023localized}'s calLCP method is, in fact, an instance of full conformal prediction---but implemented with a modified score function, not with the original pretrained score function $s=s(x,y)$.

\subsection{Another look at baseLCP: achieving a different coverage guarantee}\label{sec:new look at baseLCP}
In this section, we will talk about an alternative formulation of baseLCP in order to get a better understanding of the properties of this method, and why marginal coverage can fail. (This in turn will help motivate the construction of RLCP, which we will present next.)

Consider a new feature variable $\tX_{n+1}$, generated as $\tX_{n+1}|X_{n+1}\sim P_X\circ H(\cdot,X_{n+1})$, where for any function $g: \calX\rightarrow\RR_{\geq 0}$ with $0<\E_{P_X}[g(X)]<\infty$, we define $P_X\circ g$ as the distribution $P_X$ reweighted by $g$, i.e.,
\begin{equation}\label{eqn:define_PX_circ_g}(P_X\circ g)(A) = \frac{\int_A g(x)\;\mathsf{d}P_X(x)}{\int_{\calX}g(x)\;\mathsf{d}P_X(x)}\textnormal{ for all }A\subseteq\calX.\end{equation}
If the localization kernel $H$ has a reasonably small bandwidth, then the reweighted distribution $P_X\circ H(\cdot,X_{n+1})$ places most of its mass near $X_{n+1}$. As a result,
we can interpret $\tX_{n+1}$ as a sort of ``synthetic sample'' designed to be similar to the test feature $X_{n+1}$---in a setting where each data point is a patient, we are generating new patient data $\tX_{n+1}$ for a patient that is \emph{similar} (in feature space) to the test patient for whom we want to perform prediction.

Now we return to the baseLCP method. How does constructing the ``synthetic sample'' $\tX_{n+1}$ relate to this procedure?
It turns out that baseLCP is actually related to running WCP \citep{tibshirani2019conformal} on the covariate shift problem defined by taking $\tP_X = P_X\circ H(\cdot,X_{n+1})$. Specifically, if we define
\[\hat{C}_n^{\textnormal{WCP}}(X_{n+1},\tX_{n+1}) = \left\{y\in\calY : s(\tX_{n+1},y) \leq \hat{q}_{1-\alpha}(X_{n+1})\right\},\]
then this prediction interval will have probability $\geq 1-\alpha$ of covering a new response value $\tilde{Y}_{n+1}$, drawn from the conditional distribution $P_{Y|X}$ at $X=\tX_{n+1}$ (as established by \citet[Theorem 5.3]{guan2023localized}).
An equivalent way of thinking about this is that the threshold $\hat{q}_{1-\alpha}(X_{n+1})$ offers a $1-\alpha$ probability of bounding the score for a synthetic test point, $s(\tX_{n+1},\tilde{Y}_{n+1})$---but not necessarily the original score $s(X_{n+1},Y_{n+1})$ itself for the actual test point of interest.

This interpretation leads us towards the RLCP method. In baseLCP, using the kernel $H(\cdot,X_{n+1})$ centred at the true test point $X_{n+1}$, leads to a threshold $\hat{q}_{1-\alpha}(X_{n+1})$ that provides coverage for a synthetic sample $\tX_{n+1}$---that is, for data points that are \emph{similar} to $X_{n+1}$, but not necessarily for $X_{n+1}$ itself. For RLCP, then, we will instead use a kernel that is \emph{not} centred at $X_{n+1}$, but will aim to achieve coverage for $X_{n+1}$ itself.

\section{Randomly localized conformal prediction}
\label{sec:method}
From our discussion above, we saw that baseLCP constructs the threshold $\hat{q}_{1-\alpha}(X_{n+1})$ with weights defined by a kernel centred at $X_{n+1}$, which then leads to coverage for a ``synthetic sample'' $\tX_{n+1}$ that is sampled \emph{near} $X_{n+1}$. To define RLCP, we will reverse this construction: we will define the threshold using weights centred at a random $\tX_{n+1}$ sampled near $X_{n+1}$, and will see later on that this leads to marginal coverage (and also approximate local coverage) at $X_{n+1}$.

We now define the method. We again begin with a function $H:\calX\times\calX\rightarrow \RR_{\geq 0}$, with an added assumption that, for each $x$, $H(x,\cdot)$ defines a density with respect to some base measure $\nu$ on $\calX$:
\begin{equation}\label{eqn:condition_on_H}
\int_{\calX} H(x,x')\;\mathsf{d}\nu(x') = 1\textnormal{ for all $x\in\calX$}.
\end{equation}
Returning to the earlier examples of the Gaussian kernel and the box kernel in the setting $\calX = \RR^d$, here we can use the same kernels up to a normalizing constant (and take $\nu$ to be Lebesgue measure): for the Gaussian kernel with bandwidth $h$, we take
\begin{equation}\label{eqn:H_gaussian_normalized}H(x,x')=\frac{1}{(2\pi h^2)^{d/2}}\exp\left(-\frac{\|x-x'\|^2_2}{2h^2}\right),\end{equation}
while for the box kernel with bandwidth $h$, we take
\begin{equation}\label{eqn:H_box_normalized}H(x,x')=\frac{1}{V_d h^d}\ind\{\|x-x'\|_2\leq h\},\end{equation}
where $V_d$ is the volume of the unit ball in $\RR^d$. Although both of these choices for $H$ happen to be symmetric in the two arguments, we do not assume this condition.

With this function selected, we now define the RLCP method: given the pretrained score $s=s(x,y)$, the holdout data $(X_1,Y_1),\dots,(X_n,Y_n)$, and the test point $X_{n+1}$,
\begin{multline}\label{eqn:def_RLCP}
\textnormal{Sample $\tX_{n+1}$ from density $H(X_{n+1},\cdot)$,}\\\textnormal{ then return $\hat{C}_n^{\textnormal{RLCP}}(X_{n+1},\tX_{n+1}) = \{y\in\calY : s(X_{n+1},y)\leq \hat{q}_{1-\alpha}(X_{n+1},\tX_{n+1})\}$,}
\end{multline}
where analogously to before, the threshold $\hat{q}_{1-\alpha}(X_{n+1},\tX_{n+1})$ is computed as
\[\hat{q}_{1-\alpha}(X_{n+1},\tX_{n+1}) = \textnormal{Quantile}_{1-\alpha}\left( \sum_{i=1}^n \tilde{w}_i \delta_{s(X_i,Y_i)} + \tilde{w}_{n+1} \delta_{+\infty}\right),\]
for weights
\[\tilde{w}_i = \frac{H(X_i,\tX_{n+1})}{\sum_{j=1}^{n+1}H(X_j,\tX_{n+1})}, \quad i\in[n+1].\]
(As before, $\hat{q}(X_{n+1},\tX_{n+1})$ depends also on the training data $\{(X_i,Y_i)\}_{i\in[n]}$, but this dependence is not explicit in the notation.)

The only difference between RLCP and baseLCP, then, is that the weights are given by the $\tilde{w}_i$'s, computed with the kernel $H$ centred at $\tX_{n+1}$, rather than the original $w_i$'s, which were computed with $H$ centred at $X_{n+1}$.
\subsection{The role of $\tX_{n+1}$}\label{sec:role_of_Xt}
Before proceeding, we pause to discuss $\tX_{n+1}$ a bit more. First, observe that the randomness of $\tX_{n+1}$ means that the resulting prediction interval is no longer a deterministic function of the data---it is randomized because the threshold $\hat{q}_{1-\alpha}(X_{n+1},\tX_{n+1})$ is computed with the kernel $H$ centred at a random point $\tX_{n+1}$ (hence the term ``randomly localized'' in the name of the method). 
To compare again with baseLCP,
\begin{itemize}
    \item For baseLCP, the weights used for computing the threshold $\hat{q}_{1-\alpha}(X_{n+1})$ are calculated by centring the kernel $H$ at $X_{n+1}$, and can lead to coverage for a ``synthetic sample'' $\tX_{n+1}$, drawn from $P_X\circ H(\cdot,X_{n+1})$. In other words, we can think of the original test point $X_{n+1}$ as a prototype: coverage will hold relative to the reweighted feature distribution $P_X\circ H(\cdot,X_{n+1})$, which generates new ``synthetic samples'' $\tX_{n+1}$ which are potentially very similar to this prototype---but coverage may fail for $X_{n+1}$ itself, i.e., marginal coverage may fail to hold, as observed by \citet{guan2023localized}.
    \item  For RLCP, the weights used for computing the threshold $\hat{q}_{1-\alpha}(X_{n+1},\tX_{n+1})$ are calculated by centring the kernel $H$ at a random point $\tX_{n+1}$ that was drawn near $X_{n+1}$, and (as we will see below) will lead to coverage for $X_{n+1}$ itself. We can now think of $\tX_{n+1}$ as a ``synthetic prototype'', and conditionally, the test sample $X_{n+1}$ is likely to lie near to this prototype $\tX_{n+1}$. More concretely, we will see that the conditional distribution of $X_{n+1}$ given $\tX_{n+1}$ is equal to the reweighted feature distribution $P_X\circ H(\cdot,\tX_{n+1})$, which (for a strongly localizing kernel $H$) places most of its mass near the ``synthetic prototype'' $\tX_{n+1}$.
\end{itemize}

We observe that, for RLCP, the ``synthetic prototype'' $\tX_{n+1}$ may \emph{not} be a plausible data value---for example, the feature distribution $P_X$ might be supported on integer values, but we might still choose to use a Gaussian kernel for $H$, leading to noninteger values of $\tX_{n+1}$ when it is sampled from density $H(X_{n+1},\cdot)$. In contrast, for baseLCP, $\tX_{n+1}$ is drawn from $P_X\circ H(\cdot,X_{n+1})$, whose support is equal to (or is a subset of) the support of $P_X$, and so we think of $\tX_{n+1}$ as a ``synthetic sample''---a new data point (e.g., a new patient) from the same population, that was drawn in such a way that it is similar to $X_{n+1}$.

\section{Theoretical guarantees}
\label{sec:theory}
In this section, we present theoretical coverage guarantees for RLCP. We will see that these guarantees verify that RLCP achieves the aims outlined earlier in \ref{A1}--\ref{A3}. 

We will begin in Section~\ref{sec:theory_marginal} by establishing a key property of the RCLP method in Proposition~\ref{prop:key_property} that underlies in all its coverage guarantees.
With this result in place, we then present the marginal coverage guarantee (aim \ref{A1}). Approximate test-conditional coverage (aim \ref{A2}) and robustness to covariate shift (aim \ref{A3}) will be addressed in Sections~\ref{sec:theory_test_conditional} and~\ref{sec:theory_covariate_shift}, respectively, where we will also explore the relationship between these two problems. RLCP also enjoys a training-conditional coverage guarantee, but since this result is of a different flavour, we defer it to Appendix~\ref{app:training_conditional}.

\subsection{Marginal coverage and a key property}\label{sec:theory_marginal}
We begin by defining some notation. In the field of distribution-free prediction, test-conditional coverage is often quantified by studying the conditional miscoverage rate
\[\alpha(X_{n+1}) = \P\left\{Y_{n+1}\not\in \hat{C}_n(X_{n+1}) \ \middle| \ X_{n+1}\right\}, \]
where $\hat{C}_n$ is the prediction interval produced by the procedure of interest. In other words, $\alpha(X_{n+1})$ is the miscoverage rate conditional on $X_{n+1}$, but marginalizing over the random draw of the training data (and the draw of $Y_{n+1}$). To relate this random variable to our earlier terminology and notation, we can observe that the marginal coverage guarantee~\eqref{eqn:marginal_coverage} is then equivalent to requiring $\E[\alpha(X_{n+1})]\leq \alpha$, while the pointwise test-conditional coverage guarantee~\eqref{eqn:test_conditional_coverage} would require $\alpha(X_{n+1})\leq \alpha$ almost surely.

Now we extend this idea to  RLCP. For our method, after marginalizing over the randomness in the training data, we have an additional remaining random variable $\tX_{n+1}$ in addition to the test feature $X_{n+1}$. We then define
\[\alpha(X_{n+1},\tX_{n+1}) = \P\left\{Y_{n+1}\not\in \hat{C}^{\textnormal{RLCP}}_n(X_{n+1},\tX_{n+1}) \ \middle| \  X_{n+1},\tX_{n+1}\right\}.\]

With this notation in place, the following result establishes a key property of the RLCP method.
For this proposition, and all subsequent results, we assume that the RLCP method is implemented with a localization kernel $H$ that satisfies the condition~\eqref{eqn:condition_on_H}, i.e., $H(x,\cdot)$ is a density (with respect to some fixed base measure) for each $x$.

\begin{proposition}[Key property of RLCP]\label{prop:key_property}
Let $(X_1,Y_1),\dots,(X_{n+1},Y_{n+1})\stackrel{\textnormal{iid}}{\sim} P$ for any distribution $P$. Then the RLCP method defined in~\eqref{eqn:def_RLCP} satisfies
\[\E\left[\alpha(X_{n+1},\tX_{n+1}) \ \middle| \  \tX_{n+1}\right] \leq \alpha\textnormal{ almost surely}.\]
\end{proposition}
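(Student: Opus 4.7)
The plan is to recognize that, conditional on $\tX_{n+1}$, the RLCP procedure is exactly an instance of the weighted split conformal prediction of \citet{tibshirani2019conformal} applied to a specific covariate shift problem, and then invoke their coverage guarantee. The marginal coverage in aim \ref{A1} would then follow immediately by tower property, since $\E[\alpha(X_{n+1},\tX_{n+1})]=\E[\,\E[\alpha(X_{n+1},\tX_{n+1})\mid \tX_{n+1}]\,]\leq \alpha$.

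First, I would carry out the conditional distribution calculation. Since $(X_i,Y_i)$ for $i\in[n]$ are iid from $P$ and independent of $(X_{n+1},Y_{n+1},\tX_{n+1})$, while $\tX_{n+1}\mid X_{n+1}\sim H(X_{n+1},\cdot)\cdot \nu$, the joint density of $(X_{n+1},\tX_{n+1})$ with respect to $(P_X \text{-base measure})\otimes \nu$ factors as $f_{P_X}(x)H(x,\tx)$. A direct Bayes-rule computation then gives
\[
X_{n+1}\mid \tX_{n+1} \sim P_X\circ H(\cdot,\tX_{n+1}),
\]
with the conditional law of $Y_{n+1}\mid X_{n+1}$ unchanged (still $P_{Y\mid X}$). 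Meanwhile, $(X_i,Y_i)_{i\in[n]}\mid \tX_{n+1}$ remain iid from $P$. Thus, conditionally on $\tX_{n+1}$, the calibration set follows $P$ while the test point follows $\tP=\big(P_X\circ H(\cdot,\tX_{n+1})\big)\times P_{Y\mid X}$—a covariate shift problem with the conditional law of the response preserved.

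Next, I would identify the corresponding weighted-exchangeability weights. The covariate-shift likelihood ratio is $\frac{\diff (P_X\circ H(\cdot,\tX_{n+1}))}{\diff P_X}(x)\propto H(x,\tX_{n+1})$, by the definition of $P_X\circ g$ in~\eqref{eqn:define_PX_circ_g} combined with the normalization~\eqref{eqn:condition_on_H}. Plugging these density ratios into the WCP construction and normalizing across the $n+1$ feature points yields exactly the RLCP weights
\[
\tilde w_i=\frac{H(X_i,\tX_{n+1})}{\sum_{j=1}^{n+1} H(X_j,\tX_{n+1})},\quad i\in[n+1].
\]
In particular, the threshold $\hat q_{1-\alpha}(X_{n+1},\tX_{n+1})$ used by RLCP coincides with the weighted split-CP threshold for this shift, where the $\delta_{+\infty}$ padding plays its usual role of accounting for the unknown test score.

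Finally, I would apply \citet{tibshirani2019conformal}'s weighted split-CP coverage guarantee conditionally on $\tX_{n+1}$, which gives $\P\{Y_{n+1}\in \hat C_n^{\textnormal{RLCP}}(X_{n+1},\tX_{n+1})\mid \tX_{n+1}\}\geq 1-\alpha$ almost surely; rewriting in terms of $\alpha(X_{n+1},\tX_{n+1})$ yields the stated claim. The main obstacle is the bookkeeping in the first step: one must verify carefully that, conditional on $\tX_{n+1}$, the joint density of the $n+1$ data points factors so that swapping the roles of any index $i\in[n]$ with $n+1$ produces a likelihood ratio exactly equal to $H(X_i,\tX_{n+1})/H(X_{n+1},\tX_{n+1})$, so that the weighted-exchangeability hypothesis of WCP is met with the weights above. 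Once that factorization is written down explicitly, the rest of the argument is the standard split-WCP quantile argument applied pointwise in $\tX_{n+1}$.
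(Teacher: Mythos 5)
Your proof is correct and follows essentially the same route as the paper's: condition on $\tX_{n+1}$, observe that the data then become a covariate shift instance with the test-feature law $P_X\circ H(\cdot,\tX_{n+1})$ and identify the RLCP weights $\tilde w_i$ as the WCP weights for this shift, then invoke \citet{tibshirani2019conformal}'s guarantee pointwise in $\tX_{n+1}$. The only difference is that you spell out the weighted-exchangeability bookkeeping more explicitly, which the paper leaves implicit.
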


This new quantity $\alpha(X_{n+1},\tX_{n+1})$ can again be related to marginal coverage by taking an expected value, and thus Proposition~\ref{prop:key_property} immediately yields a marginal coverage guarantee.

\begin{theorem}[Marginal coverage guarantee]\label{thm:marginal_coverage}
The RLCP method defined in~\eqref{eqn:def_RLCP} satisfies the marginal coverage property~\eqref{eqn:marginal_coverage}, i.e., 
for any distribution $P$, if  $(X_1,Y_1),\dots,(X_{n+1},Y_{n+1})\stackrel{\textnormal{iid}}{\sim} P$, then
\[\P\left\{Y_{n+1}\in\hat{C}^{\textnormal{RLCP}}_n(X_{n+1},\tX_{n+1})\right\}\geq 1-\alpha.\]
\end{theorem}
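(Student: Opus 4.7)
The plan is to derive Theorem~\ref{thm:marginal_coverage} as an immediate corollary of Proposition~\ref{prop:key_property} via iterated expectation, so the proof is essentially a two-line tower-property argument.

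First, I would unpack the marginal miscoverage probability in terms of the conditional miscoverage function $\alpha(X_{n+1},\tX_{n+1})$. By construction, $\alpha(X_{n+1},\tX_{n+1})$ is the miscoverage probability conditional on the pair $(X_{n+1},\tX_{n+1})$, where the remaining randomness (training data $\{(X_i,Y_i)\}_{i\in[n]}$ and test response $Y_{n+1}$) has been marginalized out. Thus by the tower property applied to the outer expectation,
\[
\P\bigl\{Y_{n+1}\notin \hat{C}^{\textnormal{RLCP}}_n(X_{n+1},\tX_{n+1})\bigr\}
= \E\bigl[\alpha(X_{n+1},\tX_{n+1})\bigr]
= \E\Bigl[\,\E\bigl[\alpha(X_{n+1},\tX_{n+1}) \,\big|\, \tX_{n+1}\bigr]\,\Bigr].
\]

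Next, I would invoke Proposition~\ref{prop:key_property}, which gives $\E[\alpha(X_{n+1},\tX_{n+1}) \mid \tX_{n+1}] \leq \alpha$ almost surely. Taking an outer expectation over $\tX_{n+1}$ of both sides preserves the inequality, so the displayed quantity is bounded by $\alpha$. Rearranging yields $\P\{Y_{n+1}\in \hat{C}^{\textnormal{RLCP}}_n(X_{n+1},\tX_{n+1})\}\geq 1-\alpha$, which is the desired marginal coverage guarantee.

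There is really no main obstacle to overcome at this stage, since all of the substantive work---in particular, verifying that the RLCP construction induces the right exchangeability/reweighting structure so that the $\tX_{n+1}$-conditional guarantee holds---has been carried out in Proposition~\ref{prop:key_property}. The only thing to be careful about is to make sure the nested conditioning is set up consistently (i.e.\ that the outer probability really does integrate over the full joint law of training data, test response, $X_{n+1}$, and $\tX_{n+1}$), so that the application of Fubini/tower is rigorous; this is routine given that $\tX_{n+1}$ is sampled from the well-defined density $H(X_{n+1},\cdot)$ independently of the training data conditional on $X_{n+1}$.
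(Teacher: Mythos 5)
Your proof is correct and is essentially identical to the paper's own argument: both express the marginal miscoverage as $\E[\alpha(X_{n+1},\tX_{n+1})]$, apply the tower property to condition on $\tX_{n+1}$, and then invoke Proposition~\ref{prop:key_property} to bound the inner conditional expectation by $\alpha$. No substantive differences.
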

\begin{proof}[Proof of Theorem~\ref{thm:marginal_coverage}]
    Using the tower law, we calculate
    \begin{multline*}
        \P\left\{Y_{n+1}\not \in\hat{C}^{\textnormal{RLCP}}_n(X_{n+1},\tX_{n+1})\right\}
        = \E\left[\P\left\{Y_{n+1}\not\in\hat{C}^{\textnormal{RLCP}}_n(X_{n+1},\tX_{n+1}) \ \middle| \  X_{n+1}, \tX_{n+1}\right\}\right]\\
        = \E[\alpha(X_{n+1},\tX_{n+1})]
        = \E[\E[\alpha(X_{n+1},\tX_{n+1})\mid \tX_{n+1}]]
        \leq \E[\alpha]= \alpha,
    \end{multline*}
    where the inequality applies Proposition~\ref{prop:key_property}.
\end{proof}

In other words, this theorem follows immediately from Proposition~\ref{prop:key_property} by observing that the new notation offers an equivalent formulation of marginal coverage~\eqref{eqn:marginal_coverage},
\[\P\left\{Y_{n+1}\in \hat{C}_n^{\textnormal{RLCP}}(X_{n+1},\tX_{n+1})\right\}\geq 1-\alpha\iff \E\left[\alpha(X_{n+1},\tX_{n+1})\right]\leq \alpha.\]

We remark that we can similarly obtain an equivalent formulation of pointwise test-conditional coverage~\eqref{eqn:test_conditional_coverage} since the test-conditional coverage $\P\left\{Y_{n+1}\in \hat{C}_n^{\textnormal{RLCP}}(X_{n+1},\tX_{n+1}) \ \middle| \  X_{n+1} \right\}$ can be equivalently expressed as $\E\left[\alpha(X_{n+1},\tX_{n+1}) \ \middle| \  X_{n+1} \right]$. However, this last equivalence does not lend itself to an immediate application of Proposition~\ref{prop:key_property}, since we do not have a bound on $\E\left[\alpha(X_{n+1},\tX_{n+1}) \ \middle| \  X_{n+1} \right]$ (because the proposition conditions on $\tX_{n+1}$ rather than on $X_{n+1}$). Instead, we will need to take a different approach for the test-conditional coverage problem, as we will see in the next section. Before proceeding, however, we first prove the proposition, to examine how the construction of RLCP (and specifically, the manner in which $\tX_{n+1}$ is drawn) allows for this key property to hold.

\begin{proof}[Proof of Proposition~\ref{prop:key_property}]
First, we compute the distribution of the data conditional on $\tX_{n+1}$. Since the data points are i.i.d., and so the training data is independent of $\tX_{n+1}$ by construction, the training data has the same distribution as before, $(X_1,Y_1),\dots,(X_n,Y_n)\stackrel{\textnormal{iid}}{\sim}P$. For the test point, however, the distribution has changed: by construction, the joint distribution of $(X_{n+1},Y_{n+1},\tX_{n+1})$ is given by
\[\begin{cases} X_{n+1}\sim P_X, \\
Y_{n+1}\mid X_{n+1}\sim P_{Y|X}(\cdot |X_{n+1}),\\
\tX_{n+1}\mid (X_{n+1},Y_{n+1}) \sim H(X_{n+1},\cdot).\end{cases}\]
A direct calculation then yields
\begin{equation}\label{eqn:RLCP_is_covariate_shift}(X_{n+1},Y_{n+1})\mid \tX_{n+1}\sim \big(P_X \circ H(\cdot,\tX_{n+1})\big) \times P_{Y|X}.\end{equation}
We can see, then, that this is an instance of a covariate shift problem: the distribution of $X_{n+1}\mid \tX_{n+1}$ is no longer equal to $P_X$, but the conditional distribution of $Y_{n+1}$ remains the same as for the training data, i.e., $P_{Y|X}$.

Now we return to the RLCP method.
The quantile $\hat{q}_{1-\alpha}(X_{n+1},\tX_{n+1})$ is calculated using a reweighted empirical distribution of the scores. Examining the definition of these weights $\tilde{w}_i$ (defined in Section~\ref{sec:method}), and recalling the weighted conformal prediction method introduced in Section~\ref{sec:background_CP}, we can see that, conditional on $\tX_{n+1}$, the RLCP procedure is equivalent to running weighted conformal prediction (WCP) for this covariate shift problem---i.e., when the shifted covariate distribution is given by $\tP_X = P_X \circ H(\cdot,\tX_{n+1})$.
The desired coverage result then follows as a direct application of \citet{tibshirani2019conformal}'s result for weighted conformal prediction, which guarantees  coverage at level $1-\alpha$ when WCP is applied to a setting with a (known) covariate shift.
\end{proof}

\subsection{Test-conditional coverage}\label{sec:theory_test_conditional}
Next, we shift our attention to relaxed test-conditional coverage guarantees of the form outlined in Aim \ref{A2}. Our goal is to establish that, for a broad class of ``nice'' sets $B\subseteq\calX$, the RLCP prediction interval offers conditional coverage at level $\gtrapprox 1-\alpha$.
To state this result, we first need to introduce some more notation. First, we write $P_{(X,\tX)}$ to denote the distribution on $(X,\tX)\in \calX\times\calX$ defined by sampling
\[  X\sim P_X, \ 
    \tX\mid X \sim H(X,\cdot).
\]
Next, for a set $B\subseteq\calX$ that we will condition on, for any $r>0$ define
\[\textnormal{bd}_r(B) = \left\{x\in B : \inf_{x'\in B^c} \|x-x'\|  \leq  r\right\},\]
i.e., any $x\in B$ that lies within distance $r$ of the complement, $B^c$.\footnote{Implicitly, we are assuming that $\calX$ is a metric space, and that any $B\subseteq\calX$ we condition on is a measurable set. The norm $\|\cdot \|$ may be any norm on $\calX$.} We can think of $\textnormal{bd}_r(B)$ as an inflation of the boundary of the set $B$.

\begin{theorem}[Test-conditional coverage guarantee]\label{thm:test_conditional}
The RLCP method defined in~\eqref{eqn:def_RLCP} satisfies the following property: for any distribution $P$, if  $(X_1,Y_1),\dots,(X_{n+1},Y_{n+1})\stackrel{\textnormal{iid}}{\sim} P$, then 
for every $B\subseteq\calX$, it holds that
    $$
    \P\left\{Y_{n+1}\in \hat{C}_n^{\textnormal{RLCP}}(X_{n+1},\tX_{n+1})  \ \middle| \ X_{n+1}\in B\right\}\geq1- \alpha-\frac{\inf_{\epsilon>0}\left\{P_{(X,\tX)}\{\|X-\tX\|>\epsilon\}+P_X(\textnormal{bd}_{2\epsilon}(B))\right\}}{P_X(B)}.
    $$
    \end{theorem}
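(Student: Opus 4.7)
The plan is to leverage Proposition~\ref{prop:key_property}, which provides a bound conditional on $\tX_{n+1}$, and translate it into a bound conditional on $\{X_{n+1} \in B\}$ via an $\epsilon$-geometry argument. Fix $\epsilon > 0$ and define the $\epsilon$-deflated set $\tB := \{x \in \calX : \inf_{x'\in B^c}\|x-x'\| > \epsilon\} \subseteq B$. By the triangle inequality, I would first establish the containments
\begin{equation*}
\{X_{n+1} \in B,~ \tX_{n+1} \notin \tB\} \subseteq \{X_{n+1} \in \textnormal{bd}_{2\epsilon}(B)\} \cup \{\|X_{n+1}-\tX_{n+1}\| > \epsilon\},
\end{equation*}
\begin{equation*}
\{X_{n+1} \notin B,~ \tX_{n+1} \in \tB\} \subseteq \{\|X_{n+1}-\tX_{n+1}\| > \epsilon\}.
\end{equation*}
The first holds because a point of $B^c$ within distance $\epsilon$ of $\tX_{n+1}$, combined with $\|X_{n+1}-\tX_{n+1}\|\leq \epsilon$, would place $X_{n+1}$ within distance $2\epsilon$ of $B^c$; the second holds because any point of $\tB$ is at distance $>\epsilon$ from every element of $B^c$, including $X_{n+1}$.

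Next, using the indicator inequality $\ind\{X_{n+1}\in B\} \leq \ind\{\tX_{n+1}\in\tB\} + \ind\{X_{n+1}\in B,\, \tX_{n+1}\notin\tB\}$, I would split
\begin{equation*}
\P\{Y_{n+1}\notin\hat{C}_n,\, X_{n+1}\in B\} \leq \P\{Y_{n+1}\notin\hat{C}_n,\, \tX_{n+1}\in\tB\} + \P\{X_{n+1}\in B,\, \tX_{n+1}\notin\tB\},
\end{equation*}
and apply Proposition~\ref{prop:key_property} via the tower law to bound the first term on the right by $\alpha\,\P\{\tX_{n+1}\in\tB\}$, analogously to the proof of Theorem~\ref{thm:marginal_coverage}.

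The key step is then to expand $\P\{\tX_{n+1}\in\tB\} = P_X(B) - \P\{X_{n+1}\in B,\, \tX_{n+1}\notin\tB\} + \P\{X_{n+1}\notin B,\, \tX_{n+1}\in\tB\}$. Substituting and collecting terms, the overall bound becomes $\alpha P_X(B) + (1-\alpha)\P\{X_{n+1}\in B,\, \tX_{n+1}\notin\tB\} + \alpha\,\P\{X_{n+1}\notin B,\, \tX_{n+1}\in\tB\}$. Plugging in the two geometric containments, the last two terms are at most $(1-\alpha)\bigl(P_X(\textnormal{bd}_{2\epsilon}(B)) + P_{(X,\tX)}\{\|X-\tX\|>\epsilon\}\bigr) + \alpha\, P_{(X,\tX)}\{\|X-\tX\|>\epsilon\}$, which simplifies to $(1-\alpha)P_X(\textnormal{bd}_{2\epsilon}(B)) + P_{(X,\tX)}\{\|X-\tX\|>\epsilon\} \leq P_X(\textnormal{bd}_{2\epsilon}(B)) + P_{(X,\tX)}\{\|X-\tX\|>\epsilon\}$. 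Dividing through by $P_X(B)$ and taking the infimum over $\epsilon > 0$ yields the theorem.

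The main obstacle is the third step above. The naive route, bounding $\P\{\tX_{n+1}\in\tB\} \leq P_X(B) + P_{(X,\tX)}\{\|X-\tX\|>\epsilon\}$ outright, produces an undesired extra factor of $(1+\alpha)$ on the $P_{(X,\tX)}\{\|X-\tX\|>\epsilon\}$ term in the final bound. Avoiding this slack requires the exact identity for $\P\{\tX_{n+1}\in\tB\}$ given above, which exploits the fact that the ``undercount'' event $\{X_{n+1}\in B,\,\tX_{n+1}\notin\tB\}$ (weighted by $1-\alpha$) and the ``overcount'' event $\{X_{n+1}\notin B,\,\tX_{n+1}\in\tB\}$ (weighted by $\alpha$) both admit the same $\|X-\tX\|>\epsilon$ bound, causing the $\alpha$'s on the covariate-shift portion to combine into a clean factor of one.
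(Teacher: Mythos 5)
Your argument is correct, and it takes a genuinely different route from the paper. The paper does not prove Theorem~\ref{thm:test_conditional} directly: it first establishes the more general covariate-shift result (Theorem~\ref{thm:covariate_shift}) by bounding $\E_{\tP_{\tX}}\bigl[\textnormal{d}_{\textnormal{TV}}\bigl(P_X\circ H(\cdot,\tX),\,P_X\circ g\circ H(\cdot,\tX)\bigr)\bigr]$ via Radon--Nikodym and Jensen manipulations (Appendix~\ref{app:proof_thm:covariate_shift}), and then obtains Theorem~\ref{thm:test_conditional} as a corollary by plugging in $g=\ind_B$ with the strategic choice $A=(\textnormal{bd}_{2\epsilon}(B))^c$, for which $L_{g,2\epsilon,A}=0$. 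Your proof bypasses that entire machinery: it works directly from Proposition~\ref{prop:key_property}, swaps the conditioning event $\{X_{n+1}\in B\}$ for $\{\tX_{n+1}\in\tB\}$ where $\tB$ is the $\epsilon$-deflation of $B$, applies the proposition conditionally on $\tX_{n+1}$ to bound $\P\{Y_{n+1}\notin\hat{C}_n,\,\tX_{n+1}\in\tB\}\le\alpha\,\P\{\tX_{n+1}\in\tB\}$, and controls the two ``leakage'' events $\{X_{n+1}\in B,\tX_{n+1}\notin\tB\}$ and $\{X_{n+1}\notin B,\tX_{n+1}\in\tB\}$ by the triangle-inequality containments you state. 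The exact bookkeeping identity for $\P\{\tX_{n+1}\in\tB\}$ that you use is indeed what lets the $\alpha$ and $(1-\alpha)$ coefficients recombine so that $P_{(X,\tX)}\{\|X-\tX\|>\epsilon\}$ appears with coefficient one rather than $1+\alpha$; that step is the crux and you have it right. What the paper's route buys is a single unified proof also covering arbitrary Lipschitz covariate shifts (and, in the smoothed variant, a two-sided bound controlling overcoverage too, via the TV distance); what your route buys is a much more elementary, self-contained derivation of the test-conditional guarantee that never invokes Theorem~\ref{thm:covariate_shift} or total variation at all. Both are valid, and your version would make a clean alternative proof if one only cared about the set-conditional statement.
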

Let us pause to unpack this result and examine the terms appearing in the bound. We would like to determine for what sets $B$, and for what choice of the localization kernel $H$, the excess error bound $\frac{P_{(X,\tX)}\{\|X-\tX\|>\epsilon\}+P_X(\textnormal{bd}_{2\epsilon}(B))}{P_X(B)}$ is likely to be small (for some $\epsilon$).
First, in the denominator, the term $P_X(B)$ is the mass placed by the feature distribution $P_X$ on the set $B$, and hence can be thought of as an proxy for effective sample size. A low value of the denominator means that our dataset likely contains very few data points $(X_i,Y_i)$ with $X_i\in B$, and hence ensuring coverage conditional on this set is not feasible. 

Now we turn to the numerator. For the first term, $P_{(X,\tX)}\{\|X-\tX\|> \eps\}$, this term is small as long as $\eps$ is chosen to be sufficiently large relative to the bandwidth of the localization kernel $H$---for example, when $H$ is a box kernel,  this term is zero if we choose $\eps=h$. The second term in the numerator essentially measures regularity properties of the set $B$. For instance, if $P_X$ has a bounded density, then $P_X(\textnormal{bd}_{2\epsilon}(B))$ scales with the volume of the set $\textnormal{bd}_{2\epsilon}(B)$, which is a $2\eps$-inflated boundary of the set $B$ and will have relatively low volume for many classes of ``nice'' sets $B$---as we will see next for a specific example, namely, by taking $B$ to be a ball in the $\ell_2$ norm.

\subsubsection{Example: coverage conditional on a ball}\label{sec:coverage_on_balls}
To make the results of Theorem~\ref{thm:test_conditional} more concrete, let us see what these results imply for a specific setting: assuming $\calX=\RR^d$, we will take $B = \mathbb{B}(x_0,r) = \{x\in\RR^d : \|x-x_0\|_2\leq r\}$, the ball of radius $r$ centred at some point $x_0$. We will also take $H$ to be the normalized box kernel with bandwidth $h$, $H(x,x') = \frac{1}{V_d h^d}\ind\{\|x-x'\|_2\leq h\}$, with $2h\leq r$.
Applying Theorem~\ref{thm:test_conditional} to this setting (with $\eps = h$) yields the guarantee
\[\P\left\{Y_{n+1} \in \hat{C}_n^{\textnormal{RLCP}}(X_{n+1},\tX_{n+1})  \ \middle| \  X_{n+1}\in \mathbb{B}(x_0;r)\right\}\geq 1-\alpha-\frac{P_X(\mathbb{B}(x_0;r)\setminus \mathbb{B}(x_0;r-2h))}{P_X(\mathbb{B}(x_0;r))},\]
since we have $\textnormal{bd}_{2\epsilon}(B))=\mathbb{B}(x_0;r)\setminus \mathbb{B}(x_0;r-2h)$ for our choice of set $B$.

In particular, if $P_X$ is a continuous distribution, then the ratio appearing in the coverage guarantee will be small---if we assume $P_X$ has density $f_X$ with respect to Lebesgue measure, for instance, then
\begin{multline*}\frac{P_X(\mathbb{B}(x_0;r)\setminus \mathbb{B}(x_0;r-2h))}{P_X(\mathbb{B}(x_0;r))} \leq \frac{\sup_{x\in \mathbb{B}(x_0;r)}f_X(x) \cdot \textnormal{Leb}\big( \mathbb{B}(x_0;r) \backslash \mathbb{B}(x_0;r-2h)\big)}{\inf_{x\in \mathbb{B}(x_0;r)}f_X(x)\cdot \textnormal{Leb}\big( \mathbb{B}(x_0;r)\big)} \\ = \frac{\sup_{x\in \mathbb{B}(x_0;r)}f_X(x)}{\inf_{x\in \mathbb{B}(x_0;r)}f_X(x)} \cdot \left[1 - \left(1 - \frac{2h}{r}\right)^{\!d}\,\right].\end{multline*}

\paragraph{An asymptotic result.}
While the goal of our work is to provide finite-sample theoretical results, for intuition it is useful to consider how this guarantee behaves asymptotically. Suppose that the distribution $P$ is fixed (in particular, dimension $d$ is taken to be constant), and $P_X$ has density $f_X$ as above, but we take sample size $n\to\infty$. At each $n$, we will run RLCP with a box kernel of bandwidth $h_n$, and will examine coverage with respect to a ball of radius $r_n\geq 2h_n$. Rewriting the above calculation into this new asymptotic notation, we have
\[\P\left\{Y_{n+1} \in \hat{C}_n^{\textnormal{RLCP}}(X_{n+1},\tX_{n+1})  \ \middle| \  X_{n+1}\in \mathbb{B}(x_0;r_n)\right\}\geq 1-\alpha-\underbrace{\frac{\sup_{x\in \mathbb{B}(x_0;r_n)}f_X(x)}{\inf_{x\in \mathbb{B}(x_0;r_n)}f_X(x)}}_{\textnormal{Term 1}} \cdot \underbrace{\left[1 - \left(1 - \frac{2h_n}{r_n}\right)^{\!d}\,\right]}_{\textnormal{Term 2}}.\]
Term 1 has to do with the regularity of the distribution $P_X$. In particular, if $f_X(x)$ is continuous and positive at $x=x_0$, then $r_n\to 0$ implies that Term 1 converges to $1$ as $n\to\infty$. Term 2 has to do with the set $B$ that we are conditioning on---we need our localization (controlled by $h_n$) to be sufficiently strong for ensuring coverage conditional on $X_{n+1}\in B$ (and this event's probability is controlled by $r_n$). In particular, if we choose $h_n,r_n$ such that $h_n/r_n\to 0$, then Term 2 is vanishing as $n\to \infty$. To summarize, we have seen that in this setting, the following asymptotic result applies:
\begin{corollary}\label{cor:coverage_on_balls} Under the notation and assumptions defined above, if $r_n\to 0$ and $h_n/r_n\to 0$, then
\[\P\left\{Y_{n+1} \in \hat{C}_n^{\textnormal{RLCP}}(X_{n+1},\tX_{n+1})  \ \middle| \  X_{n+1}\in \mathbb{B}(x_0;r_n)\right\} \geq 1-\alpha - \mathrm{o}(1).\]
\end{corollary}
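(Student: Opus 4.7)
The plan is to derive the corollary as a direct consequence of Theorem~\ref{thm:test_conditional} applied to the specific choice $B = \mathbb{B}(x_0; r_n)$ and $\eps = h_n$, together with the calculation already spelled out in Section~\ref{sec:coverage_on_balls}. The only thing left to verify is that each of the two factors in the displayed upper bound vanishes in the stated asymptotic regime.

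First I would observe that with the normalized box kernel $H(x,x') = \frac{1}{V_d h_n^d}\ind\{\|x-x'\|_2\leq h_n\}$, the synthetic prototype $\tX$ drawn from $H(X,\cdot)$ satisfies $\|X-\tX\|_2 \leq h_n$ almost surely. Hence choosing $\eps = h_n$ in Theorem~\ref{thm:test_conditional} gives $P_{(X,\tX)}\{\|X-\tX\|_2 > \eps\} = 0$. Moreover, for this choice of $B$ and $\eps$, $\textnormal{bd}_{2\eps}(B) = \mathbb{B}(x_0;r_n) \setminus \mathbb{B}(x_0;r_n-2h_n)$ (this uses the hypothesis $r_n \geq 2h_n$ so the inner ball is well defined). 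The coverage lower bound from Theorem~\ref{thm:test_conditional} therefore reduces to exactly the bound already displayed in Section~\ref{sec:coverage_on_balls}, namely
\[
1 - \alpha - \frac{\sup_{x\in\mathbb{B}(x_0;r_n)} f_X(x)}{\inf_{x\in\mathbb{B}(x_0;r_n)} f_X(x)}\cdot\left[1 - \left(1 - \tfrac{2h_n}{r_n}\right)^{\!d}\right].
\]

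Next I would argue that each factor in the subtracted term is $\mathrm{o}(1)$. Since $f_X$ is continuous and strictly positive at $x_0$, and $r_n \to 0$, the sup and inf of $f_X$ over $\mathbb{B}(x_0;r_n)$ both converge to $f_X(x_0) > 0$, so Term 1 converges to $1$. For Term 2, the hypothesis $h_n/r_n \to 0$ together with the fact that $d$ is a fixed constant gives $\left(1-2h_n/r_n\right)^d \to 1$, so Term 2 tends to $0$. Multiplying a bounded factor by a vanishing factor yields $\mathrm{o}(1)$, which completes the proof.

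There is essentially no obstacle here; the work was all done in proving Theorem~\ref{thm:test_conditional} and in the finite-sample calculation of Section~\ref{sec:coverage_on_balls}. The only minor care needed is ensuring $f_X(x_0) > 0$ so that the denominator $\inf_{x\in\mathbb{B}(x_0;r_n)} f_X(x)$ is eventually bounded away from $0$ (given by continuity and positivity at $x_0$), and that $d$ stays fixed so that $(1-2h_n/r_n)^d \to 1$ rather than contributing an implicit dimension-dependent rate.
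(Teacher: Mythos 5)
Your proof is correct and takes essentially the same route as the paper: apply Theorem~\ref{thm:test_conditional} with $B=\mathbb{B}(x_0;r_n)$ and $\eps=h_n$ (so the kernel term vanishes exactly for the box kernel), pass to the density ratio bound, and then send Term 1 $\to 1$ by continuity and positivity of $f_X$ at $x_0$ together with $r_n\to0$, and Term 2 $\to 0$ by $h_n/r_n\to0$ with $d$ fixed. The minor caveats you flag (needing $2h_n\le r_n$ eventually, $f_X(x_0)>0$, and $d$ fixed) are exactly the conditions the paper's discussion relies on.
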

We remark that under some additional smoothness assumptions, we can also derive asymptotic test-conditional coverage of the form \eqref{eqn:test_conditional_coverage} under this setting---that is, pointwise coverage, which conditions on $X_{n+1}=x_0$, rather than $X_{n+1}\approx x_0$ as in the result above. We defer the theorem and its proof to the Appendix \ref{app:asymptotic_test_conditional}.

\subsubsection{The role of the bandwidth}\label{sec:role_of_bandwidth}

The coverage guarantee in Theorem~\ref{thm:test_conditional} is valid irrespective of how we choose the kernel $H$---in particular, we are free to choose any bandwidth $h$. However, as mentioned earlier in Section~\ref{sec:baseLCP}, we need to choose the bandwidth carefully in order for the guarantee, and the prediction interval itself, to be meaningful. If $h$ is too large, then the method is not very localized---and in particular, the numerator in the excess error term appearing in Theorem~\ref{thm:test_conditional} might then be arbitrarily large (since we would need to choose a very large value of $\eps$ in order for $P_{(X,\tX)}\{\|X-\tX\|>\eps\}$ to be small).
If instead we choose $h$ to be too small, on the other hand, the excess error term in Theorem~\ref{thm:test_conditional} will generally be very small, but this comes at the cost of an overly wide (perhaps even infinitely wide) prediction interval. 

To make this more concrete, suppose $H$ is the box kernel of bandwidth $h$. When constructing the RLCP prediction set, we need a large effective sample size---that is, a large number of training points $i\in[n]$ such that $\|X_i - \tX_{n+1}\|\leq h$. At the extreme, if \emph{none} of the training points $i\in[n]$ satisfy this bound, then the quantile $\hat{q}_{1-\alpha}(X_{n+1},\tX_{n+1})$ will be $+\infty$.

Now consider an asymptotic regime (as discussed above, for coverage conditional on a ball)---suppose that at sample size $n$, we take $H$ to be a box kernel with bandwidth $h_n$, where our features lie in $\calX = \RR^d$. Under regularity conditions on $P_X$, the mass of the ball $\mathbb{B}(\tX_{n+1},h_n)$ will typically scale as $n(h_n)^d$---i.e., this is the effective sample size after localization. Hence, in an asymptotic setting, to ensure that RLCP produces meaningful intervals we should choose the bandwidth $h_n$ such that $h_n\to 0$ but $h_n \cdot n^{1/d}\to\infty$.

\subsection{Coverage against arbitrary covariate shift}\label{sec:theory_covariate_shift}
As have been stated before, test-conditional coverage (or approximations of test-conditional coverage, as in Theorem~\ref{thm:test_conditional}) can also be viewed as a special case of coverage against covariate shift settings. To consider a general setting, recall that, for any reweighting function $g:\calX\rightarrow\RR_{\geq 0}$, with $\E_{P_X}[g(X)]$ positive and finite, the reweighted covariate distribution $P_X\circ g$ is defined as in~\eqref{eqn:define_PX_circ_g}. We will consider a setting where training data is drawn from $P=P_X\times P_{Y|X}$, while the test point distribution (i.e. the target distribution for which we would like to have predictive coverage) is $(P_X\circ g)\times P_{Y|X}$. For example, in a setting where each data point represents a patient and the feature space $\calX$ includes information about age, $g$ might be a monotone increasing function of age, indicating that older patients comprise a \emph{larger} fraction of the general population as compared to the training sample; equivalently, older patients are \emph{underrepresented} in the training sample.

Our goal, as described earlier in Aim~\ref{A3}, is to provide a predictive interval that offers an approximate coverage guarantee with respect to a large class of ``nice'' reweighting functions $g$. Leveraging our new notation, we would like to prove a bound of the form
\[\E\left[\alpha(X_{n+1},\tX_{n+1})\right]\lessapprox \alpha,\]
or equivalently, 
\[\P\{Y_{n+1}\in \hat{C}^{\textnormal{RLCP}}_n(X_{n+1},\tX_{n+1})\}\gtrapprox 1-\alpha,\]
where expectation and probability are taken with respect to the following data distribution: the training points $(X_i,Y_i)\stackrel{\textnormal{iid}}{\sim} P$, the test point $(X_{n+1},Y_{n+1})\sim (P_X\circ g)\times P_{Y|X}$, and then by construction of the RLCP method, also $\tX_{n+1}\mid X_{n+1}\sim H(X_{n+1},\cdot)$.

In this section, we will work towards a general result of this type for a broad class of reweighting functions $g: \calX\rightarrow\RR_{\geq 0}$. We will then see that the test-conditional coverage result in Theorem~\ref{thm:test_conditional} can be derived as a special case. 

To present the result, we need to define a parameter that captures the variability of the reweighting function $g$: for any $r>0$ and any $A\subseteq\calX$, define
\[L_{g,r,A} = \sup_{\substack{x,x'\in A\\ \|x-x'\|\leq r} }\frac{|g(x)-g(x')|}{r}.\]
We can essentially interpret $L_{g,r,A}$ as a relaxed version of the Lipschitz constant of $g$ on the set $A$. In particular, if $g$ is $L_g$-Lipschitz, then $L_{g,r,A}\leq L_g$ for all $r$ and all $A$. We will also write $\|g\|_\infty = \sup_{x\in\calX} |g(x)|=\sup_{x\in\calX} g(x).$

We are now ready to state our main result.
\begin{theorem}[Robustness to covariate shift]\label{thm:covariate_shift}
The RLCP method defined in~\eqref{eqn:def_RLCP} satisfies the following property: for any distribution $P$, 
and every $g:\calX\rightarrow\RR_{\geq 0}$ with $P_X\circ g$ well defined (i.e., with $0<\E_{P_X}[g(X)]<\infty$), it holds that
    \begin{multline*}
    \P\left\{Y_{n+1} \in \hat{C}_n^{\textnormal{RLCP}}(X_{n+1},\tX_{n+1}) \right\}\geq 1-\alpha-{}\\ \frac{\inf_{A\subseteq\calX, \epsilon>0}\left\{\epsilon \cdot L_{g,2\epsilon,A}  + \|g\|_\infty\cdot P_{(X,\tX)}\{\|X-\tX\|>\epsilon\}+ \|g\|_\infty \cdot P_X(A^c) \right\}}{\E_{P_X}[g(X)]},
    \end{multline*}
where the probability on the left-hand side is taken with respect to $(X_1,Y_1),\dots,(X_n,Y_n)\stackrel{\textnormal{iid}}{\sim} P$ and $(X_{n+1},Y_{n+1})\sim (P_X\circ g)\times P_{Y|X}$, and $\tX_{n+1}\mid X_{n+1}\sim H(X_{n+1},\cdot)$ as defined by the RLCP method~\eqref{eqn:def_RLCP}.
\end{theorem}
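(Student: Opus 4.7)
The plan is three stages: (i) a density-ratio computation that turns the shifted miscoverage into an expectation under the original distribution; (ii) a decomposition that uses Proposition~\ref{prop:key_property} to extract the main term $\alpha\,\E_{P_X}[g]$; and (iii) a local-regularity bound on the residual $\E[|g(X_{n+1})-g(\tX_{n+1})|]$.

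For stage (i), under the original joint $(X_{n+1},\tX_{n+1})$ has density $H(x,\tx)$ with respect to $P_X\times\nu$, while under the shifted law this density is $\frac{g(x)}{\E_{P_X}[g]}H(x,\tx)$. Since neither the training data nor $P_{Y\mid X}$ changes under the shift, the conditional miscoverage $\alpha(x,\tx)$ is the same function of $(x,\tx)$ in both settings, so
\[
\P_{\textnormal{shift}}\{Y_{n+1}\notin \hat{C}_n^{\textnormal{RLCP}}\} \;=\; \frac{1}{\E_{P_X}[g]}\,\E_{\textnormal{orig}}\!\bigl[g(X_{n+1})\,\alpha(X_{n+1},\tX_{n+1})\bigr].
\]

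For stage (ii), since $\alpha\in[0,1]$ we have $(g(X)-g(\tX))\alpha\le(g(X)-g(\tX))_+$, and combining with the tower rule and Proposition~\ref{prop:key_property} yields $\E[g(X)\alpha]\le\alpha\,\E[g(\tX)]+\E[(g(X)-g(\tX))_+]$. Similarly $\E[g(\tX)]-\E[g(X)]\le\E[(g(\tX)-g(X))_+]$, and because $\alpha\le 1$ the identity $a_+ + (-a)_+ = |a|$ collapses these to
\[
\E_{\textnormal{orig}}[g(X)\alpha]\;\le\;\alpha\,\E_{P_X}[g] \;+\; \E_{\textnormal{orig}}\!\bigl[|g(X)-g(\tX)|\bigr].
\]

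For stage (iii), for any $A,\eps$ I split $\E[|g(X)-g(\tX)|]$ over the events $\{\|X-\tX\|>\eps\}$, $\{X\notin A,\ \|X-\tX\|\le\eps\}$, and the good event $\{X\in A,\ \|X-\tX\|\le\eps\}$. The two tail events each contribute at most $\|g\|_\infty$ times the relevant probability, producing the $\|g\|_\infty\,P_{(X,\tX)}\{\|X-\tX\|>\eps\}$ and $\|g\|_\infty\,P_X(A^c)$ terms. On the good event I route $g(X)-g(\tX)$ through the nearest point $x^*(\tX)\in A$ to $\tX$; the witness $X\in A$ forces $\|x^*-\tX\|\le\|X-\tX\|\le\eps$ and hence $\|X-x^*\|\le 2\eps$ with $X,x^*\in A$, which invokes the local-Lipschitz estimate and contributes an $\eps\, L_{g,2\eps,A}$ term. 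Taking the infimum over $A$ and $\eps$ yields the stated bound.

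The main obstacle is stage (iii), particularly when $\tX\notin A$: the quantity $L_{g,r,A}$ only controls pairs both lying in $A$, so after routing through $x^*(\tX)$ the residual $|g(x^*)-g(\tX)|$ is not directly controlled by $L_{g,2\eps,A}$. Checking that the doubled scale $2\eps$ and the routing through the nearest point in $A$ together allow this residual to be absorbed into the existing $P_X(A^c)$ and tail terms---without producing an extra $P(\tX\in A^c)$ penalty---is the main technical content beyond the change-of-measure and key-property steps.
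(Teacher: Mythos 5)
Your stage (i) change of measure is correct and is a clean alternative framing to the paper's TV-distance argument, and your stage (ii) use of the key property conditional on $\tX_{n+1}$ is the right lever. However, the obstacle you flag in stage (iii) is not a technical nuisance to be routed around: it reflects a genuine loss already made in stage (ii), where comparing $g(X)$ to $g(\tX)$ (rather than to a conditionally-independent copy $g(X')$, or to $\E[g(X)\mid\tX]$) leaves a residual $\E\left|g(X)-g(\tX)\right|$ that simply cannot be bounded by the theorem's excess-error term. For a concrete failure, take $P_X$ supported on $[0,1]$, $g=\ind\{\cdot\in[0,1]\}$, and $H$ a box kernel of bandwidth $h<1/2$. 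Then $P_X\circ g=P_X$ (no actual shift), and the theorem's infimum is zero: choose $A=[0,1]$ and $\epsilon=h$, so $L_{g,2\epsilon,A}=0$, $P_X(A^c)=0$, and $P_{(X,\tX)}\{\|X-\tX\|>\epsilon\}=0$. But $g(X)\equiv1$ while $\tX$ escapes $[0,1]$ with positive probability whenever $X$ is near the boundary, so $\E\left|g(X)-g(\tX)\right|>0$. No routing argument in stage (iii) can close this gap, because $\tX$ has marginal density $\E_{P_X}[H(X,\cdot)]$, not $P_X$, and so its probability of escaping $A$ is not controlled by $P_X(A^c)$. Separately, even when $\tX$ does land in $A$, the best your decomposition yields is $2\epsilon\,L_{g,2\epsilon,A}$ rather than $\epsilon\,L_{g,2\epsilon,A}$, because your residual lacks the factor $\tfrac12$ that the paper's TV distance supplies.

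The paper avoids both problems by never comparing $g(X)$ to $g(\tX)$: it bounds $\E_{\tP_{\tX}}\big[\textnormal{d}_{\textnormal{TV}}\big(P_X\circ H(\cdot,\tX),\,P_X\circ g\circ H(\cdot,\tX)\big)\big]$, which after Jensen's inequality reduces to $\tfrac12\E\left|g(X)-g(X')\right|$ for $X,X'$ conditionally i.i.d.\ from $P_X\circ H(\cdot,\tX)$ given $\tX$. Since both $X$ and $X'$ have marginal $P_X$, the $P_X(A^c)$ term arises symmetrically, and the triangle inequality linking $\|X-X'\|\le 2\epsilon$ to $\{\|X-\tX\|\le\epsilon\}\cap\{\|X'-\tX\|\le\epsilon\}$ together with the $\tfrac12$ prefactor produces exactly the stated bound. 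Your change-of-measure framework is salvageable along the same lines: decompose $\E[g(X)\alpha]=\E[m(\tX)\alpha]+\E[(g(X)-m(\tX))\alpha]$ with $m(\tX)=\E[g(X)\mid\tX]$, bound the first term by $\alpha\,\E_{P_X}[g]$ via the key property, then introduce $X'\sim P_{X\mid\tX}$ conditionally independent of $X$, note $\E[(g(X)-m(\tX))\alpha(X,\tX)]=\E[(g(X)-g(X'))\alpha(X,\tX)]$, symmetrize to obtain $\tfrac12\E[(g(X)-g(X'))(\alpha(X,\tX)-\alpha(X',\tX))]\le\tfrac12\E\left|g(X)-g(X')\right|$, and run your stage (iii) decomposition on this symmetric quantity. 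As written, however, the proposal has a gap.
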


In general, computing the infimum in this upper bound, over all $A\subseteq \calX$ and $\epsilon>0$, might be highly nontrivial. However, in some special cases, we can choose $A$ strategically to get an interpretable upper bound on the miscoverage. These special cases include:
\begin{itemize}
    \item \textbf{Lipschitz functions:} if $g$ is $L_g$-Lipschitz, then by choosing $A=\calX$, the result of the theorem yields
\begin{equation}\label{eqn:bound_for_Lipschitz_g}
   \P\left\{Y_{n+1}\in \hat{C}_n^{\textnormal{RLCP}}(X_{n+1},\tX_{n+1}) \right\}\geq 1-\alpha-  \frac{\inf_{ \epsilon>0}\left\{\epsilon \cdot L_g  + \|g\|_\infty \cdot P_{(X,\tX)}\{\|X-\tX\|>\epsilon\}\right\}}{\E_{P_X}[g(X)]}.
\end{equation}
\item \textbf{Indicator functions:} suppose $g(x) = \ind\{x\in B\}$ is the indicator function for some set $B\subseteq\calX$. In this case, we can see that requiring coverage with respect to covariate shift (with $P_X\circ g$ as the shifted distribution) is exactly equivalent to requiring coverage conditional on $X_{n+1}\in B$. Consequently, by considering indicator functions, this general result leads to an immediate proof of Theorem~\ref{thm:test_conditional}. \begin{proof}[Proof of Theorem~\ref{thm:test_conditional}]
For any $B\subseteq\calX$, define $g(x) = \ind\{x\in B\}$. Define the set $A = (\textnormal{bd}_{2\epsilon}(B))^c\subseteq\calX$. Observe that for any $x,x'\in A$, if $\|x-x'\|\leq2\eps$, we cannot have $x\in B$ and $x'\in B^c$ (because this would imply $x\in \textnormal{bd}_{2\epsilon}(B)$, which is a contradiction); we must either have $x,x'\in B$ or $x,x'\in B^c$, and in particular this means $g(x) = g(x')$. Thus, we have $L_{g,2\epsilon,A} = 0$, and so Theorem~\ref{thm:covariate_shift} yields
\begin{multline*}\P\left\{Y_{n+1}\in \hat{C}_n^{\textnormal{RLCP}}(X_{n+1},\tX_{n+1})  \ \middle| \  X_{n+1}\in B\right\}\geq 1-\alpha\\- \frac{\inf_{\epsilon>0}\left\{\|g\|_\infty \cdot P_{(X,\tX)}\{\|X-\tX\|>\epsilon\}+ \|g\|_\infty \cdot P_X(\textnormal{bd}_{2\epsilon}(B)) \right\}}{\E_{P_X}[g(X)]}.\end{multline*}
Finally, by definition of $g$, we calculate $\|g\|_\infty=1$ and $\E_{P_X}[g(X)] = P_X(B)$, which yields the desired bound.
\end{proof}

\end{itemize}

More generally, we can interpret the various terms in Theorem~\ref{thm:covariate_shift}'s excess error bound as follows. To make the first two terms in the numerator small, we need to ensure that the localization kernel $H$ is indeed ``local'': there needs to be some small value $\epsilon>0$ for which $\P\{\|X-\tX\|>\epsilon\}$ is low. For the first and third term in the numerator to both be small, we need to choose a function $g$ that, while perhaps not globally Lipschitz, is nonetheless (approximately) Lipschitz over the bulk of the distribution of features $X$. Finally, the denominator simply serves to make the result scale-invariant: if we consider a reweighting function $g'$ defined as $g'(x) = c\cdot g(x)$ for some positive constant $c$, the two shifted distributions $P_X\circ g$ and $P_X\circ g'$ are equal, so our upper bound needs to be invariant to this type of rescaling in order to be meaningful.

\subsubsection{A closer look at the covariate shift guarantee: proof sketch of Theorem~\ref{thm:covariate_shift}}\label{sec:insights_to_results}
To develop our intuition for the construction of the RLCP method, and how the construction of $\tX$ plays a role, here we will take a closer look at some of the key ideas underlying the proof of Theorem~\ref{thm:covariate_shift}.

In the covariate shift setting, by definition, the joint distribution of the data along with the random point $\tX_{n+1}$ is given by
\[\begin{cases} 
\{(X_i,Y_i)\}_{i\in[n]}\stackrel{\textnormal{iid}}{\sim}P,\\
(X_{n+1},Y_{n+1})\sim (P_X\circ g)\times P_{Y|X},\\
\tX_{n+1}\mid X_{n+1}\sim H(X_{n+1},\cdot).\end{cases}\]
Equivalently, we can write this joint distribution as
\begin{equation}\label{eqn:explain_marginal_coverage_wrt_now_with_g}\begin{cases} 
\{(X_i,Y_i)\}_{i\in[n]}\stackrel{\textnormal{iid}}{\sim}P,\\
\tX_{n+1}\sim \tP_{\tX},\\
(X_{n+1},Y_{n+1})\mid \tX_{n+1}\sim (P_X\circ g\circ H(\cdot, \tX_{n+1}))\times P_{Y|X},\end{cases}\end{equation}
where  $\tP_{\tX}$ is the marginal distribution of $\tX_{n+1}$ under the covariate shift model: this distribution can be characterized by the marginal distribution of $\tX$ when we first draw $X\sim P_X\circ g$, and then draw $\tX\sim H(X,\cdot)$. In order to establish approximate coverage under covariate shift, then, we need to show that 
\[\P\left\{Y_{n+1}\in\hat{C}^{\textnormal{RLCP}}_n(X_{n+1},\tX_{n+1})\right\}\gtrapprox 1-\alpha\textnormal{ with respect to the joint distribution~\eqref{eqn:explain_marginal_coverage_wrt_now_with_g}}.\]

Now we will consider an approximation to this joint distribution of the data, which no longer depends on the unknown reweighting function $g$. Specifically, compare to the following joint distribution:
\begin{equation}\label{eqn:explain_marginal_coverage_wrt}\begin{cases} 
\{(X_i,Y_i)\}_{i\in[n]}\stackrel{\textnormal{iid}}{\sim}P,\\
\tX_{n+1}\sim \tP_{\tX},\\
(X_{n+1},Y_{n+1})\mid \tX_{n+1}\sim (P_X\circ H(\cdot, \tX_{n+1}))\times P_{Y|X}.\end{cases}\end{equation}
We observe the only difference between this distribution, and the joint distribution that appears in~\eqref{eqn:explain_marginal_coverage_wrt_now_with_g}, lies in the conditional distribution of $(X_{n+1},Y_{n+1})\mid \tX_{n+1}$.
Will marginal coverage hold with respect to this new distribution? In fact, in this new joint distribution~\eqref{eqn:explain_marginal_coverage_wrt}, the conditional distribution of the training data and test point $\{(X_i,Y_i)\}_{i\in[n+1]}$ conditional on $\tX_{n+1}$, is exactly the same as that calculated  in the proof of Proposition~\ref{prop:key_property} for the i.i.d.\ data setting (without covariate shift). If we examine the result of Proposition~\ref{prop:key_property}, we see that it establishes 
that RLCP has $\geq1-\alpha$  coverage conditional on $\tX_{n+1}$, 
\[\P\left\{Y_{n+1}\in\hat{C}^{\textnormal{RLCP}}_n(X_{n+1},\tX_{n+1}) \ \middle| \  \tX_{n+1}\right\}\geq 1-\alpha,\]
with respect to drawing $\{(X_i,Y_i)\}_{i\in[n]}\stackrel{\textnormal{iid}}{\sim}P$ and $(X_{n+1},Y_{n+1})\mid \tX_{n+1}\sim (P_X\circ H(\cdot, \tX_{n+1}))\times P_{Y|X}$,  as in~\eqref{eqn:explain_marginal_coverage_wrt}.
In particular, averaging over a random draw $\tX_{n+1}\sim \tP_{\tX}$ it holds that
\[\E_{\tX_{n+1}\sim\tP_{\tX}}\left[\P\left\{Y_{n+1}\in\hat{C}^{\textnormal{RLCP}}_n(X_{n+1},\tX_{n+1}) \ \middle| \  \tX_{n+1}\right\} \right] \geq 1-\alpha\]
or equivalently,
\[\P\left\{Y_{n+1}\in\hat{C}^{\textnormal{RLCP}}_n(X_{n+1},\tX_{n+1})\right\}\geq 1-\alpha\textnormal{ with respect to the joint distribution~\eqref{eqn:explain_marginal_coverage_wrt}}.\]

Therefore, we can see that, for the covariate shift setting (as in~\eqref{eqn:explain_marginal_coverage_wrt_now_with_g}), we have
the following lower bound on coverage:
\[\P\left\{Y_{n+1}\in\hat{C}^{\textnormal{RLCP}}_n(X_{n+1},\tX_{n+1})\right\}
\geq 1 - \alpha - \textnormal{d}_{\textnormal{TV}}\left(\begin{tabular}{c}\textnormal{the joint distribution}\\
\textnormal{of the data}\\
\textnormal{appearing in~\eqref{eqn:explain_marginal_coverage_wrt_now_with_g}}\end{tabular} , \begin{tabular}{c}\textnormal{the joint distribution}\\
\textnormal{of the data}\\
\textnormal{appearing in~\eqref{eqn:explain_marginal_coverage_wrt}}\end{tabular} \right).\]
Examining the difference between the two joint distributions derived in~\eqref{eqn:explain_marginal_coverage_wrt_now_with_g} and in~\eqref{eqn:explain_marginal_coverage_wrt}, we can simplify this result to
\[\P\left\{Y_{n+1}\in\hat{C}^{\textnormal{RLCP}}_n(X_{n+1},\tX_{n+1})\right\}
\geq 1 - \alpha - \E_{\tP_{\tX}}\left[\textnormal{d}_{\textnormal{TV}}\left(P_X\circ H(\cdot,\tX), P_X\circ g\circ H(\cdot, \tX)\right)\right].\]
From this point on, then, our only task is to bound this total variation distance. (In fact, we will explain in Appendix~\ref{app:smoothed_RLCP} that this total variation distance can also be used to bound the extent to which RLCP can \emph{overcover}, i.e., to ensure that the method is not too conservative.)

Now we can ask, for what functions $g$ might we expect this total variation distance to be small? If $H$ is a localizing kernel with a fairly small bandwidth, then the reweighted distribution $P_X\circ H(\cdot,\tX)$ is likely to place nearly all of its mass in a small neighbourhood around $\tX$. Consequently, if $g$ is Lipschitz (or some similar condition), then $g$ will be approximately constant on the bulk of the distribution $P_X\circ H(\cdot,\tX)$, and thus reweighting additionally by $g$ (i.e., computing the new distribution $P_X\circ g\circ H(\cdot,\tX)$) will only minimally change this distribution. The localization is very much necessary to this argument: if we ran conformal prediction without localization (i.e., CP rather than one of the LCP methods), following the same arguments would lead us to a term $\textnormal{d}_{\textnormal{TV}}(P_X,P_X\circ g)$, which is likely to be large even if $g$ is Lipschitz (aside from the trivial case where $g$ is a constant function, i.e., $P_X\circ g = P_X$).

To build a more precise understanding of this intuition, later on in Appendix~\ref{app:proof_thm:covariate_shift}, we will establish the bound
\begin{equation}\label{eqn:preview_TV_bound}\E_{\tP_{\tX}}\left[\textnormal{d}_{\textnormal{TV}}\left(P_X\circ H(\cdot,\tX), P_X\circ g\circ H(\cdot, \tX)\right)\right] \leq \frac{\E_{P_{\tX}}\left[ \sqrt{\frac{1}{2}\var_{P_{X|\tX}}(g(X)\mid \tX)}\, \right]}{\E_{P_X}[g(X)]}.\end{equation}
We can see that the conditional variance in the numerator is likely to be small when $H$ is strongly localized (i.e., when the conditional distribution of $X\mid \tX$ is concentrated in a small neighbourhood), while $g$ is Lipschitz, so that $g(X)$ has low variability conditional on $\tX$.
The remaining details, including proving the bound~\eqref{eqn:preview_TV_bound} and also establishing the upper bound in the theorem, are deferred to Appendix~\ref{app:proof_thm:covariate_shift}.

\section{Experiments}
\label{sec:experiment}
In our experiments, we will examine the empirical performance of our method, RLCP, and will compare its performance to the two locally weighted CP methods described in Section \ref{sec:review}---baseLCP and calLCP \citep{guan2023localized}. 
Code for reproducing all experiments can be found at \url{https://github.com/rohanhore/RLCP}.

Before we describe the experiments, we pause to comment on a technicality. For calLCP and RLCP, the marginal coverage guarantees ensure that coverage will be \emph{at least} $1-\alpha$; in some settings, the methods may be conservative, with coverage higher than the target level. (Of course, as discussed earlier, baseLCP does not give a marginal coverage guarantee.) The possibility of marginal overcoverage means that comparing the methods meaningfully is challenging: if one method achieves better local coverage, perhaps this might be because this method is more conservative in terms of its marginal coverage. With this in mind, in order to allow for a more meaningful comparison, in our experiments we implement \emph{smoothed} versions of each of the three methods, which (for calLCP and RLCP) ensures that marginal coverage is exactly $1-\alpha$, and leads to a more fair comparison. 
Details on the implementation of the smoothed versions of each of the methods can be found in Appendix \ref{app:smoothing}.

\subsection{Simulations}
For all our simulation experiments, our aim will be to achieve $90\%$ coverage, i.e., we choose $\alpha=0.1$. The feature space is given by $\calX = \RR^d$, and all experiments are run with localization kernel $H$ given by the Gaussian kernel, $H(x,x')=\frac{1}{(2\pi h^2)^{d/2}}\exp\left(-\frac{\|x-x'\|^2_2)}{2h^2}\right)$. \subsubsection{Univariate setting}\label{sec:simulation_univariate}
We begin with experiments in a univariate setting, i.e., $d=1$.
We consider the following two data generating distributions:
\begin{itemize}\label{simulation settings}
    \item \textbf{Setting $1$:} $X\sim \calN (0,1),~ Y|X\sim \calN \bigl(\frac{X}{2},|\sin (X)|\bigr)$
    \item \textbf{Setting $2$:} $X\sim \calN (0,1),~ Y|X\sim \calN \bigl(\frac{X}{2},\frac{4}{3}\phi\left(\frac{2X}{3}\right)\bigr)$,
    where $\phi(\cdot)$ is the density of a standard Gaussian.
\end{itemize}
 The two settings share the same feature distribution, $P_X = \calN(0,1)$. The difference lies in the conditional distribution of the response, $P_{Y|X}$. In Setting 2, the response has more variance for values of $X$ near the center of the distribution. In contrast, for Setting 1, the response has more variance for values of $X$ lying away from the mean. These different trends can be seen in Figure~\ref{fig:simulation_univariate_2}.

For both settings, our score function is given by $s(x,y) = |y-\hat{f}(x)|$, where $\hat{f}$ is a pretrained predictive model (trained using linear regression, using a separate dataset of $2000$ data points).
For each setting, we implement the methods using the localization kernel $H$ at five different bandwidths, $h\in \{0.1, 0.2, 0.4, 0.8, 1.6\}$; the lowest value represents a highly local $H$, while the highest value leads to a kernel $H$ that is essentially flat over the bulk of the feature distribution.
The methods are run with sample size $n=2000$, and evaluated on $2000$ test points. The entire experiment is repeated for $50$ independent trials.

\paragraph{Results: marginal coverage.}
Figure \ref{fig:simulation_univariate_1} displays the marginal coverage obtained by each of the three methods on the $2000$ test points, plotted over the $50$ independent trials of the experiment.
As expected from the theoretical guarantees, both RLCP and calLCP  achieve the target coverage level $1-\alpha = 90\%$. On the other hand, depending on the bandwidth choice, baseLCP can have marginal coverage smaller or larger than $1-\alpha$.
\begin{figure}[!h]
\centering
  \includegraphics[width =0.95\textwidth]{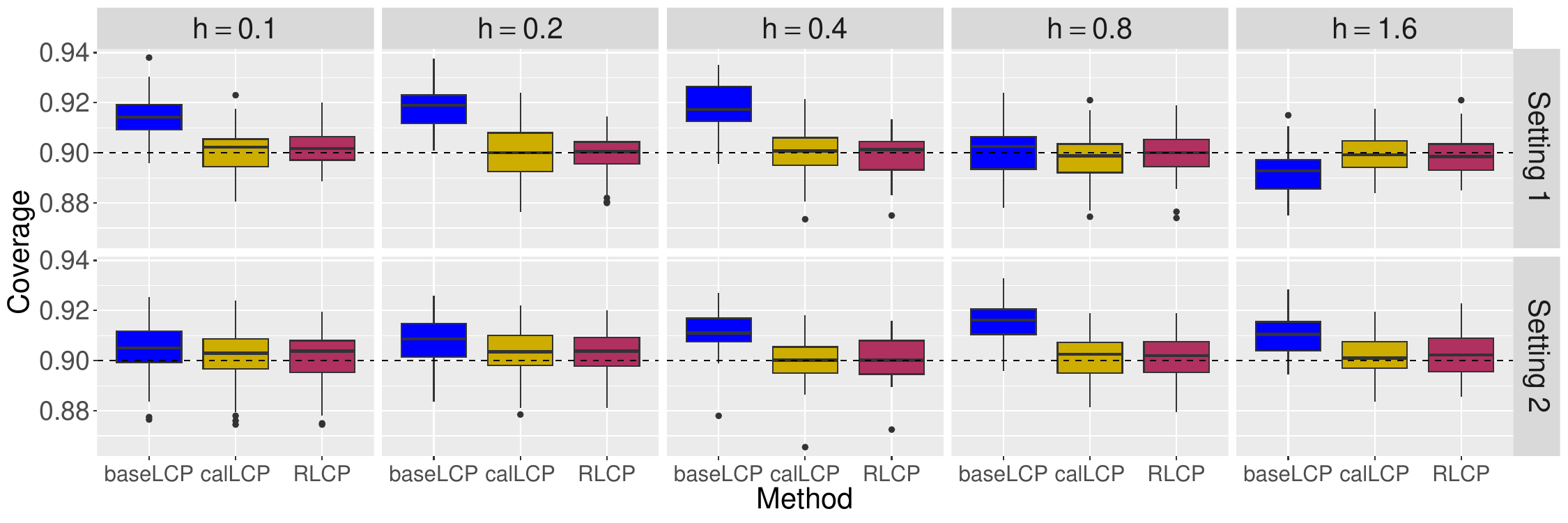}
  \caption{Marginal coverage of each method in  two univariate simulation settings, across different bandwidths. Results are shown for $50$ independent trials. See Section~\ref{sec:simulation_univariate} for details.}
   \label{fig:simulation_univariate_1}
\end{figure} 

\begin{figure}[!h]
\centering
    \includegraphics[width = 0.85\textwidth]{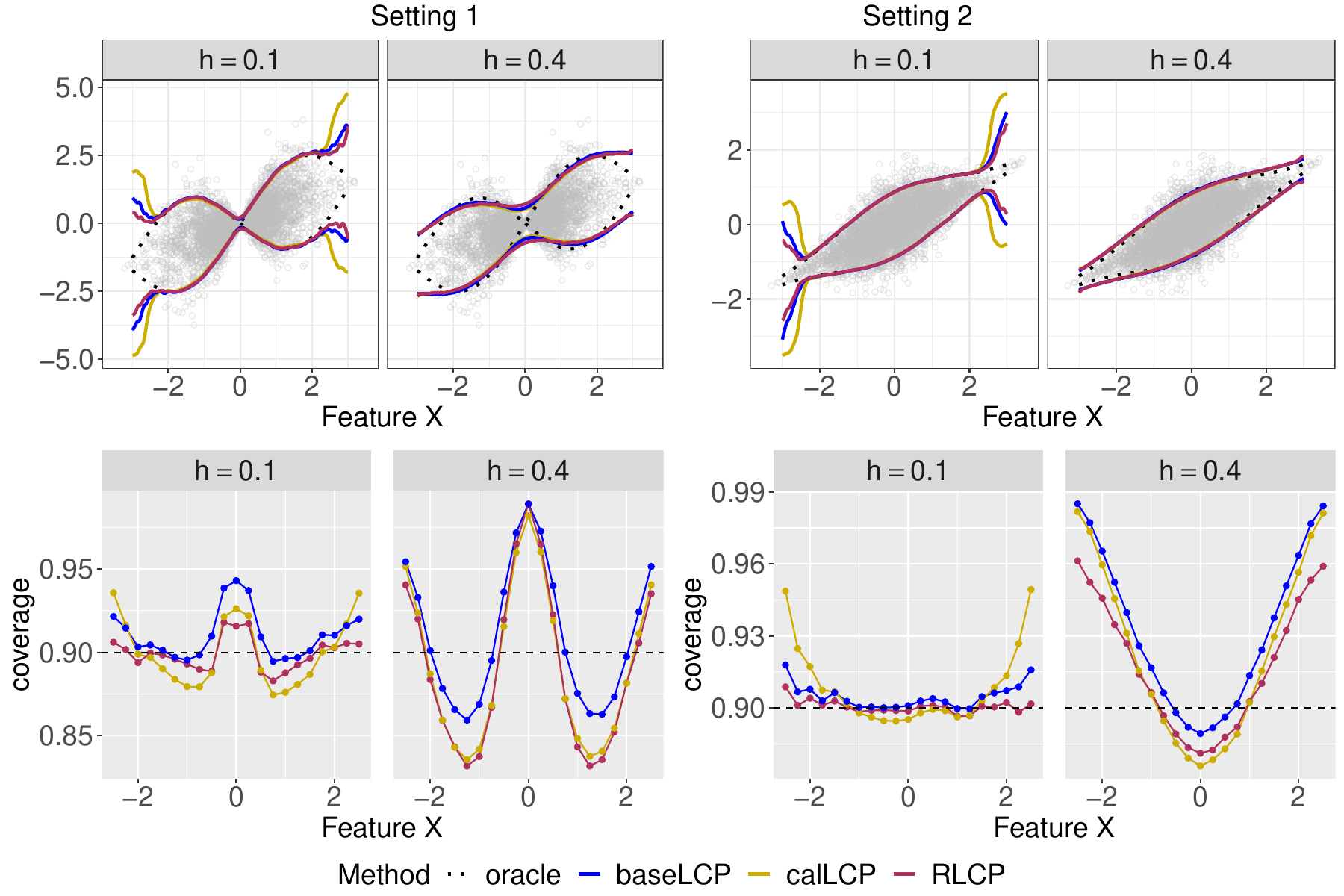}
    \caption{Top panels: the average prediction intervals produced by each method (at each feature value $x$, the endpoints of the prediction interval are averaged over $50$ independent trials), compared to the oracle prediction interval. Bottom panels: The corresponding local coverage of each method, averaged over $50$ independent trials. See Section~\ref{sec:simulation_univariate} for details.}
\label{fig:simulation_univariate_2}
\end{figure}

\paragraph{Results: local adaptivity.} 
One of the main goals addressed by calLCP in \citet{guan2023localized}'s work, and by RLCP in our work, is the goal of (approximate) test-conditional coverage, as in Aim~\ref{A2}. In order to study whether these methods are successfully addressing this aim empirically, we should look beyond marginal coverage, and examine whether the methods appear to provide local coverage in our experiments.

In the top panels of Figure \ref{fig:simulation_univariate_2}, we display the average values of the prediction interval returned by each method (that is, for each feature value $x\in\RR$, we compute the mean lower endpoint, and mean upper endpoint, of the prediction interval over the $50$ trials of the experiment). We compare the results from the three methods against the ``oracle'' prediction interval, computed with knowledge of the true distribution $P_{Y|X}$---i.e., the interval defined by the $5$th and $95$th percentiles of the conditional distribution of $Y$ given $X=x$, at each value $x$. In the bottom panels, we plot the local coverage of our prediction intervals, computed as the empirical coverage in balls of radius $0.4$ around a feature point, averaged over $50$ independent trials, and over the test observations lying within the ball. For brevity, we only present result for two bandwidths here; additional plots are shown in Appendix \ref{app:additional_univariate_results}.

For a smaller bandwidth $h$, the strong localization effect of the kernel $H$ ensures that the prediction intervals, for each of the three methods, adapt to the local variance of $P_{Y|X}$. For a larger bandwidth, however, this desirable performance is lost;
the localization effect diminishes (to be specific, the weighted quantile calculations in each method are being calculated with weights that are approximately constant), hence all three methods return approximately constant-width intervals that are not locally adaptive. Even for small $h$, we can see that the intervals appear conservative in the tails of the $X$ distribution, likely due to low effective sample size in these regions---however, this effect is more pronounced for calLCP, while RLCP offers more constant local coverage at the smaller $h$.

\paragraph{Additional results.} In Appendix \ref{app:additional_univariate_results}, in addition to showing plots for all values of the bandwidth $h$ (since Figure~\ref{fig:simulation_univariate_2} only shows results for $h=0.1,0.4$), we also compare with a related method from the recent literature---the work of \cite{gibbs2023conformal}, which we will describe below in Section~\ref{sec:discussion_literature}.

\subsubsection{Multivariate setting}\label{sec:simulation_multivariate}
Next, we will study the performance of these methods in a multivariate setting, again with the goal of comparing their local coverage properties. To do this in dimension $d>1$, since we cannot plot the prediction intervals at each value $x\in\RR^d$, we will instead examine the coverage properties of the methods on meaningful subsets of the feature space $\calX = \RR^d$.

We begin by generalizing the univariate setting: for a dimension $d\geq 1$, we consider the data distribution
$$ X\sim \calN_d (0,\mathbf{I}_d),~ Y|X\sim \calN \bigl(\sum_{i=1}^d\frac{ X_i}{2},\sum_{i=1}^d|\sin (X_i)|\bigr).$$
Note that in dimension $d=1$, this coincides with Setting 1 from our univariate simulation in Section~\ref{sec:simulation_univariate}. The details for implementing baseLCP, calLCP, and RLCP are the same as for the univariate setting:  we again use score function $s(x,y) = |y- \hat{f}(x)|$ for a pretrained (via linear regression) $\hat{f}$, and again use a Gaussian kernel $H$, and sample size $n=2000$. 

We will now examine the local coverage properties of each method in several different ways.

\paragraph{Results: coverage conditional on a set (with constant bandwidth $h$).}
To examine the test-conditional properties of these methods in high dimensions, we will focus on a specific task: coverage conditional on $X\in B$ for specific sets $B$. Concretely, we will define
\[B_{\textnormal{in}} = \{x\in\RR^d: \|x\|_2 \leq \tau_d\}, \quad 
B_{\textnormal{out}} = \{x\in\RR^d: \|x\|_2 > \tau_d\}\]
as the set of points $x$ lying inside and outside a ball of radius $\tau_d$. Here $\tau_d^2$ is chosen as the median of the $\chi^2_d$ distribution, which ensures that $P_X(B_{\textnormal{in}}) = P_X(B_{\textnormal{out}}) = 0.5$.

We will calculate local measures of predictive coverage for each of the three methods, across dimension $d\in\{1,5,10,15,\dots,50\}$. For each method and each dimension $d$, we compute an empirical estimate of the conditional coverage probability\footnote{For discussing the comparison between the three methods, we will denote the prediction interval as $\hat{C}_n(X_{n+1})$ for each method, even though for RLCP, the prediction interval is also a function of the random variable $\tX_{n+1}$.}
\[\P\{Y_{n+1}\in\hat{C}_n(X_{n+1})\mid X_{n+1}\in B\},\]
for three different choices of the set $B$: $B=B_{\textnormal{in}}$, $B=B_{\textnormal{out}}$, and $B=B_{\textnormal{in}}\cup B_{\textnormal{out}} = \calX$ (i.e., this last choice corresponds to marginal rather than conditional coverage). Intuitively, a method with strong local coverage properties should show approximately equal coverage on the inner and outer sets, $B_{\textnormal{in}}$ and $B_{\textnormal{out}}$.

We start with exploring the empirical local coverage of the conformal methods with a fixed bandwidth $h=1.5$ for all the methods across all dimensions (we will address the question of letting $h$ vary with dimension $d$ shortly). The results are shown in Figure \ref{fig:simulation_multivariate}.  RLCP performs significantly better than both baseLCP and calLCP in this setting. Aside from the smallest dimensions $d$, RLCP provides approximately equal coverage on both $B_{\textnormal{in}}$ and $B_{\textnormal{out}}$, with both conditional coverage levels approximately equal to the marginal coverage level $1-\alpha$. In contrast, while calLCP maintains exact marginal coverage (as guaranteed by the theory), it exhibits severely uneven coverages on the two sets $B_{\textnormal{in}}$ and $B_{\textnormal{out}}$. Finally, baseLCP shows extremely inconsistent performance across varying $d$, and can show marginal over- or undercoverage at various dimensions, but interestingly, the coverage level conditional on $B_{\textnormal{in}}$ or on  $B_{\textnormal{out}}$ are more similar to each other than for calLCP, in this setting.
\begin{figure}[!h]
\centering
  \includegraphics[width=0.8\textwidth]{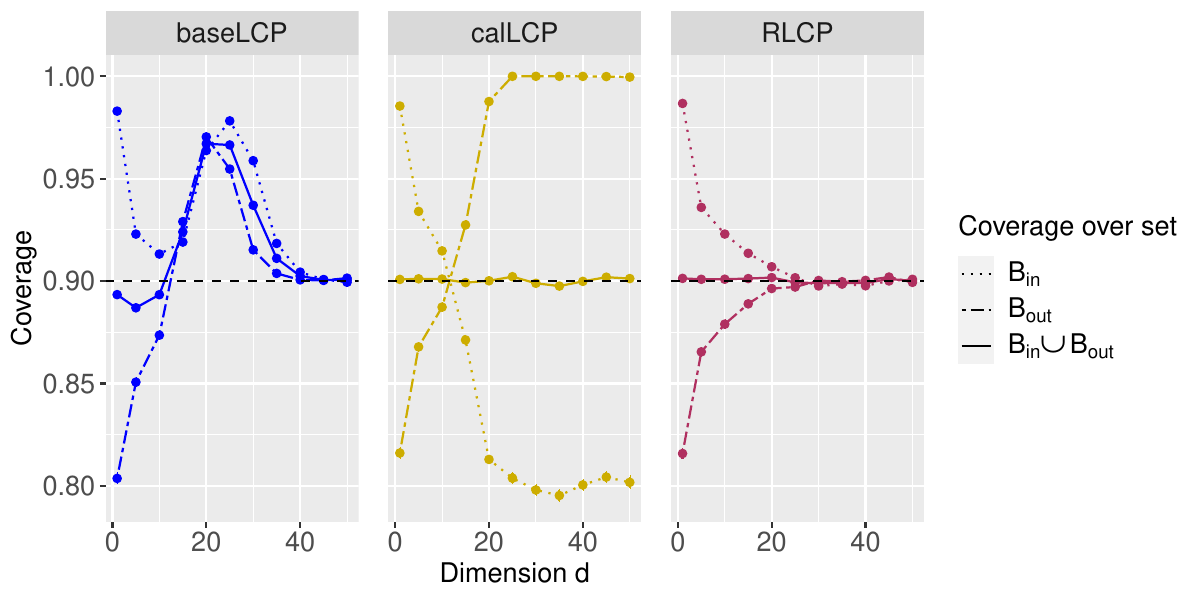}
  \caption{Predictive coverage conditional on $X_{n+1}\in B$, for three choices of the set $B$ (namely, $B_{\textnormal{in}}$, $B_{\textnormal{out}}$, and $B_{\textnormal{in}}\cup B_{\textnormal{out}} = \calX$). Results for each method with fixed bandwidth $h=1.5$ are plotted with respect to dimension $d$ of the feature space. Results are averaged over $50$ independent trials. See Section~\ref{sec:simulation_multivariate} for details.}
  \label{fig:simulation_multivariate}
\end{figure}

\paragraph{Results: coverage conditional on a set (with bandwidth $h$ varying with dimension $d$).}
The trends in performance observed above may be heavily influenced by the choice of the bandwidth $h$---and in particular, due to the curse of dimensionality, for a fixed value $h$ that does not change with dimension $d$, the amount of mass captured by $P_X$ within a radius-$h$ neighbourhood around any given $x$ will decrease exponentially rapidly as $d$ grows. In other words, using a fixed $h$ as $d$ increases, will lead to a sharply decreasing effective sample size for calculating the weighted quantiles; in a sense, the method is ``too localized'' for higher $d$. Therefore, for a more meaningful comparison across a range of dimensions $d$, we should choose the bandwidth $h$ in a dimension-adaptive way, to try to maintain a constant notion of effective sample size. 

Given feature values $X_1,\dots,X_{n+1}$, we recall the weights
\[w_i = \frac{H(X_i,X_{n+1})}{\sum_{j=1}^{n+1} H(X_j,X_{n+1})} .\]
 These weights appear in the calculation of the weighted quantile for both baseLCP and calLCP. To understand how the $w_i$'s relate to effective sample size,
let us consider the simple case where $H$ is a box kernel and each $X$ is likely to pick $k$ many neighbours out of a sample of size $n$ with this kernel: 
then for each $i$, we have (approximately) $k$ many $w_i$'s with value (approximately) $1/k$, and the rest zero. In this setting, for each $i$ we have
$\sum_i w_i^2 \approx 1/k$. Equivalently, we can write 
\[k \cdot \sum_i H(X_i,X_{n+1})^2 \approx \left(\sum_i H(X_i,X_{n+1})\right)^2.\]
Now, the terms in this expression are random,  but for large $n$ we can approximate this calculation by 
\[k \cdot n\E [H(X,X')^2] \approx \E[(n\E[H(X,X')\mid X])^2],\]
where the expected values are taken with respect to $X,X'\stackrel{\textnormal{iid}}{\sim}P_X$.
This then motivates the following definition of effective sample size:
\begin{equation}\label{eqn:n_eff}
n_{\textnormal{eff}}(h) = n\cdot \frac{\E[\E[H(X,X')\mid X]^2]}{\E [H(X,X')^2]}.\end{equation}

For baseLCP and calLCP, which each construct $\hat{C}_n(X_{n+1})$ with quantile weights $w_i$, we use a bandwidth $h$ that satisfies $n_{\textnormal{eff}}(h) = 50$ (where $n_{\textnormal{eff}}(h) $ is computed as in~\eqref{eqn:n_eff}, after using the pretraining set to compute estimates of the necessary expected values). For RLCP, on the other hand, the quantile calculation that defines the prediction interval is constructed using weights $\tilde{w}_i$, which is defined by calculating kernel values $H(X_i,\tX_{n+1})$ rather than $H(X_i,X_{n+1})$. Consequently, for this method, we define
\[
\tilde{n}_{\textnormal{eff}}(h) = n\cdot \frac{\E[\E[H(X,\tX')\mid X]^2]}{\E [H(X,\tX')^2]},\]
where now the expected value is taken with respect to drawing  $X,X'\stackrel{\textnormal{iid}}{\sim}P_X$ and then sampling $\tX'\sim H(X',\cdot)$.
We then implement RLCP with a bandwidth $h$ chosen to satisfy
$
\tilde{n}_{\textnormal{eff}}(h)= 50$
(where again, the expected values defining our notion of effective sample size, are estimated empirically on the pretraining set). 
Consequently, the various methods are being run with (potentially) different bandwidths, to ensure that they each have the same effective sample size---that is, to ensure that we are comparing these methods at approximately the same level of localization.

\begin{figure}[!h]
    \centering\includegraphics[width=0.8\textwidth]{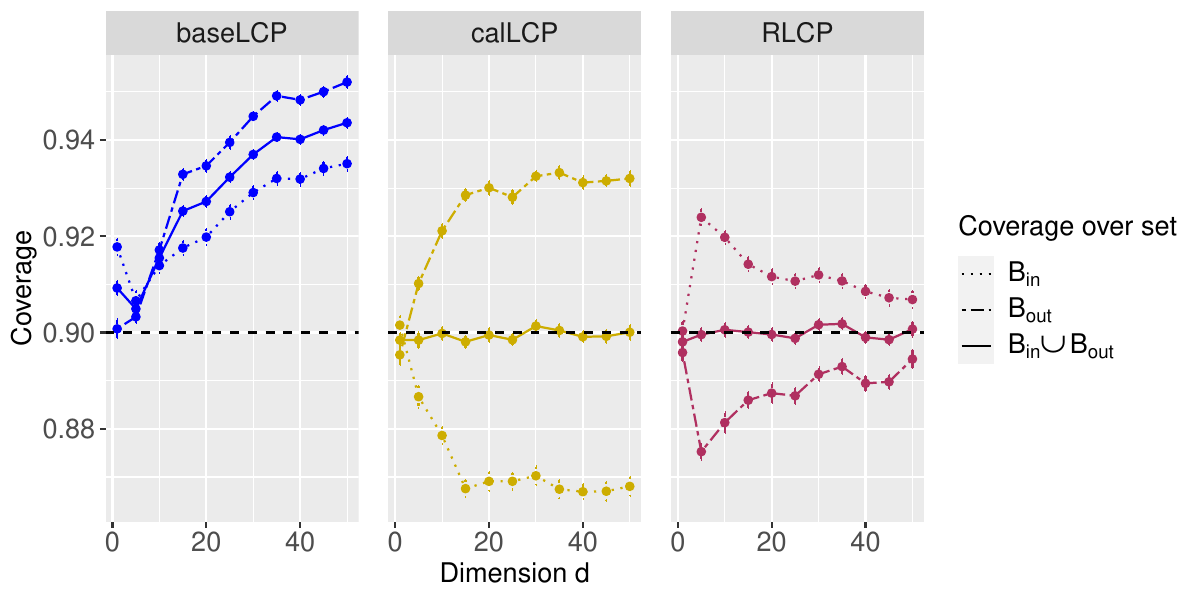}
    \caption{Predictive coverage conditional on $X_{n+1}\in B$, for three choices of the set $B$ (namely, $B_{\textnormal{in}}$, $B_{\textnormal{out}}$, and $B_{\textnormal{in}}\cup B_{\textnormal{out}} = \calX$). Results for each method, averaged over $50$ independent trials, are plotted with respect to dimension $d$ of the feature space. Each method is run with its own bandwidth choice, so that a constant $50$ effective sample size is maintained across dimensions. See Section~\ref{sec:simulation_multivariate} for more experimental details.}
    \label{fig:local_coverage_optimized_bandwidth_choices}
\end{figure}

The results, which are displayed in Figure \ref{fig:local_coverage_optimized_bandwidth_choices}, show that calLCP has more disparity in the coverage conditional on $B_{\textnormal{in}}$ versus on $B_{\textnormal{out}}$, as compared with RLCP: aside from the lowest values of dimension $d$, the coverage of RLCP on these subsets are around $1\%$ off from the target level, compared with around $3\%$ for calLCP. For both of these methods, as expected from the theory, the marginal coverage is exactly $1-\alpha$. (For baseLCP, the marginal coverage is quite conservative except at the smallest values of $d$, and hence the conditional coverage levels are high as well.)

\paragraph{Additional results.} 
The results shown above consider conditional coverage relative to only two specific sets $B$---namely, $B_{\textnormal{in}}$ and $B_{\textnormal{out}}$. For a more complete picture of the performance of the methods,
in Appendix \ref{app:experiments_coverage_on_highD_bins}, we consider a more general collection of sets $B$, and compare coverage of all these localized conformal methods conditioned on those sets. We will see a similar pattern of performance---divergence of coverage of calLCP from the target level is significantly higher compared with that of RLCP, for higher dimensions $d$.

\subsection{Real data example: predicting age of abalones}\label{sec:real_data}
Next, we will compare the performance of the localized CP methods on a real dataset  on abalones (a species of mollusc) \citep{nash1994population}.\footnote{The data for this experiment were obtained from \url{https://archive.ics.uci.edu/dataset/1/abalone}.} The objective is to predict the age of abalones based on their physical measurements. Since the economic value of abalone is positively correlated with its age, but determining age precisely requires a costly procedure, accurately predicting age from physical characteristics alone becomes both a valuable and practical problem. This dataset has $4177$ data points, where the response \texttt{rings} determines the age of the abalone, and the features consist of the abalone's sex (categorized as Male, Female, or Infant), length (the longest shell measurement), diameter (perpendicular to length), height, whole weight, shucked weight (i.e., weight of the meat inside), viscera weight (i.e., gut weight after bleeding) and shell weight (the weight after drying).

For our experiment, we will use $Y=\texttt{ring}$ as the response, and $X_{\texttt{sex}}$, $X_{\texttt{length}}$, $X_{\texttt{diameter}}$, $X_{\texttt{height}}$ and $X_{\texttt{whole\_weight}}$ as the features. After splitting the data equally into a pretraining set, a calibration set, and a test set,\footnote{Here we depart from the terminology of the earlier sections in our paper---the term \emph{calibration set} refers to the data $(X_1,Y_1),\dots,(X_n,Y_n)$ on which we implement the methods baseLCP, calLCP, and RLCP, using a predefined score function $s$ that is based on a pretrained predictor $\hat{f}$. This same set was previously called the \emph{training set} throughout the paper, but here we refer to it as the calibration set to distinguish it from the pretraining set used for fitting $\hat{f}$.} we train three different base predictors $\hat{f}$ on the pretraining set, using a linear regression, a $2$-layer neural network, or  a random forest. For each $\hat{f}$ we define the corresponding pretrained score function $s(x,y) = |y-\hat{f}(x)|$. Next, we will define the localization kernel $H$ to place high weight for pairs of patients that have the same sex, and similar values for length, diameter, height and whole weight: fixing a choice of bandwidths $h_{\texttt{length}}$, $h_{\texttt{diameter}}$, $h_{\texttt{height}}$ and $h_{\texttt{whole\_weight}}$, we define
\begin{align*}
    H(x,x') &= \frac{1}{2h_{\texttt{sex}} \cdot 2h_{\texttt{diameter}}\cdot 2h_{\texttt{height}}\cdot 2h_{\texttt{whole\_weight}}} \cdot \\
    &\ind\bigg\{x_{\texttt{sex}} = x'_{\texttt{sex}}, |x_{\texttt{length}} - x'_{\texttt{length}}|\leq h_{\texttt{length}},|x_{\texttt{diameter}} - x'_{\texttt{diameter}}|\leq h_{\texttt{diameter}}, \\
    &|x_{\texttt{height}} - x'_{\texttt{height}}|\leq h_{\texttt{height}},|x_{\texttt{whole\_weight}} - x'_{\texttt{whole\_weight}}|\leq h_{\texttt{whole\_weight}}\bigg\}.\end{align*}
In our implementation, for simplicity we will take $h_{\texttt{length}}=h_{\texttt{diameter}}=h_{\texttt{height}}=h_{\texttt{whole\_weight}}=h$ to always equal the same value, chosen from $h\in\{0.05,0.1,\ldots,0.25,0.3\}.$
This satisfies the condition~\eqref{eqn:condition_on_H}, i.e., that $H(x,\cdot)$ defines a density with respect to an appropriate base measure: namely, sampling from $H(x,\cdot)$ consists of sampling each of length, diameter, height and whole\_weight measurement uniformly from a $2h$ length bin around the corresponding test data measurement (e.g. length is an uniform random sample from $[x_{\texttt{length}} - h,x_{\texttt{length}} + h$), while copying the value of $x_{\texttt{sex}}$).

\begin{figure}[!h]
    \centering
    \includegraphics[width=0.95\textwidth]{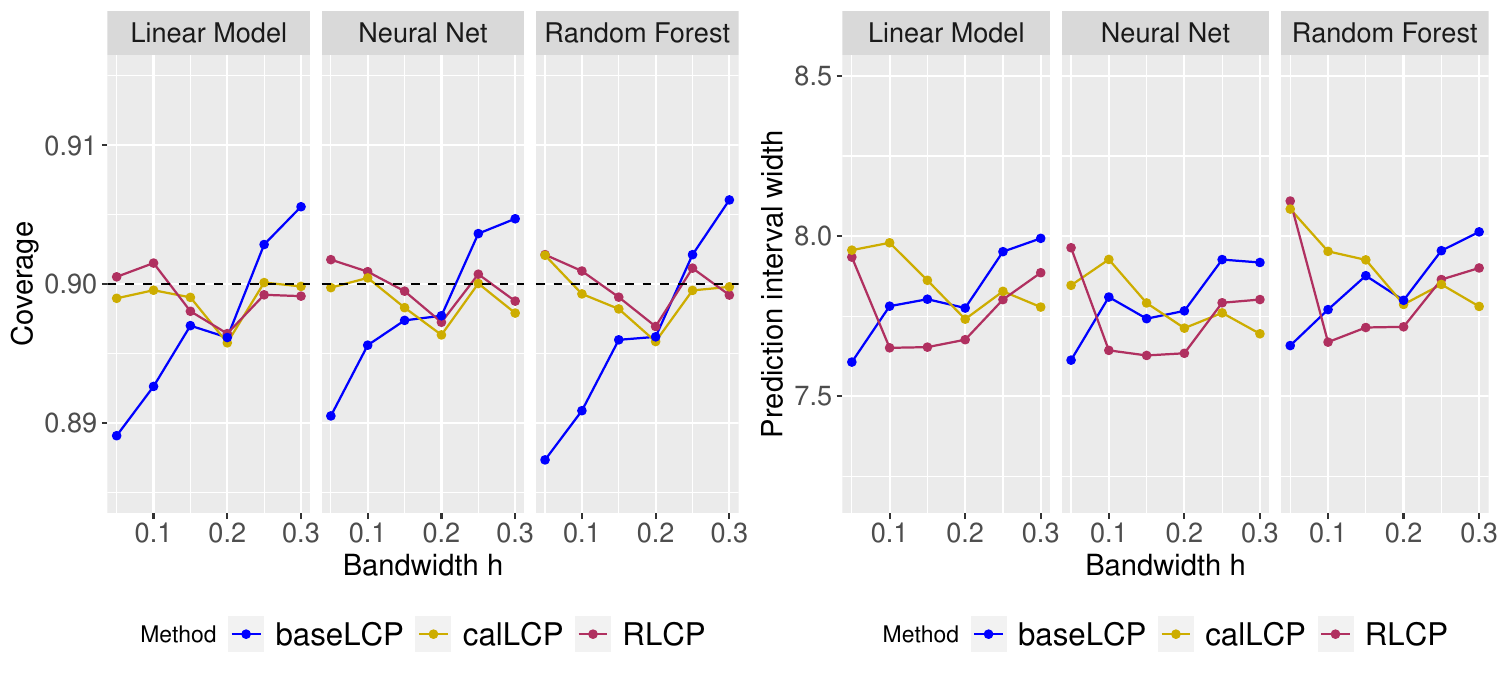}
    \caption{Marginal coverage (averaged over the test set) and  prediction interval width (median over the test set),  for each of the three methods across different bandwidths and for different choices of the pretrained base predictor $\hat{f}$, for the real data experiment. Results are averaged over $50$ random splits of the data into pretraining, calibration, and test sets. See Section~\ref{sec:real_data} for details.}
    \label{fig:real_data_1}
\end{figure}

\paragraph{Marginal coverage and prediction interval width.}
In Figure \ref{fig:real_data_1}, for each of the three localized CP methods, we present the marginal coverage and the median width of the prediction intervals on the test set, averaged over $50$ random splits of the data. While RLCP and calLCP attains exact target level of marginal coverage, coverage of baseLCP prediction intervals are lower and higher respectively for smaller and larger values of $h$. We further see that median length of prediction intervals are slightly narrower for RLCP prediction intervals for most bandwidth choices and for the various base algorithm choices. We will see next that RLCP offers better empirical conditional coverage at lower bandwidth values, which may justify the need for the prediction intervals with slightly larger width at the lowest values of $h$.

\paragraph{Conditional coverage.} In this real data setting, there are many practical questions we can ask regarding test-conditional coverage. In particular, since each data point corresponds to an individual abalone, test-conditional coverage can be interpreted as asking whether the predictive intervals are equally valid across different subgroups of abalones (e.g., smaller vs larger abalones). 
Here we aim to gain some insight into these questions by examining the empirical coverage of each method conditional on certain physical measurement for abalone. Ideally, we would expect approximately constant conditional coverage across different subgroups of abalones. 
In the left panel of Figure~\ref{fig:real_data_2}, we show coverage conditional on sex, i.e., conditional on the value of the three levels of covariate $X_{\texttt{sex}}$. The bar plot shows the value of $\P\{Y_{n+1}\in\hat{C}_n(X_{n+1})\mid (X_{n+1})_{\texttt{sex}} = x\}$ for each $x\in\{\texttt{Male},\texttt{Female},\texttt{Infant}\}$. While RLCP shows approximately equal coverage levels for the three groups across all three base predictors $\hat{f}$, for calLCP the performance is quite variable, with extremely disparate coverage levels for some choices of the base algorithm.

In the right panel of Figure~\ref{fig:real_data_2}, we ask the same question with respect to $X_{\texttt{length}}$. Since this is a continuous covariate, we cannot compute conditional coverage exactly, but instead estimate it with a sliding window whose width is chosen to have $5\%$ mass: that is, we are plotting $\P\{Y_{n+1}\in\hat{C}_n(X_{n+1})\mid (X_{n+1})_{\texttt{length}} \in [x - \Delta_0, x + \Delta_1)\}$ where, for each $x$, $\Delta_0$ and $\Delta_1$ are chosen so that $2.5\%$ of the abalone population has length in the range $[x - \Delta_0, x)$, and same for the range $[x,x+ \Delta_1)$. Here we again see that RLCP shows better conditional coverage compared with calLCP particularly at lower bandwidth $h=0.05$---the estimated conditional coverage for RLCP is approximately constant at the target level of $90\%$ while for calLCP, we see very prominent over- or undercoverage at various values of length.

\begin{figure}
    \centering
    \includegraphics[width=\textwidth]{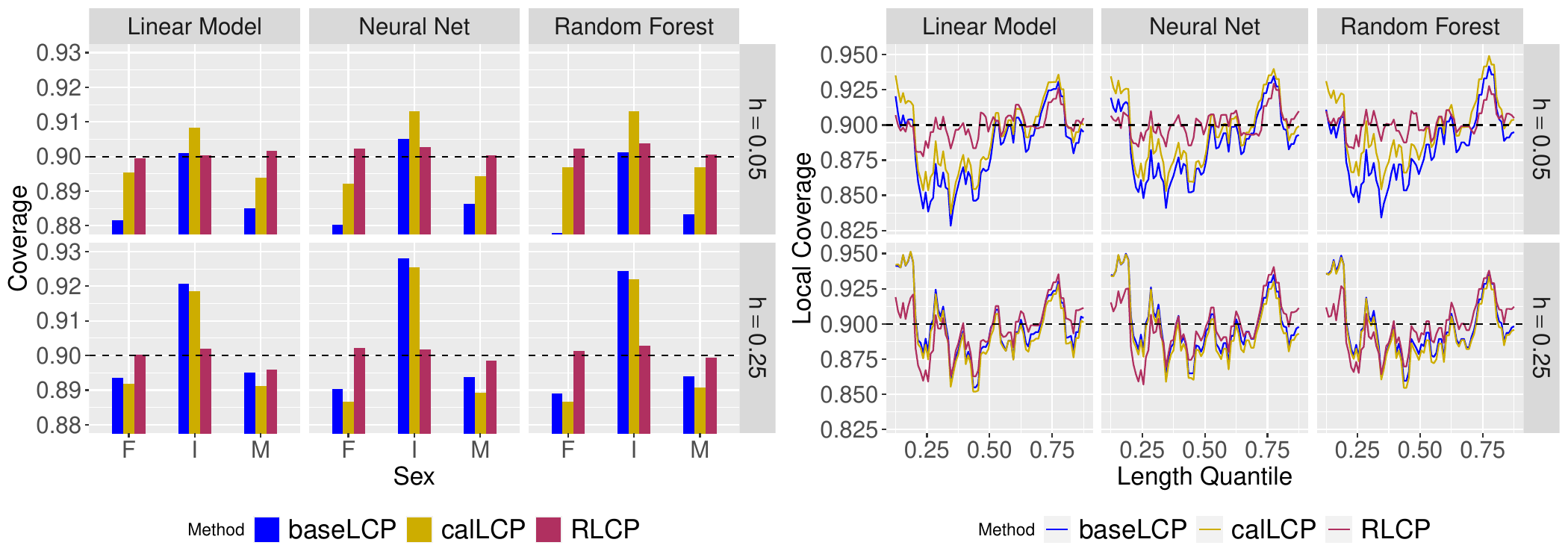}
    \caption{Comparison of local coverage of the three methods, conditioning on $X_{\texttt{sex}}$ (left) and on $X_{\texttt{length}}$ (right). See Section~\ref{sec:real_data} for details.}
    \label{fig:real_data_2}
\end{figure}

\section{Discussion and extensions}
\label{sec:discussion}
In this work, we have introduced a novel variant of the localized conformal prediction framework, RLCP, which attains distribution-free approximate coverage guarantees for relaxed test-conditional coverage and for smooth covariate shift problems while maintaining exact marginal coverage.  
As explained  in Section~\ref{sec:insights_to_results}, all of these coverage results are consequences of a core total variation distance based argument, which could potentially be used to derive other types of coverage guarantees as well. 
Additionally, RLCP also enjoys a training-conditional coverage guarantee. 
Our numerical simulations and data experiments advocate the use of RLCP over existing localized CP algorithms, due to its better performance at maintaining conditional coverage across a range of settings. 

In the remainder of this section, we will first compare RLCP with existing literature and methodology in closer detail, and will then explore the role of the randomization in RLCP. We will also consider an alternative formulation of calLCP, relating it to full conformal prediction.

\subsection{Related work}\label{sec:discussion_literature}
As we have described earlier, it is well known that achieving distribution-free pointwise test-conditional coverage is impossible to do (aside from trivial solutions such as $\hat{C}_n(X_{n+1})\equiv \calY$) \citep{vovk2012conditional,lei2014distribution}; even natural relaxations of the test-conditional coverage property also suffer from fundamental hardness results in the distribution-free setting \citep{barber2021limits}.
With these hardness results in mind, any progress on conditional coverage for conformal prediction methods must necessarily make compromises, whether by aiming for a weaker theoretical guarantee, by placing some assumptions on the problem, or by targeting empirical rather than theoretical performance. 

Almost all of these existing approaches in the conformal literature, including our proposal, can be thought of a specific instance of the following three-step general workflow:

\begin{center}
    \begin{tikzpicture}[node distance=2cm]
    \node (heurunc) [startstop,text width=3cm,label=Step (a)] {Choose conformal score $s$};
    \node (CP) [process,right of= heurunc,xshift=2.3cm,text width=3cm,label=Step (b)] {Choose a suitable localizer $H$\\ (if desired)};
    \node (rigunc) [decision,right of=CP,xshift=2.3cm,text width=3cm,label=Step (c)] {Calibration w.r.t a class of functions $\mathcal F$\\ (if desired)};
    
    \draw [arrow] (heurunc) -- (CP);
    \draw [arrow] (CP) -- (rigunc);
    \end{tikzpicture}
    \end{center}

We now explain these steps, and highlight how some existing methods can be interpreted in terms of this three-step workflow.

\paragraph{Step (a): choosing a score $s$.}
 The split conformal method, defined in Section~\ref{sec:background_CP}, can be implemented with any choice of the score function $s$ (as in Step (a)). For example, we might choose the residual score function, $s(x,y) = |y-\hat{f}(x)|$ (for a prefitted regression function $\hat{f}$), as discussed earlier. A popular alternative is the conformalized quantile regression (CQR) method \citep{romano2019conformalized}, which uses the score function $s(x,y)\max\{\hat Q_\alpha(x)-y,y-\hat Q_{1-\alpha}(x)\}$, where $\hat Q$ denotes a pretrained estimated quantile. 
 
 In general, a score function $s$ that captures the underlying heterogeneity of $P_{Y|X}$ (such as CQR) will tend to exhibit good local coverage empirically. For additional examples of score functions of this type, see the works of \citet{lei2018distribution, gyorfi2019nearest,chernozhukov2021distributional,sesia2021conformal,deutschmann2024adaptive,feldman2021improving}; see also \citet{gupta2022nested, sesia2020comparison} for some empirical comparison of local coverage properties of various choices of $s$. Some of these methods offer (asymptotic) local coverage guarantees under additional distributional assumptions, e.g., \citet{sesia2021conformal,gyorfi2019nearest}.
 However, split conformal---with \emph{any} choice of $s$---does not have any distribution-free guarantees for approximate local coverage. 

 \paragraph{Step (b): choosing a localizer $H$.} Next, we turn to Step (b)---choosing a localizer $H$. This  means that when constructing the prediction interval, instead of using a single threshold $\hat{q}$ for \emph{any} value of the test point $X_{n+1}$, we instead compute a localized threshold via a weighted quantile, with weights determined by the localizing kernel $H$---exactly as in both calLCP and RLCP. In particular, both calLCP and RLCP require \emph{both} choosing a score function $s$ and then choosing a localizer $H$, in order to implement an instance of the method (e.g., we might combine the CQR score with a box kernel localizer). As we have seen in this work, for RLCP, adding Step (b) into our workflow allows for distribution-free theoretical guarantees that approximate test-conditional coverage, as in Theorem~\ref{thm:test_conditional}. 
 
 Another type of approach that can be viewed through the lens of Step (b), is the approach of partitioning $\calX$ into bins, and then computing $\hat q$ separately within each bin; methods of this type have been explored by, e.g., \citet{lei2014distribution,izbicki2022cd,leroy2021md}, and can be related to Step (b) by choosing $H(x,x') = \ind\{\textnormal{$x,x'$ are in the same bin}\}$. Alternatively, $H$ can be chosen as a nearest-neighbour based kernel \citep{ghosh2023improving}.

 We remark that for the split conformal method (which does not incorporate Step (b)), we can equivalently view this method as choosing the \emph{trivial} localizer $H$ given by $H(x,x')\equiv 1$, since computing an unweighted quantile (to define the threshold $\hat q$ for the split conformal interval) is equivalent to computing a weighted quantile with constant weights $w_1 = \dots=w_{n+1}$. 

\paragraph{Step (c): choosing a function class $\cal F$.} Finally, we turn to the third step: calibration with respect to a class of functions $\cal F$. This step does not appear in our RLCP method, but plays a key role in some of the related methods in the literature. Specifically, in the work of \cite{gibbs2023conformal}, calibration with respect to $\cal F$ means that prediction intervals are required to satisfy\footnote{In the work of \cite{gibbs2023conformal}, functions $f\in \cal F $ are not constrained to be nonnegative, and so this requirement is written as $=0$, not $\geq 0$ (of course, obtaining coverage at level exactly $1-\alpha$ requires using a smoothed version of the method). Here we use a formulation more comparable to the notation of our paper.}
\[\E\left[ f(X_{n+1}) \cdot \left(\ind\{Y_{n+1}\in \hat C_n(X_{n+1})\} - (1-\alpha) \right)\right] \geq  0~~ \textnormal{for all}~~f\in \mathcal F.\]
If we take $\cal F$ to be the set of all measurable functions, this would be equivalent to a test-conditional coverage guarantee as in~\eqref{eqn:test_conditional_coverage}---but as discussed before, is impossible to achieve distribution-free. \cite{gibbs2023conformal} therefore implement their method with more restricted classes of functions $\cal F$, for instance linear functions or functions from a Reproducing Kernel Hilbert Space (RKHS), to achieve an approximate notion of local coverage; their method is implemented via solving a quantile regression problem over the class $\mathcal{F}$. As another example, we can choose $\mathcal F=\{\ind\{\cdot \in G\}:G\in \mathcal G\}$ for a collection $\mathcal G$ of subsets of $\calX$; this amounts to requiring group-conditional coverage for all groups from the collection $\mathcal G$, as has been studied in \citet{jung2023batch}. 

At the other extreme, a marginal coverage guarantee can be interpreted as requiring calibration with respect to the trivial class of functions, $\cal F = \{\mathbf{1}\}$ (where $\mathbf{1}$ is the constant function $x\mapsto 1$); thus methods such as split conformal, calLCP, or RLCP, which do not incorporate Step (c) into their workflow, can be equivalently interpreted as applying Step (c) with this trivial choice $\cal F = \{\mathbf{1}\}$.

\paragraph{Implications for local coverage and covariate shift.}
To summarize, we have now seen that many existing methods (and our proposed RLCP method) can be compared under this common high-level framework. What does this tell us in terms of their local coverage properties (and coverage under covariate shift)? As mentioned above, methods that only use Step (a) (and therefore can be viewed as choosing the trivial localizer $H(x,x')\equiv 1$ for Step (b), and the trivial function class $\cal F = \{\mathbf{1}\}$ for Step (c)) do not achieve distribution-free theoretical guarantees of (approximate) local coverage, although of course the choice of score $s$ in split conformal can lead to substantial empirical improvements in local coverage in practice. 

On the other hand, methods that use \emph{either} localization at Step (b) (such as our RLCP procedure) or a nontrivial function class at Step (c) (such as the method of \cite{gibbs2023conformal}) can offer meaningful approximate local coverage theory in addition to strong performance empirically. Using localization (Step (b)) or calibration with respect to a function class (Step (c)) can therefore be viewed as complementary options towards achieving the same type of aim---we will further investigate this claim empirically by comparing the local coverage of RLCP and the method from \citet{gibbs2023conformal} in some of our simulation settings in Appendix \ref{app:additional_univariate_results}.
An interesting open question is whether stronger guarantees and/or better empirical conditional coverage might be obtained by developing methods that incorporate both Step (b) and Step (c) simultaneously.

\subsection{Is it possible to derandomize RLCP?} \label{sec:derandomizing}
In our discussion in Section~\ref{sec:discussion_literature}, we showed that a range of methods from the literature can all be viewed through a three-step framework for constructing localized versions of conformal prediction---including our proposed method, RLCP. One way that RLCP differs from the other methods discussed, however, is the fact that randomization is inherent in its construction. In this section, we take a closer look at the question of randomization: is the randomized construction of RLCP necessary for ensuring its favourable theoretical and empirical properties, and does it cause substantial variability in the output in practice?

We will begin with the latter question, by examining empirically whether the random draw of $\tX_{n+1}$ in the construction of RLCP causes substantial randomness in the output. To explore this, we return to the univariate simulations from Section~\ref{sec:simulation_univariate}. Whereas before we plotted the \emph{average} lower and upper endpoints of the prediction intervals across $100$ independent trials of the experiment (in Figure~\ref{fig:simulation_univariate_2}), now we generate a \emph{single} dataset and compute both the calLCP and RLCP prediction intervals, but for RLCP, we repeat the random draw of $\tX_{n+1}$ $100$ times to examine the randomness in the construction. Figure~\ref{fig:RLCP_variability} shows the results: the shaded area visualizes the 5th and 95th percentile of the lower and upper endpoints of the RLCP prediction band. We can see that, on average, calLCP and RLCP are giving similar results here (as was already shown in Figure~\ref{fig:simulation_univariate_2}), but the variability of RLCP is substantial in some settings (e.g., particularly for Setting 1 at the lower bandwidth values, where the localization is strong). 
We also examine the randomness of RLCP in a more quantitative way in Appendix~\ref{app:RLCP_deviation}; we will see that, in the worst case, the RLCP prediction interval can vary by around $20$--$25\%$ in width relative to the average output.
\begin{figure}[!h]
\centering
\includegraphics[width = 0.7\textwidth]{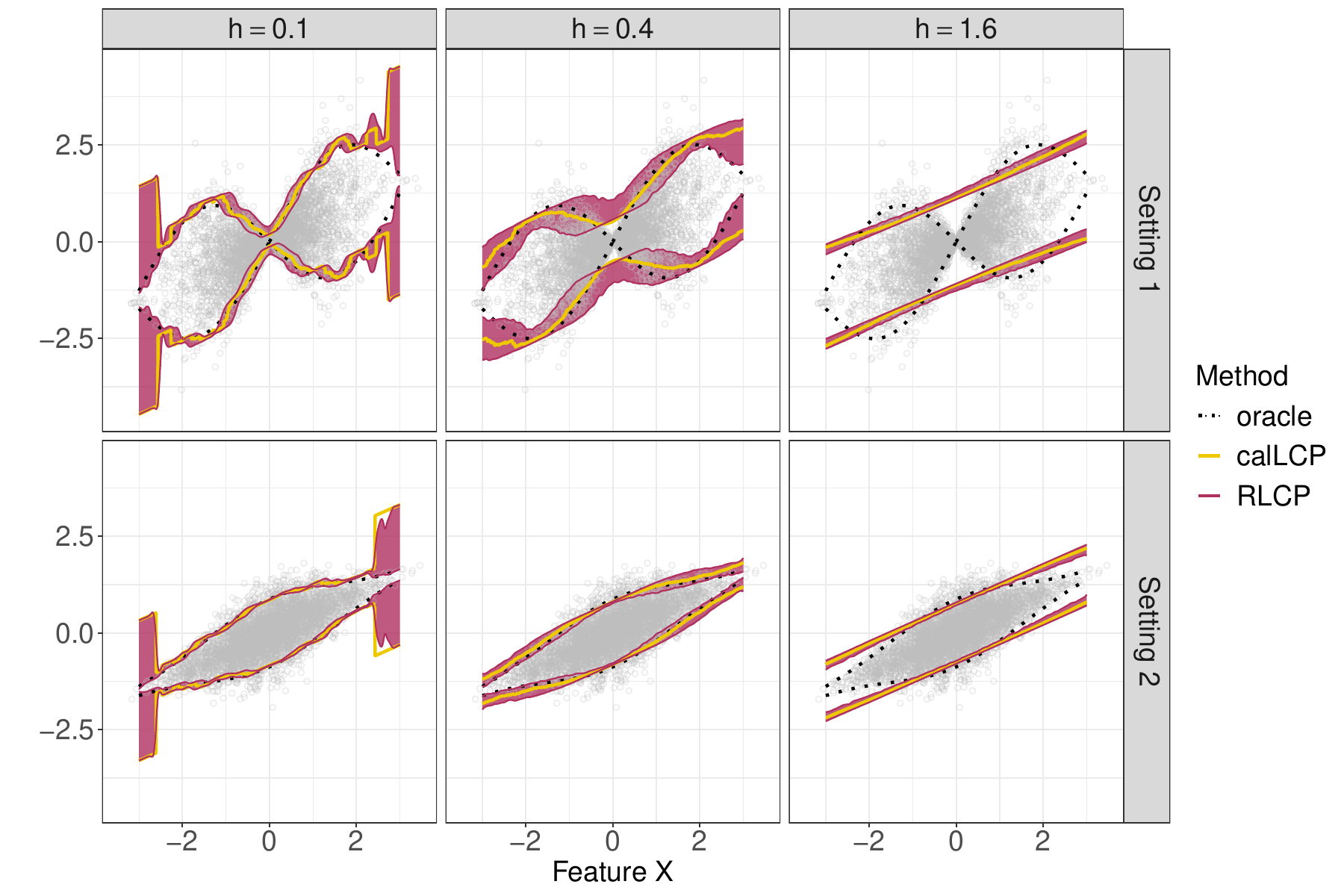}
\caption{A visualization of the variability of the RLCP prediction band for a fixed dataset, showing the 5th and 95th percentile of $300$ draws of constructing RLCP using the same simulated dataset. We also compare to the oracle prediction interval and to the calibrated localized conformal prediction interval. See Section~\ref{sec:derandomizing} for details.}
\label{fig:RLCP_variability}
\end{figure}
 
Hence, depending on the underlying setting, this variability of RLCP can be substantial, and may be a substantial limitation of this method in practice. On the other hand, the randomized construction is at the heart of the ability of RLCP to obtain stronger and cleaner theoretical guarantees relative to existing methods (as we have discussed 
in Sections~\ref{sec:role_of_Xt} and~\ref{sec:insights_to_results}), 
as well as its favourable empirical performance in terms of local coverage.

With this trade-off in mind, this naturally suggests the following question: is it possible to ``derandomize'' the RLCP method, perhaps by averaging or aggregating across multiple runs of RLCP on a given dataset, to retain its strong theoretical and empirical properties but reduce the issue of variability due to randomization?

\subsubsection{Derandomizing RLCP via $m$-RLCP}

We will now propose one  possible approach towards derandomizing RLCP, by averaging over $m$ many runs in a particular way. We will see that this construction is very natural, but unfortunately its empirical performance is substantially worse than RLCP itself. Through this exploration, we hope to inspire continued research on this question to determine whether the randomness of RLCP is somehow critical to its strong performance, or whether a different derandomization scheme may instead offer better results. 

Before we begin, we first reformulate RLCP through the language of p-values. For all methods within the conformal prediction framework, the prediction interval can be constructed by considering, for each possible value $y$, the hypothesis test that asks whether we can reject the hypothesis that $Y_{n+1}=y$ (in which case $y$ is excluded from the prediction interval), or cannot reject it (in which case $y$ is included); the p-value from this hypothesis test can be viewed as an equivalent formulation of the conformal prediction methodology. For the case of RLCP specifically, given the training set $\{(X_i,Y_i)\}_{i\in[n]}$, define a \emph{randomized} p-value
\begin{equation}\label{eqn:RLCP_as_pvalue}p(y;X_{n+1},\tX_{n+1}) = \sum_{i=1}^n\frac{H(X_i,\tX_{n+1})}{\sum_{j=1}^{n+1}H(X_j,\tX_{n+1})}\ind\{s(X_i,Y_i)\geq s(X_{n+1},y)\} + \frac{H(X_{n+1},\tX_{n+1})}{\sum_{j=1}^{n+1}H(X_j,\tX_{n+1})}.\end{equation}
As for other conformal methods, the prediction interval for RLCP can equivalently be constructed as
\[\hat{C}_n^{\textnormal{RLCP}}(X_{n+1},\tX_{n+1}) = \{y\in\calY: p(y;X_{n+1},\tX_{n+1}) > \alpha\}.\]

We are now ready to derandomize. Firstly, given the test feature we generate $m$ copies of $\tX_{n+1}$, i.e. $\tX_{n+1}^{[m]} = (\tX_{n+1}^1,\tX_{n+1}^2,\cdots, \tX_{n+1}^m)$, with each value drawn i.i.d.\ from $H(X_{n+1},\cdot)$. For each $y\in\calY$, we then calculate the corresponding $m$ p-values,
\[p(y;X_{n+1},\tX_{n+1}^1),\dots,p(y;X_{n+1},\tX_{n+1}^m)\]
as in~\eqref{eqn:RLCP_as_pvalue} above, and then average these p-values,
\begin{equation}\label{eqn:define_bar_p}\bar{p}(y;X_{n+1},\tX_{n+1}^{[m]}) = \frac{p(y;X_{n+1},\tX_{n+1}^1) + \dots + p(y;X_{n+1},\tX_{n+1}^m)}{m}.\end{equation}
Finally, we return the prediction interval
\[
\hat{C}_{n}^{\textnormal{$m$-RLCP}}(X_{n+1},\tX_{n+1}^{[m]})=\{Y_{n+1}: \bar{p}(y;X_{n+1};\tX^{[m]}_{n+1})> \alpha\}.
\]
We will refer to this method as $m$-RLCP. 

Since we are averaging the p-value over $m$ random draws of $\tX_{n+1}$, for large $m$, the $m$-RLCP prediction interval is essentially a deterministic function of the data (i.e., of the training data $\{(X_i,Y_i)\}_{i\in[n]}$ along with the test feature $X_{n+1}$). 

It is well known that the average of valid p-values is itself, up to a factor of 2, a valid p-value \citep{ruschendorf1982random,vovk2020combining}, i.e., we have
\[\P\{\bar{p}(Y_{n+1};X_{n+1},\tX_{n+1}^{[m]})\leq \alpha\} \leq 2\alpha.\]
This theoretical guarantee suggests that the $m$-RLCP prediction interval may be somewhat anticonservative, i.e., coverage is only guaranteed at level $\geq 1-2\alpha$.
Although this result only addresses the potential under-coverage, in practice we observe that $m$-RLCP can in fact severely \emph{overcover}, i.e., may be extremely conservative (we show  an empirical comparison of RLCP in $m$-RLCP later on in Appendix \ref{app:mRLCP experiments}). In fact, a heuristic argument explains that $m$-RLCP behaves similarly to baseLCP, but with a different localizer kernel; our empirical findings on $m$-RLCP thus agree with the fact that baseLCP can often be conservative in practice, as we have seen empirically in the earlier experiments. We provide more details on the connection between $m$-RLCP and baseLCP in Appendix~\ref{app:mRLCP results}.

To summarize, this intuitive approach towards derandomizing RLCP shows substantial drawbacks in terms of empirical performance (as well as weaker theory as compared with RLCP). Thus, finding a better way to aggregate across multiple runs of RLCP, to achieve a less random method without compromising on theoretical properties or empirical performance, remains an important open question.

\subsection{Reinterpreting calLCP as an instance of full conformal prediction}\label{sec:calLCP_fullCP}
To conclude our discussion of related work, we will now examine \citet{guan2023localized}'s calLCP method (as we recall, simply referred to as ``LCP'' in that work) through a new lens. Specifically, we make the observation that, with a properly chosen score function, calLCP is actually equivalent to running full conformal prediction \citep{vovk2005algorithmic}. 

In this section, we will use the following formulation of full conformal prediction. We are given a function $T:(\calX\times\calY)^{n+1}\rightarrow\RR^{n+1}$, which maps datasets of size $n+1$ to a vector of scores (i.e., the $i$th score corresponds to the $i$th data point, with large scores indicating that the data point appears unusual relative to the other data in the dataset). The map $T$ is required to commute with permutations (i.e., swapping data points $i$ and $j$ results in swapping the $i$th and $j$th scores returned by $T$).
The full conformal prediction interval (or more generally, prediction set) is then defined as
\begin{equation}\label{eqn:fullCP}\hat{C}^{\textnormal{fullCP}}_n(X_{n+1}) = \{y\in\calY : T_{n+1}^y \leq \textnormal{Quantile}_{1-\alpha}(T_1^y,\dots,T_n^y,T_{n+1}^y)\},\end{equation}
where for any $y\in\calY$, the scores are defined as
\[T_i^y = \Big(T\big((X_1,Y_1),\dots,(X_n,Y_n),(X_{n+1},y)\big)\Big)_i, \ i=1,\dots,n+1,\]
i.e., the scores of the $n+1$ data points, if we assume that the test point is equal to $(X_{n+1},y)$. Full conformal prediction satisfies the marginal coverage guarantee for i.i.d.\ (or exchangeable) data \citep{vovk2005algorithmic}.

Now we make the correspondence between calLCP and full CP precise. After pretraining the original score function $s=s(x,y)$ on a separate dataset, we define the map $T$ as follows:  for any dataset $(x_1,y_1),\dots,(x_{n+1},y_{n+1})$, the new score for the $i$th data point is given by
\[\Big(T\big((x_1,y_1),\dots,(x_{n+1},y_{n+1})\big)\Big)_i = \sum_{j=1}^{n+1} w_{i,j} \ind\{s(x_j,y_j) < s(x_i,y_i)\},\]
where
\[w_{i,j} = \frac{H(x_j,x_i)}{\sum_{k=1}^{n+1} H(x_k,x_i)}.\]
In Appendix~\ref{app:proof_prop:calLCP_fullCP}, we will prove that this is an equivalent representation of the calLCP method:
\begin{proposition}\label{prop:calLCP_fullCP}
The prediction interval $\hat{C}_n^{\textnormal{calLCP}}(X_{n+1})$ returned by calLCP, is always equal to the prediction interval $\hat{C}^{\textnormal{fullCP}}_n(X_{n+1})$ returned by full conformal prediction, when implemented with the score map $T$ defined as above.
\end{proposition}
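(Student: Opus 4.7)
The plan is to show that each side gives the same inclusion criterion for $y$, by translating both the calLCP condition and the full-CP condition into the common language of the ``new'' scores $T_i^y$.

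First I would derive a key reformulation of the inner quantile inequality used in the definition of $\tilde\alpha$. Writing $S_i = s(X_i, Y_i^y)$, the discrete CDF $q \mapsto \sum_j w_{i,j}\ind\{S_j \le q\}$ is right-continuous and constant on intervals between adjacent $S_k$'s, so
$$
S_i \le \textnormal{Quantile}_{1-a}\!\left(\sum_j w_{i,j}\delta_{S_j}\right) \iff \sum_j w_{i,j}\ind\{S_j < S_i\} < 1-a \iff T_i^y < 1-a.
$$
Applying this identity to each index $i \in [n+1]$, the definition~\eqref{eqn:definition of true alpha_tilde} of $\tilde\alpha(X_{n+1},y)$ becomes
$$
\tilde\alpha(X_{n+1},y) = \max\Bigl\{a \in \Gamma(w) : \bigl|\{i : T_i^y < 1-a\}\bigr| \ge (1-\alpha)(n+1)\Bigr\},
$$
and the calLCP inclusion condition $s(X_{n+1},y) \le \hat q_{1-\tilde\alpha}(X_{n+1})$ reads simply $T_{n+1}^y < 1-\tilde\alpha(X_{n+1},y)$.

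Next I would substitute $b = 1-a$. The set $\Gamma(w)$ is closed under complementation, i.e.\ $1-\Gamma(w) = \Gamma(w)$, because for each $i$ and each $I \subseteq [n+1]$ the weights $w_{i,\cdot}$ sum to $1$, so $1 - \sum_{j\in I} w_{i,j} = \sum_{j \in I^c} w_{i,j} \in \Gamma(w)$. Therefore $1-\tilde\alpha = b^\star$, where
$$
b^\star = \min\Bigl\{b \in \Gamma(w) : \bigl|\{i : T_i^y < b\}\bigr| \ge (1-\alpha)(n+1)\Bigr\},
$$
and calLCP includes $y$ iff $T_{n+1}^y < b^\star$. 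Crucially, each $T_i^y$ lies in $\Gamma(w)$ (as a partial sum of $w_{i,\cdot}$), and in particular $T_{n+1}^y \in \Gamma(w)$.

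Setting $m = \lceil (1-\alpha)(n+1)\rceil$ and denoting by $T_{(1)}^y \le \cdots \le T_{(n+1)}^y$ the sorted values, I would then argue the equivalence
$$
T_{n+1}^y < b^\star \iff T_{n+1}^y \le T_{(m)}^y.
$$
If $T_{n+1}^y \le T_{(m)}^y$, any competitor $b \in \Gamma(w)$ attaining the threshold must satisfy $b > T_{(m)}^y \ge T_{n+1}^y$, so $b^\star > T_{n+1}^y$. Conversely, if $T_{n+1}^y > T_{(m)}^y$, then $b = T_{n+1}^y$ is itself a feasible element of $\Gamma(w)$ (the count of strictly smaller $T_i^y$ already reaches $m$), so $b^\star \le T_{n+1}^y$. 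Finally, $T_{n+1}^y \le T_{(m)}^y$ is exactly the full-CP inclusion rule in~\eqref{eqn:fullCP}, since the discrete quantile $\textnormal{Quantile}_{1-\alpha}(T_1^y,\ldots,T_{n+1}^y)$ equals $T_{(m)}^y$. Combining the three equivalences yields the proposition.

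The only nontrivial step is the quantile reformulation $S_i \le \textnormal{Quantile}_{1-a}(\cdots) \iff T_i^y < 1-a$; the two main subtleties, namely the strict-versus-nonstrict inequality and the restriction of $a$ to the discrete set $\Gamma(w)$, both resolve cleanly thanks to the symmetry $\Gamma(w) = 1-\Gamma(w)$ and the fact that $T_{n+1}^y$ itself belongs to $\Gamma(w)$. Verification that $T$ commutes with permutations (so that this really is a valid full-conformal score) is immediate from the symmetric definition of $w_{i,j}$ under joint relabeling of the data.
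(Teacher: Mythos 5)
Your proof is correct and follows essentially the same route as the paper's: you establish the key pointwise equivalence $S_i \le \textnormal{Quantile}_{1-a}(\sum_j w_{i,j}\delta_{S_j}) \iff T_i^y < 1-a$, rewrite $\tilde\alpha(X_{n+1},y)$ in terms of the transformed scores, and then use the facts that each $T_i^y$ lies in the grid $\Gamma(w)$ and that $\Gamma(w)$ is closed under $a \mapsto 1-a$ to turn the strict inequality against $1-\tilde\alpha$ into the non-strict quantile comparison of full CP. The only cosmetic difference is bookkeeping (you explicitly introduce $b^\star = 1-\tilde\alpha$ as a minimum and the order statistic $T^y_{(m)}$ where the paper argues directly with the quantile), and you add a brief, correct remark that $T$ is permutation-equivariant, which the paper leaves implicit.
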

Consequently, the marginal coverage guarantee for calLCP follows as a corollary: since full CP satisfies marginal coverage, calLCP must therefore inherit this property, since it is an instance of full CP. 
On the other hand, establishing conditional coverage type properties is quite
challenging for calLCP (indeed, this is often the case for full CP in general), while the randomized construction of RLCP opens the door to the local theoretical guarantees that we have established above.

\subsection*{Acknowledgements}
R.F.B. was partially supported by the Office of Naval Research via grant N00014-20-1-2337, and by the National Science Foundation via grant DMS-2023109.
The authors are grateful to Emmanuel Cand{\`e}s, Aaditya Ramdas, and Ryan Tibshirani for helpful discussions.

\bibliographystyle{apalike}
\bibliography{ref}
\appendix 
\renewcommand{\theequation}{\thesection.\arabic{equation}}
\renewcommand{\thelemma}{\thesection.\arabic{lemma}}
\renewcommand{\thefigure}{\thesection.\arabic{figure}}
\renewcommand{\thetheorem}{\thesection.\arabic{theorem}}
\renewcommand{\theproposition}{\thesection.\arabic{proposition}}

\section{Proofs and few additional results}

\subsection{Completing the proof of Theorem \ref{thm:covariate_shift}}\label{app:proof_thm:covariate_shift}
With the calculations shown in the proof sketch (in Section~\ref{sec:insights_to_results}) in place, we now need to verify two claims: we need to prove the bound~\eqref{eqn:preview_TV_bound} on the expected total variation distance, and we need to prove the claim in the theorem itself.

First we verify that~\eqref{eqn:preview_TV_bound} holds. First,  by definition, $P_X\circ g\circ H(\cdot,\tX)$ is absolutely continuous with respect to $P_X\circ H(\cdot,\tX)$, and we can compute the Radon--Nikodym derivative relating these two distributions as
\[\frac{\mathsf{d}(P_X\circ g\circ H(\cdot,\tX))(x)}{\mathsf{d}(P_X\circ H(\cdot,\tX))(x)} = g(x) \cdot \frac{\int_{\calX}H(x',\tX)\;\mathsf{d}P_X(x')}{\int_{\calX}g(x')H(x',\tX)\;\mathsf{d}P_X(x')}= \frac{g(x)}{\E_{X'\sim P_X\circ H(\cdot,\tX)}[g(X')]}.\]
Therefore, treating $\tX$ as fixed in all the following calculations, we have
\begin{multline*}\textnormal{d}_{\textnormal{TV}}\left(P_X\circ H(\cdot,\tX), P_X\circ g\circ H(\cdot, \tX)\right) = \frac{1}{2}\E_{P_X\circ H(\cdot,\tX)}\left[ \left| \frac{g(X)}{\E_{X'\sim P_X\circ H(\cdot,\tX)}[g(X')]} - 1 \right|\right] \\
=\frac{\E_{X\sim P_X\circ H(\cdot,\tX)}\left[ | g(X) - \E_{X'\sim P_X\circ H(\cdot,\tX)}[g(X')]|\right]}{2\E_{P_X\circ H(\cdot,\tX)}[g(X)]}\leq\frac{\E_{X,X'\stackrel{\textnormal{iid}}{\sim}P_X\circ H(\cdot,\tX)}\left[ | g(X) - g(X')|\right]}{2\E_{P_X\circ H(\cdot,\tX)}[g(X)]}\\ = 
\E_{X,X'\stackrel{\textnormal{iid}}{\sim}P_X\circ H(\cdot,\tX)}\left[ | g(X) - g(X')|\right] \cdot \frac{\E_{P_X}[H(X,\tX)]}{2\E_{P_X}[g(X)H(X,\tX)]},
\end{multline*}
where the inequality holds by Jensen's inequality.
Next we need to marginalize over $\tX$, i.e., $\tP_{\tX}$. We have defined this distribution as the marginal distribution of $\tX$ when we first draw $X\sim P_X\circ g$, then draw $\tX\sim H(X,\cdot)$ (where we recall that $H(X,\cdot)$ is a density with respect to some base measure $\nu$), meaning that $\tP_{\tX}$  has density at $x$ given by
$\E_{P_X\circ g}[H(X,x)] = \frac{\E_{P_X}[g(X)H(X,x)]}{\E_{P_X}[g(X)]}$.
Consequently,
\begin{multline*}\E_{\tP_{\tX}}\left[ \textnormal{d}_{\textnormal{TV}}\left(P_X\circ H(\cdot,\tX), P_X\circ g\circ H(\cdot, \tX)\right)\right]\\ 
\leq \int_{\calX} \frac{\E_{P_X}[g(X)H(X,x)]}{\E_{P_X}[g(X)]} \cdot \E_{X,X'\stackrel{\textnormal{iid}}{\sim}P_X\circ H(\cdot,x)}\left[ | g(X) - g(X')|\right] \cdot \frac{\E_{P_X}[H(X,x)]}{2\E_{P_X}[g(X)H(X,x)]}\;\mathsf{d}\nu(x)\\
= \int_{\calX} \E_{X,X'\stackrel{\textnormal{iid}}{\sim}P_X\circ H(\cdot,x)}\left[ | g(X) - g(X')|\right] \cdot \frac{\E_{P_X}[H(X,x)]}{2\E_{P_X}[g(X)]}\;\mathsf{d}\nu(x)\\
= \frac{\E_{P_{\tX}}\left[\E_{X,X'\stackrel{\textnormal{iid}}{\sim}P_{X|\tX}}\left[ | g(X) - g(X')|\right]\right]}{2\E_{P_X}[g(X)]},
\end{multline*}
where for the last step, we use the marginal distribution $P_{\tX}$ of $\tX$ under the joint distribution $P_{(X,\tX)}$, where we first draw $X\sim P_X$ and then $\tX\sim H(X,\cdot)$; this distribution has density $\E_{P_X}[H(X,x)]$ with respect to $\nu$.
Finally, standard calculations show that
\[\E_{X,X'\stackrel{\textnormal{iid}}{\sim}P_{X|\tX}}\left[ | g(X) - g(X')|\right]\leq\E_{X,X'\stackrel{\textnormal{iid}}{\sim}P_{X|\tX}}\left[ (g(X) - g(X'))^2\right]^{1/2} = \sqrt{2\textnormal{Var}_{P_{X|\tX}}(g(X))},\]
which completes the proof of~\eqref{eqn:preview_TV_bound}.

Now we need to verify the proof of the upper bound in the theorem. Returning to our calculations above, it suffices to prove that
\[\frac{1}{2}\E_{P_{\tX}}\left[\E_{X,X'\stackrel{\textnormal{iid}}{\sim}P_{X|\tX}}\left[ | g(X) - g(X')|\right]\right] \leq \epsilon \cdot L_{g,2\epsilon,A}  + \|g\|_\infty \cdot P_{(X,\tX)}\{\|X-\tX\|>\epsilon\}+ \|g\|_\infty\cdot  P_X(A^c) ,\]
for any $\eps>0$ and any $A\subseteq\calX$. By definition of $L_{g,2\eps,A}$, we can write
\[\frac{1}{2}| g(X) - g(X')| \leq \begin{cases}\frac{2\eps\cdot L_{g,2\eps,A}}{2} , & \textnormal{ if $X,X'\in A$ and $\|X-X'\|\leq 2\eps$,}\\ \|g\|_\infty/2, & \textnormal{ otherwise},\end{cases}\]
where for the second case we recall that $g$ is nonnegative and so $|g(x) - g(x')|\leq \|g\|_{\infty}$ for any $x,x'$.
Therefore,
\begin{multline*}\frac{1}{2}\E_{P_{\tX}}\left[\E_{X,X'\stackrel{\textnormal{iid}}{\sim}P_{X|\tX}}\left[ | g(X) - g(X')|\right]\right]\\
\leq \eps \cdot L_{g,2\eps,A} + \frac{\|g\|_\infty}{2}\cdot \E_{P_{\tX}}\left[\P_{X,X'\stackrel{\textnormal{iid}}{\sim}P_{X|\tX}}\{X\not\in A\textnormal{ or }X'\not\in A\textnormal{ or }\|X-\tX\|>\eps \textnormal{ or }\|X'-\tX\|>\eps\}\right]\\\leq \eps \cdot L_{g,2\eps,A} + \frac{\|g\|_\infty}{2}\cdot 
2\E_{P_{\tX}}\left[\P_{X\sim P_{X|\tX}}\{X\not\in A\textnormal{ or }\|X-\tX\|>\eps\}\right]\\
\leq  \eps \cdot L_{g,2\eps,A} + \|g\|_\infty\cdot \P_{P_{(X,\tX)}}\{\|X-\tX\|>\eps\} +\|g\|_\infty\cdot  P_X(A^c),
\end{multline*}
which completes the proof.

\subsection{Training-conditional coverage}\label{app:training_conditional}
In this subsection, we will establish training-conditional coverage guarantees for RLCP. Towards that, we start with defining miscoverage of RLCP conditioned on the training data set $D_n=\{(X_i,Y_i)\}_{i\in[n]}$,\footnote{Note that, since the score function $s=s(x,y)$ is treated as fixed, in a practical setting where $s$ was fitted on a \emph{pretraining} set, and so we might refer to the data points $\{(X_i,Y_i)\}_{i\in[n]}$ as the \emph{calibration set} (as in our empirical results in Section~\ref{sec:experiment}). This means that we are interested in coverage conditional on both the pretraining and calibration datasets.}
\[\alpha_{\textnormal{tr}}(D_n)=\P\left\{Y_{n+1}\not\in \hat{C}_{n}^{\textnormal{RLCP}}(X_{n+1},\tX_{n+1})  \ \middle| \  D_n\right\}.\]
We would then like to show that $\alpha_{\textnormal{tr}}(D_n)\lessapprox\alpha$ holds with high probability, with respect to the random draw of the training data $D_n$. This type of result is well known to hold for split conformal prediction \citep{vovk2012conditional}. On the other hand, for full conformal prediction, \citet{bian2023training} prove a hardness result establishing that it is impossible to guarantee training-conditional coverage without further assumptions. Since the localized CP methods are closely related to full conformal prediction (as discussed in Section~\ref{sec:calLCP_fullCP}), it is not immediately clear whether a training-conditional guarantee can hold for a localized method. 

In this section, we will establish a training-conditional coverage guarantee for RLCP. First we will show a finite-sample result, and then will discuss how to interpret it from an asymptotic point of view.

\begin{theorem}\label{thm:training_conditional}
    For any $\delta>0$, the RLCP method satisfies the training-conditional coverage guarantee
    \[\P\left\{\alpha_{\textnormal{tr}}(D_n)\leq \alpha + \frac{2}{(\delta n)^{1/3}}\E\left[\frac{ H^2_2(\tX_{n+1})}{H_1(\tX_{n+1})^2}\right] ^{1/3} \right\} \geq 1-\delta,\]
    where for $x\in\calX$, we define
    \[H_1(x) = \E_{P_X}[H(X,x)], \ H_2^2(x) = \E_{P_X}[H(X,x)^2].\]
\end{theorem}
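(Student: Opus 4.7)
The plan is to condition on $\tX_{n+1}$, reduce the third moment of $(\alpha_{\textnormal{tr}}(D_n)-\alpha)_+$ to a variance bound at each fixed $\tx$, and apply Markov's inequality to obtain the $(n\delta)^{-1/3}$ rate. First, since $\tX_{n+1}$ is drawn from $H(X_{n+1},\cdot)$ with $X_{n+1}\sim P_X$ independently of $D_n$, its marginal density with respect to $\nu$ equals $H_1$, and $\int H_1\,\mathsf{d}\nu=1$ by~\eqref{eqn:condition_on_H}. Writing $\alpha(D_n,\tx)=\P\{Y_{n+1}\not\in\hat C_n^{\textnormal{RLCP}}(X_{n+1},\tx)\mid D_n,\tX_{n+1}=\tx\}$, the tower rule yields $\alpha_{\textnormal{tr}}(D_n)=\int\alpha(D_n,\tx)H_1(\tx)\,\mathsf{d}\nu(\tx)$, and Proposition~\ref{prop:key_property} applied conditionally on $\tX_{n+1}=\tx$ gives $\E_{D_n}[\alpha(D_n,\tx)]\leq\alpha$ for every $\tx$. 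Since $u\mapsto u_+^3$ is convex on $\RR$, Jensen's inequality against the probability measure $H_1\,\mathsf{d}\nu$ gives
\[
((\alpha_{\textnormal{tr}}(D_n)-\alpha)_+)^3\leq \int((\alpha(D_n,\tx)-\alpha)_+)^3 H_1(\tx)\,\mathsf{d}\nu(\tx).
\]
At each fixed $\tx$, $((\alpha(D_n,\tx)-\alpha)_+)^3\leq((\alpha(D_n,\tx)-\alpha)_+)^2$ because $(\cdot)_+\in[0,1]$, and $\E_{D_n}[((\alpha(D_n,\tx)-\alpha)_+)^2]\leq \textnormal{Var}_{D_n}(\alpha(D_n,\tx))$ via the elementary fact that $X_+\leq (X-\E X)_+$ whenever $\E X\leq 0$. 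Fubini then gives $\E_{D_n}[((\alpha_{\textnormal{tr}}(D_n)-\alpha)_+)^3]\leq\int\textnormal{Var}_{D_n}(\alpha(D_n,\tx))H_1(\tx)\,\mathsf{d}\nu(\tx)$.

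The main obstacle is the variance bound $\textnormal{Var}_{D_n}(\alpha(D_n,\tx))\leq 8\,H_2^2(\tx)/(nH_1^2(\tx))$ at each fixed $\tx$. Conditional on $\tX_{n+1}=\tx$, RLCP coincides exactly with weighted conformal prediction against the known covariate shift to $P_X\circ H(\cdot,\tx)$ with random weights $\tilde w_i(\tx)\propto H(X_i,\tx)$, as already exploited in the proof of Proposition~\ref{prop:key_property}. The training-conditional miscoverage of a weighted CP procedure concentrates at the rate $\sum_{i\leq n}\tilde w_i(\tx)^2$, whose expectation works out to $H_2^2(\tx)/(nH_1^2(\tx))$ up to a correction from the $\tilde w_{n+1}$ atom in the normalization. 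I expect that either an Efron--Stein leave-one-out analysis on the weighted $(1-\alpha)$-quantile (with single-point perturbations of order $\tilde w_i(\tx)$), or a direct deviation estimate for the weighted empirical score CDF against its target on the shifted distribution, will yield the desired bound once the absolute constant is calibrated to absorb the $\tilde w_{n+1}$ contribution.

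Granting the variance bound, integrating against $H_1\,\mathsf{d}\nu$ and using $\int(H_2^2(\tx)/H_1^2(\tx))H_1(\tx)\,\mathsf{d}\nu(\tx)=\E[H_2^2(\tX_{n+1})/H_1^2(\tX_{n+1})]$ produces $\E_{D_n}[((\alpha_{\textnormal{tr}}(D_n)-\alpha)_+)^3]\leq 8\,\E[H_2^2(\tX_{n+1})/H_1^2(\tX_{n+1})]/n$. Markov's inequality applied to the cube then gives
\[
\P\{\alpha_{\textnormal{tr}}(D_n)>\alpha+t\}\leq\frac{8\,\E[H_2^2(\tX_{n+1})/H_1^2(\tX_{n+1})]}{n\, t^3},
\]
and setting $t=2\bigl(\E[H_2^2(\tX_{n+1})/H_1^2(\tX_{n+1})]/(n\delta)\bigr)^{1/3}$ yields the stated training-conditional guarantee with probability at least $1-\delta$.
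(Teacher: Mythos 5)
Your approach is structurally different from the paper's and, as you yourself flag, it has a genuine gap: the variance bound $\textnormal{Var}_{D_n}(\alpha(D_n,\tx))\lesssim H_2^2(\tx)/(nH_1^2(\tx))$ is not established, and I do not believe it can be in the distribution-free setting. The reduction steps you do carry out (the integral representation of $\alpha_{\textnormal{tr}}(D_n)$ against $H_1\,\mathsf{d}\nu$, Jensen with $u\mapsto u_+^3$, the bound $\E[(X_+)^2]\leq\var(X)$ when $\E X\leq 0$, and the final Markov step) are all correct, and the intermediate quantity $\E_{D_n}[((\alpha(D_n,\tx)-\alpha)_+)^2]$ is indeed of order $H_2^2/(nH_1^2)$. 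But passing through $\var_{D_n}(\alpha(D_n,\tx))$ is strictly too lossy: because the threshold $\hat{q}_{1-\alpha}(X_{n+1},\tx)$ is a weighted quantile, the map $D_n\mapsto\alpha(D_n,\tx)$ can jump discontinuously as the weighted empirical CDF crosses $1-\alpha$, and in a distribution-free setting nothing prevents $\alpha(D_n,\tx)$ from toggling between two values that are $\Theta(1)$ apart with constant probability as $n\to\infty$. For example, take $H$ a box kernel and suppose the conditional score law near $\tx$ puts mass $1-\alpha$ at $s_0$ and mass $\alpha$ at $s_1>s_0$; then $\hat{q}$ equals $s_0$ or $s_1$ roughly equiprobably, and $\alpha(D_n,\tx)$ is $\alpha$ or $0$ correspondingly, giving $\var_{D_n}(\alpha(D_n,\tx))\approx\alpha^2/4$, which does not decay with the effective sample size. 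The inequality $\E[(X_+)^2]\leq\var(X)$ saves you in this particular example (since $\alpha(D_n,\tx)\leq\alpha$ a.s., the left side is $0$), but it shows your proposed route through the variance cannot produce the needed rate. Neither Efron--Stein on the weighted quantile nor a naive deviation estimate for the weighted CDF fixes this, because the bottleneck is the unbounded (indeed, possibly infinite) sensitivity of the population miscoverage to small perturbations of the threshold.

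The paper avoids this entirely by never touching $\var_{D_n}(\alpha(D_n,\tx))$. It fixes $\alpha^*>\alpha$, introduces the population $(1-\alpha^*)$-quantile $q^*(\tX_{n+1})$ of the score under the shifted law $(P_X\circ H(\cdot,\tX_{n+1}))\times P_{Y|X}$, and decomposes $\alpha_{\textnormal{tr}}(D_n)\leq\alpha^*+\P\{q^*(\tX_{n+1})>\hat{q}_{1-\alpha}(X_{n+1},\tX_{n+1})\mid D_n\}$. The event $\{q^*>\hat{q}\}$ reduces to a one-sided concentration event $\sum_{i\leq n}H(X_i,\tX_{n+1})\big(\ind\{s(X_i,Y_i)<q^*\}-(1-\alpha)\big)\geq 0$, where each summand has mean $\leq -(\alpha^*-\alpha)H_1(\tX_{n+1})$ and variance $\leq H_2^2(\tX_{n+1})$, so Chebyshev applies cleanly. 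Markov in $\delta$ and optimizing over $\alpha^*$ then produce exactly the $(n\delta)^{-1/3}$ rate. If you want to salvage your framework, you would need to replace the variance step with a one-sided moment bound on $\E_{D_n}[((\alpha(D_n,\tx)-\alpha)_+)^2]$ established directly --- which is essentially what the paper's $q^*$ argument achieves, so there is little to gain over following that route.
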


Now we consider an asymptotic view---when can we expect the excess error term to be vanishing as $n\rightarrow\infty$? For example, in the case $\calX=\RR$, under suitable regularity conditions on $P_X$, if at sample size $n$ we take $H$ to be  the normalized box kernel with bandwidth $h_n$, then we would expect to have $H_1(\tX_{n+1})\asymp 1$ and $H_2^2(\tX_{n+1})\asymp h_n^{-1}$. Consequently, the upper bound on $\alpha_{\textnormal{tr}}(D_n)$ is of the form $\alpha + O_P\left((nh_n)^{-1/3}\right)$. As long as $nh_n\rightarrow\infty$, then, we expect to see asymptotic training-conditional coverage.

\begin{proof}[Proof of Theorem~\ref{thm:training_conditional}]
Fix any $\alpha^*>\alpha$.
Condition on the value of $\tX_{n+1}$, and  let $q^*(\tX_{n+1})$ be the $(1-\alpha^*)$-quantile of the distribution of $s(X,Y)$ under the distribution $(X,Y)\sim (P_X\circ H(\cdot,\tX_{n+1}))\times P_{Y|X}$. Then by our calculation of the conditional distribution of the test point $(X_{n+1},Y_{n+1})$ given $\tX_{n+1}$ (as in~\eqref{eqn:RLCP_is_covariate_shift}) we have 
\[\P\{s(X_{n+1},Y_{n+1})\leq q^*(\tX_{n+1})\mid \tX_{n+1}\}\geq 1-\alpha^*\textnormal{ almost surely}.\]
Moreover since $(X_{n+1},Y_{n+1},\tX_{n+1})$ is independent of the training data $D_n=\{(X_i,Y_i)\}_{i\in[n]}$ we can equivalently write this as 
\[\P\{s(X_{n+1},Y_{n+1})\leq q^*(\tX_{n+1})\mid \tX_{n+1},D_n\}\geq 1-\alpha^*\textnormal{ almost surely}.\]
We then calculate
\begin{align*}
    \alpha_{\textnormal{tr}}(D_n)
    &=\P\left\{Y_{n+1}\not\in \hat{C}_{n}^{\textnormal{RLCP}}(X_{n+1},\tX_{n+1})  \ \middle| \  D_n\right\}\\
    &=\P\left\{s(X_{n+1},Y_{n+1}) > \hat{q}_{1-\alpha}(X_{n+1},\tX_{n+1})  \ \middle| \  D_n\right\}\\    &=\E\left[\P\left\{s(X_{n+1},Y_{n+1}) > \hat{q}_{1-\alpha}(X_{n+1},\tX_{n+1})  \ \middle| \  \tX_{n+1},D_n\right\} \ \middle| \  D_n\right]\\
    &\leq \E\bigg[\P\left\{s(X_{n+1},Y_{n+1}) > q^*(\tX_{n+1})  \ \middle| \  \tX_{n+1},D_n\right\}\\&\hspace{1in} + \P\left\{q^*(\tX_{n+1}) > \hat{q}_{1-\alpha}(X_{n+1},\tX_{n+1})  \ \middle| \  \tX_{n+1},D_n\right\}\ \bigg| \ D_n\bigg]\\
    &\leq \alpha^* + \E\left[ \P\left\{q^*(\tX_{n+1}) > \hat{q}_{1-\alpha}(X_{n+1},\tX_{n+1}) \ \bigg| \ \tX_{n+1},D_n\right\}\ \bigg| \ D_n\right]\\
    &= \alpha^* + \P\left\{q^*(\tX_{n+1}) > \hat{q}_{1-\alpha}(X_{n+1},\tX_{n+1}) \ \bigg| \ D_n\right\}.
\end{align*}
Moreover, by Markov's inequality, for any $\delta>0$ we have
\[\P\left\{q^*(\tX_{n+1}) > \hat{q}_{1-\alpha}(X_{n+1},\tX_{n+1})  \ \middle| \  D_n\right\} \leq \delta^{-1} \P\left\{q^*(\tX_{n+1}) > \hat{q}_{1-\alpha}(X_{n+1},\tX_{n+1})\right\}\]
with probability at least $1-\delta$.
We have therefore shown that
\[\P\left\{\alpha_{\textnormal{tr}}(D_n)\leq \alpha^* + \delta^{-1} \P\left\{q^*(\tX_{n+1}) > \hat{q}_{1-\alpha}(X_{n+1},\tX_{n+1})\right\} \right\} \geq 1-\delta.\]

Now we need to bound the term $\P\left\{q^*(\tX_{n+1}) > \hat{q}_{1-\alpha}(X_{n+1},\tX_{n+1})\right\}$. By definition of the quantile $\hat{q}_{1-\alpha}(X_{n+1},\tX_{n+1})$, on the event $q^*(\tX_{n+1}) > \hat{q}_{1-\alpha}(X_{n+1},\tX_{n+1})$ we must have
\[\sum_{i=1}^n \tilde{w}_i \ind\{s(X_i,Y_i) < q^*(\tX_{n+1})\} \geq 1-\alpha,\]
for weights
\[\tilde{w}_i = \frac{H(X_i,\tX_{n+1})}{\sum_{j=1}^{n+1}H(X_j,\tX_{n+1})}.\]
Equivalently, we must have
\[\sum_{i=1}^n H(X_i,\tX_{n+1}) \ind\{s(X_i,Y_i) < q^*(\tX_{n+1})\} \geq (1-\alpha)\sum_{i=1}^{n+1} H(X_i,\tX_{n+1}),\]
which implies
\[\sum_{i=1}^n H(X_i,\tX_{n+1}) \left( \ind\{s(X_i,Y_i) < q^*(\tX_{n+1})\} - (1-\alpha)\right) \geq (1-\alpha)H(X_{n+1},\tX_{n+1})\geq 0.\]
Next, conditional on $\tX_{n+1}$, the terms in the sum are independent, and each term $i$ has variance
\begin{multline*}\textnormal{Var}\left( H(X_i,\tX_{n+1}) \left( \ind\{s(X_i,Y_i) < q^*(\tX_{n+1})\} - (1-\alpha)\right) \ \middle| \  \tX_{n+1}\right) 
\\\leq \E\left[H(X_i,\tX_{n+1})^2 \left( \ind\{s(X_i,Y_i) < q^*(\tX_{n+1})\} - (1-\alpha)\right)^2 \ \middle| \  \tX_{n+1}\right]\\\leq \E[H(X_i,\tX_{n+1})^2 \mid \tX_{n+1}] = H^2_2(\tX_{n+1}). \end{multline*}
We can also calculate, for any $x\in\calX$ and any $q\in\RR$,
\begin{align*}
\E[H(X_i,x)\ind\{s(X_i,Y_i)<q\}]
&=\E_{P_X\times P_{Y|X}}[H(X,x)\ind\{s(X,Y)<q\}]\\
&=\P_{(P_X\circ H(\cdot,x))\times P_{Y|X}}\{s(X,Y)<q\}\cdot \E_{P_X}[H(X,x)],
\end{align*}
and so, plugging in $x=\tX_{n+1}$ and $q=q^*(\tX_{n+1})$, we have
\begin{multline*}
    \E[H(X_i,\tX_{n+1})\ind\{s(X_i,Y_i)<q^*(\tX_{n+1}) \mid \tX_{n+1}]\\
    = \P_{P_X\circ H(\cdot,\tX_{n+1}))\times P_{Y|X}}\{s(X,Y)<q^*(\tX_{n+1})\mid \tX_{n+1}\} \cdot \E_{P_X}[H(X,\tX_{n+1})\mid \tX_{n+1}]\\
    = \P\{s(X_{n+1},Y_{n+1})<q^*(\tX_{n+1})\mid \tX_{n+1}\} \cdot \E[H(X_i,\tX_{n+1})\mid \tX_{n+1}]\\
    \leq (1-\alpha^*) \cdot \E[H(X_i,\tX_{n+1})\mid \tX_{n+1}].
\end{multline*}
Therefore, for each $i$,
\begin{multline*}\E\left[H(X_i,\tX_{n+1}) \left( \ind\{s(X_i,Y_i) < q^*(\tX_{n+1})\} - (1-\alpha)\right) \ \middle| \  \tX_{n+1}\right] \\\leq -(\alpha^*-\alpha) \E[H(X_i,\tX_{n+1})\mid \tX_{n+1}] = -(\alpha^*-\alpha) \cdot H_1(\tX_{n+1}).
\end{multline*}
Applying Chebyshev's inequality, then,
\[\P\left\{\sum_{i=1}^n H(X_i,\tX_{n+1}) \left( \ind\{s(X_i,Y_i) < q^*(\tX_{n+1})\} - (1-\alpha)\right) \geq 0 \ \middle| \   \tX_{n+1}\right\} \leq n^{-1}\cdot \frac{ H^2_2(\tX_{n+1})}{(\alpha^*-\alpha)^2 H_1(\tX_{n+1})^2}.\]

Combining everything, we have therefore shown that
\[\P\left\{q^*(\tX_{n+1}) > \hat{q}_{1-\alpha}(X_{n+1},\tX_{n+1})\right\}\leq \frac{1}{n(\alpha^*-\alpha)^2} \cdot \E\left[\frac{ H^2_2(\tX_{n+1})}{H_1(\tX_{n+1})^2}\right],\]
and so,
\[\P\left\{\alpha_{\textnormal{tr}}(D_n)\leq \alpha^* + \frac{1}{\delta n(\alpha^*-\alpha)^2} \cdot \E\left[\frac{ H^2_2(\tX_{n+1})}{H_1(\tX_{n+1})^2}\right]\right\} \geq 1-\delta.\]
Choosing
\[\alpha^* = \alpha + \frac{1}{(\delta n)^{1/3}}\E\left[\frac{ H^2_2(\tX_{n+1})}{H_1(\tX_{n+1})^2}\right] ^{1/3}, \]
we have established that
\[\P\left\{\alpha_{\textnormal{tr}}(D_n)\leq \alpha + \frac{2}{(\delta n)^{1/3}}\E\left[\frac{ H^2_2(\tX_{n+1})}{H_1(\tX_{n+1})^2}\right] ^{1/3} \right\} \geq 1-\delta,\]
as desired.

\end{proof}

\subsection{Asymptotic test-conditional coverage}\label{app:asymptotic_test_conditional}
Since the model-free conditional coverage guarantee is impossible to attain, it is a common practice in literature to instead prove asymptotic test-conditional coverage. In this subsection, we establish such guarantees for RLCP prediction intervals.

As in Section~\ref{sec:coverage_on_balls}, we consider an asymptotic regime where the distribution $P$ is fixed and the marginal $P_X$ has density $f_X$ with respect to Lebesgue measure on $\RR^d$ (meaning that, implicitly, dimension $d$ is also fixed), while taking $n\to\infty$ and using a normalized box kernel $H$ with bandwidth $h_n$ to implement RLCP. In this section, instead of considering coverage conditional on an event $X_{n+1}\in\mathbb{B}(x_0,r_n)$ (i.e., conditioning on increasingly localized balls, as $r_n\to 0$), we will instead see that under some additional regularity assumptions we can condition on $X_{n+1}=x_0$.
As before, let $s$ be a pretrained score function. We will write $P_{S|X}$ to denote the distribution of $s(X,Y)$ conditional on $X$, under the distribution $(X,Y)\sim P$. We also assume that, marginally over $(X,Y)\sim P$, the distribution of $s(X,Y)$ is nonatomic.

In this regime, the following result holds for \emph{pointwise} test-conditional coverage:
\begin{theorem}\label{thm:asymptotic_test_conditional}
    Under the notation and assumptions above,
    \begin{multline}\label{eqn:asymptotic_test_conditional}\left|\P\left\{Y_{n+1}\in\hat{C}_n^{\textnormal{RLCP}}(X_{n+1},\tX_{n+1})  \ \middle| \  X_{n+1}=x_0\right\} - (1-\alpha)\right|\\ \leq \frac{1}{n+1} + \frac{1}{n(h_n)^d \cdot V_d \cdot \inf_{x\in\mathbb{B}(x_0,2h_n)}f_X(x)} + 2\sup_{x\in\mathbb{B}(x_0,2h_n)}\textnormal{d}_{\textnormal{TV}}(P_{S|X}(\cdot|x_0), P_{S|X}(\cdot |x)).\end{multline}
    In particular, if we assume $h_n\to 0$, and $h_n\cdot n^{1/d}\to \infty$, and if we also assume that $f_X(x)$ is bounded away from zero in a neighbourhood of $x=x_0$, and that that $x\mapsto P_{S|X}(\cdot|x)$ is continuous at $x=x_0$ with respect to  the total variation distance, meaning that
    \[\sup_{\|x-x_0\|_2\leq\epsilon} \textnormal{d}_{\textnormal{TV}}(P_{S|X}(\cdot|x_0), P_{S|X}(\cdot |x)) \to 0\textnormal{ as }\epsilon\to 0,\]
    then it holds that
    \[ \P\left\{Y_{n+1}\in\hat{C}_n^{\textnormal{RLCP}}(X_{n+1},\tX_{n+1})  \ \middle| \  X_{n+1}=x_0\right\} \to 1-\alpha.\]
\end{theorem}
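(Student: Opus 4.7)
The plan is to reduce pointwise coverage at $X_{n+1}=x_0$ to a controlled instance of weighted conformal prediction by conditioning on the auxiliary random point $\tX_{n+1}$, and then to bound the remaining gap by a single total-variation comparison between the local score distribution at $x_0$ and the score distribution that the quantile step actually sees. First I would condition on $\tX_{n+1}=\tilde x$. Since $H$ is the normalized box kernel of bandwidth $h_n$, almost surely $\tilde x\in \mathbb{B}(x_0,h_n)$, so $\mathbb{B}(\tilde x,h_n)\subseteq \mathbb{B}(x_0,2h_n)$. The RLCP weights collapse to $\tilde w_i=\ind\{X_i\in\mathbb{B}(\tilde x,h_n)\}/(N(\tilde x)+1)$ for $i\in[n]$ and $\tilde w_{n+1}=1/(N(\tilde x)+1)$, where $N(\tilde x)=\#\{i\in[n]:X_i\in\mathbb{B}(\tilde x,h_n)\}$; so $\hat q_{1-\alpha}(x_0,\tilde x)=+\infty$ when $N(\tilde x)=0$ and otherwise equals the $\lceil(1-\alpha)(N(\tilde x)+1)\rceil$-th order statistic of the scores $\{s(X_i,Y_i):X_i\in\mathbb{B}(\tilde x,h_n)\}$. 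Conditional on $\tilde x$ and $N(\tilde x)=N\geq 1$, these scores are i.i.d.\ from the distribution $\bar P_{\tilde x}$ of $s(X,Y)$ under $(X,Y)\sim (P_X\circ H(\cdot,\tilde x))\times P_{Y|X}$, while the test score $S=s(x_0,Y_{n+1})$ is drawn independently from $P_{S|X}(\cdot|x_0)$.

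Next I would compare to an ``ideal'' model in which the test score $\tilde S$ is drawn from $\bar P_{\tilde x}$ rather than $P_{S|X}(\cdot|x_0)$. In this model $(\tilde S,S_1,\dots,S_N)$ are exchangeable and (modulo nonatomicity) tie-free almost surely, so a standard rank-uniformity argument gives $\P_{\mathrm{ideal}}\{\tilde S\leq \hat q_{1-\alpha}(x_0,\tilde x)\mid\tilde x,N\}=\lceil(1-\alpha)(N+1)\rceil/(N+1)\in[1-\alpha,1-\alpha+1/(N+1)]$. Since the two models differ only in the marginal distribution of the test score, the total-variation coupling inequality yields
\[\bigl|\P\{S\leq\hat q_{1-\alpha}(x_0,\tilde x)\mid\tilde x,N\}-\P_{\mathrm{ideal}}\{\tilde S\leq\hat q_{1-\alpha}(x_0,\tilde x)\mid\tilde x,N\}\bigr|\leq \textnormal{d}_{\textnormal{TV}}(P_{S|X}(\cdot|x_0),\bar P_{\tilde x}).\]
Because $\bar P_{\tilde x}$ is a mixture of $P_{S|X}(\cdot|x)$ over $x\in\mathbb{B}(\tilde x,h_n)\subseteq \mathbb{B}(x_0,2h_n)$, convexity of total variation dominates the right-hand side by $\sup_{x\in\mathbb{B}(x_0,2h_n)}\textnormal{d}_{\textnormal{TV}}(P_{S|X}(\cdot|x_0),P_{S|X}(\cdot|x))$, matching the third summand in~\eqref{eqn:asymptotic_test_conditional}.

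Finally I would take expectations over $(\tilde x,N)$. Given $\tilde x\in \mathbb{B}(x_0,h_n)$, $N(\tilde x)\sim \mathrm{Bin}(n,p(\tilde x))$ with $p(\tilde x)=P_X(\mathbb{B}(\tilde x,h_n))\geq V_dh_n^d\inf_{x\in \mathbb{B}(x_0,2h_n)}f_X(x)$, and the identity $\E[1/(N+1)]=(1-(1-p)^{n+1})/((n+1)p)\leq 1/((n+1)p)$ delivers the second summand in~\eqref{eqn:asymptotic_test_conditional}; the $1/(n+1)$ summand absorbs the edge case $N(\tilde x)=0$ (where coverage is $1$) and slack from the rank-uniformity argument. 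The asymptotic corollary is then immediate: $h_n\to 0$ together with TV continuity of $x\mapsto P_{S|X}(\cdot|x)$ at $x_0$ makes the third term vanish; $h_n\cdot n^{1/d}\to\infty$ together with $f_X$ bounded away from zero near $x_0$ forces $n h_n^d V_d\inf_{\mathbb{B}(x_0,2h_n)}f_X(x)\to\infty$ and kills the second term; and $1/(n+1)\to 0$ trivially. The main technical obstacle I anticipate is reconciling the nonatomicity hypothesis: the theorem only assumes that the \emph{marginal} score distribution is nonatomic, while the exchangeability step wants no ties in an i.i.d.\ sample from $\bar P_{\tilde x}$. One either strengthens the assumption, or argues that the TV continuity at $x_0$ forces $\bar P_{\tilde x}$ to be close to $P_{S|X}(\cdot|x_0)$ and absorbs any atomic contribution into the TV term (possibly via a randomized tie-breaking device when defining the quantile).
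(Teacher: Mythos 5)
Your proposal follows essentially the same route as the paper's proof: condition on $\tX_{n+1}$, exploit the box-kernel structure to reduce the weighted quantile to an order statistic of the scores in $\mathbb{B}(\tX_{n+1},h_n)\subseteq\mathbb{B}(x_0,2h_n)$, compare against an ``ideal'' exchangeable sample via a total-variation coupling (this is exactly the paper's Lemma~\ref{lem:dtv_quantile}), push the TV bound through the mixture using convexity, and control $\E[1/(N_n+1)]$ via the binomial tail. The nonatomicity issue you flag at the end is not actually an obstacle and requires neither a strengthened hypothesis nor randomized tie-breaking: the local score law $\tilde{P}_{S|\tX}(\cdot\mid\tilde x)$ is absolutely continuous with respect to the marginal law of $s(X,Y)$, which is assumed nonatomic, so it is automatically nonatomic — this is the one-line observation the paper uses. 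Your remark that the $1/(n+1)$ summand ``absorbs the $N=0$ edge case'' is slightly off (the $\E[1/(N_n+1)]$ bound already covers $N_n=0$, since then $1/(N_n+1)=1\geq\alpha$), but the theorem's stated bound has some harmless slack there anyway, so this does not affect correctness.
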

 (As before,  $V_d$ is the volume of the unit ball in $\RR^d$, while $d_{\textnormal{TV}}$ denotes the total variation distance.)

\begin{proof}
    Let $S_i = s(X_i,Y_i)$  for $i\in [n+1]$. Since we are using a box kernel, the quantile cutoff of RLCP prediction method only uses the training samples that lies within a $h_n$ radius ball of $\tX_{n+1}$. Define 
\[
I(\tX_{n+1})=\{i\in[n]:\|X_i-\tX_{n+1}\|_2\leq h_n\},
\]
and $N_n = \sum_{i=1}^n \ind\{\|X_i-\tX_{n+1}\|_2\leq h_n\} = |I(\tX_{n+1})|$ be its cardinality. Without loss of generality, let us enumerate the elements of $I(\tX_{n+1})$ as $\{i_1,\hdots,i_{N_n}\}$.
Hence, the quantile cutoff can be simplified to \[
\hat q_{1-\alpha}(X_{n+1},\tX_{n+1})=\textnormal{Quantile}_{1-\alpha} \left(S_{i_1},\hdots,S_{i_{N_n}}, S_{n+1}
\right).
\]
By a simple application of the tower law, we note that
\begin{multline*}
    \P\{Y_{n+1} \in \hat{C}_n^{\textnormal{RLCP}}(X_{n+1},\tX_{n+1})\mid X_{n+1}=x_0\}=\\ \E\left[\P\{S_{n+1}\leq\textnormal{Quantile}_{1-\alpha} \left(S_{i_1},\hdots,S_{i_{N_n}}, S_{n+1}\right)\mid I(\tX_{n+1}),\tX_{n+1}, X_{n+1}=x_0\} \ \middle| \  X_{n+1}=x_0\right].
\end{multline*}
Now, we will concentrate on the inner probability term. Conditional on $I(\tX_{n+1})$, $\tX_{n+1}$ and on $X_{n+1}=x_0$, the scores $S_{i_1},\dots,S_{i_{N_n}}$ are sampled i.i.d.\ from the distribution $\tilde P_{S|\tX}(\cdot \mid \tX_{n+1})$, where $\tilde P_{S\mid \tX}(\cdot\mid x)$ is  defined by
\[\begin{cases}X \sim P_X\circ H(\cdot, x),\\ S\mid X\sim P_{S|X}(\cdot \mid X),\end{cases}\]
while $S_{n+1}$ is drawn from $P_{S|X}(\cdot\mid x_0)$ (independently of the other scores). Since $\tilde P_{S\mid \tX}(\cdot\mid x)$ is absolutely continuous with respect to the marginal distribution of $s(X,Y)$, it must be nonatomic. By Lemma~\ref{lem:dtv_quantile} below, we have
\begin{multline*}\left|\P\{S_{n+1}\leq\textnormal{Quantile}_{1-\alpha} \left(S_{i_1},\hdots,S_{i_{N_n}}, S_{n+1}\right)\mid I(\tX_{n+1}),\tX_{n+1}, X_{n+1}=x_0\} - (1-\alpha)\right|\\\leq \frac{1}{N_n+1} + \textnormal{d}_{\textnormal{TV}}\left( \tilde P_{S|\tX}(\cdot\mid \tX_{n+1}) , P_{S|X}(\cdot\mid x_0)\right).\end{multline*}
Therefore,
\begin{multline*}\left|\P\{Y_{n+1} \in \hat{C}_n^{\textnormal{RLCP}}(X_{n+1},\tX_{n+1})\mid X_{n+1}=x_0\} - (1-\alpha)\right|\\\leq \E\left[ \frac{1}{N_n+1}  \ \middle| \  X_{n+1}=x_0\right] + \E\left[\textnormal{d}_{\textnormal{TV}}\left( \tilde P_{S|\tX}(\cdot\mid \tX_{n+1}) , P_{S|X}(\cdot\mid x_0)\right)  \ \middle| \  X_{n+1}=x_0\right].\end{multline*}
To bound the second term, we note that conditional on $X_{n+1}=x_0$, we have $\tX_{n+1}\in\mathbb{B}(x_0,h_n)$ almost surely. Therefore,
\[\textnormal{d}_{\textnormal{TV}}\left( \tilde P_{S|\tX}(\cdot\mid \tX_{n+1}) , P_{S|X}(\cdot\mid x_0)\right) \leq \sup_{x\in\mathbb{B}(x_0,h_n)}\textnormal{d}_{\textnormal{TV}}\left( \tilde P_{S|\tX}(\cdot\mid x) , P_{S|X}(\cdot\mid x_0)\right).\]
Moreover, $\tilde P_{S|\tX}(\cdot\mid x)$ is a mixture of distributions $P_{S|X}(\cdot\mid X)$ where $X$ is drawn from $P_X\circ H(\cdot,x)$, i.e., $X\in\mathbb{B}(x,h_n)$ almost surely---and so, for $x\in\mathbb{B}(x_0,h_n)$, we have $X\in\mathbb{B}(x_0,2h_n)$ almost surely. Therefore,
\[\textnormal{d}_{\textnormal{TV}}\left( \tilde P_{S|\tX}(\cdot\mid \tX_{n+1}) , P_{S|X}(\cdot\mid x_0)\right) 
\leq \sup_{x\in\mathbb{B}(x_0,2h_n)}\textnormal{d}_{\textnormal{TV}}\left( P_{S|X}(\cdot\mid x) , P_{S|X}(\cdot\mid x_0)\right).\]

Our last step is to bound $ \E\left[ \frac{1}{N_n+1}  \ \middle| \  X_{n+1}=x_0\right]$.
Since the data points $X_1,\dots,X_n$ are i.i.d.\ samples from $P_X$, and are independent from $\tX_{n+1}$, we see that
conditional on $\tX_{n+1}$,
\[N_n \sim\textnormal{Binomial}\left(n,P_X(\mathbb{B}(\tX_{n+1},h_n)\right).\]
And, by definition, we have
\[P_X(\mathbb{B}(\tX_{n+1},h_n)\geq V_d(h_n)^d \cdot \inf_{x\in\mathbb{B}(x_0,2h_n)}f_X(x).\]
Therefore,
\[\E\left[ \frac{1}{N_n+1}  \ \middle| \  X_{n+1}=x_0\right]\geq \E\left[\frac{1}{\textnormal{Binomial}\left(n, (h_n)^d \cdot V_d\cdot \inf_{x\in\mathbb{B}(x_0,2h_n)}f_X(x)\right) + 1}.\right]\]
Finally, for any $p\in[0,1]$, $\E\left[\frac{1}{\textnormal{Binomial}(n,p)+1}\right] \leq \frac{1}{np}$, which completes the proof of the first claim of the theorem~\eqref{eqn:asymptotic_test_conditional}. The second claim follows by simply observing that the assumptions are sufficient to ensure that every term in the first bound~\eqref{eqn:asymptotic_test_conditional} is vanishing.
\end{proof}

\begin{lemma}\label{lem:dtv_quantile}
    Let $P_0,P_1$ be distributions on $\RR$, where $P_0$ is nonatomic. Let $N\geq 0$ be fixed, and suppose $A_1,\dots,A_N\sim P_0$ and $B\sim P_1$, where $A_1,\dots,A_N,B$ are independent. Then
    \[1 - \alpha - \textnormal{d}_{\textnormal{TV}}(P_0,P_1)\leq \P\{B\leq \textnormal{Quantile}_{1-\alpha} (A_1,\dots,A_N,B)\} \leq 1 - \alpha +\frac{1}{N+1} + \textnormal{d}_{\textnormal{TV}}(P_0,P_1). \]
\end{lemma}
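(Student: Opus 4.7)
The plan is to reduce the claim to the standard exchangeable conformal coverage bound via a maximal coupling argument. The key observation is that if $B$ were drawn from $P_0$ instead of $P_1$, then the $N+1$ variables $A_1,\dots,A_N,B$ would be i.i.d.\ from the nonatomic distribution $P_0$, so by exchangeability the rank of $B$ among these values would be uniform on $\{1,\dots,N+1\}$, and hence the standard split conformal calculation would give
\[1-\alpha \;\leq\; \P\bigl\{B\leq \textnormal{Quantile}_{1-\alpha}(A_1,\dots,A_N,B)\bigr\} \;\leq\; 1-\alpha+\tfrac{1}{N+1}.\]

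The first step is to invoke this standard bound as a baseline. Concretely, I would let $B'\sim P_0$ be an auxiliary random variable, independent of $A_1,\dots,A_N$, and note that by the rank argument above (using nonatomicity of $P_0$ to guarantee no ties almost surely), $\P\{B'\leq \textnormal{Quantile}_{1-\alpha}(A_1,\dots,A_N,B')\}$ lies in the interval $[1-\alpha,\,1-\alpha+\tfrac{1}{N+1}]$.

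Next, I would couple $B$ and $B'$ maximally: there exists a joint distribution of $(B,B')$ with marginals $P_1$ and $P_0$ respectively and $\P\{B\neq B'\}=\textnormal{d}_{\textnormal{TV}}(P_0,P_1)$, and this coupling can be taken independently of $A_1,\dots,A_N$. Define
\[E(b)\;=\;\bigl\{b\leq \textnormal{Quantile}_{1-\alpha}(A_1,\dots,A_N,b)\bigr\}.\]
On the event $\{B=B'\}$, the indicators $\mathbf{1}_{E(B)}$ and $\mathbf{1}_{E(B')}$ agree, so
\[\bigl|\P\{E(B)\}-\P\{E(B')\}\bigr| \;\leq\; \P\{B\neq B'\} \;=\; \textnormal{d}_{\textnormal{TV}}(P_0,P_1).\]
Combining this with the baseline two-sided bound on $\P\{E(B')\}$ yields both inequalities in the statement.

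The main obstacle, which is really only a technical one, is verifying that the coupling can be chosen jointly independent of $A_1,\dots,A_N$ so that the indicator-comparison argument goes through on the product space; this is standard since the $A_i$'s can be defined on a separate factor of the probability space. The only other mild care needed is the use of nonatomicity of $P_0$ to rule out ties among $A_1,\dots,A_N,B'$, ensuring that the clean rank calculation gives exactly the quantile bounds $[1-\alpha,\,1-\alpha+\tfrac{1}{N+1}]$ rather than a weaker version.
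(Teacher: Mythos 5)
Your proof is correct and takes essentially the same approach as the paper: you introduce an auxiliary $P_0$-distributed variable (the paper calls it $A_{N+1}$, you call it $B'$), apply the standard exchangeable-conformal rank bound to it, and then transfer the bound to $B$ via total variation. The only cosmetic difference is that you make the maximal coupling explicit, whereas the paper invokes the equivalent fact that $\textnormal{d}_{\textnormal{TV}}\bigl((A_1,\dots,A_N,A_{N+1}),(A_1,\dots,A_N,B)\bigr) = \textnormal{d}_{\textnormal{TV}}(P_0,P_1)$ directly; these are the same argument phrased in the two standard dual languages for total variation.
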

\begin{proof}
    Let $A_{N+1}\sim P_0$ be drawn independently from $A_1,\dots,A_N$. Then, by definition of the total variation distance, we have
    \begin{multline*}\left|\P\{B\leq \textnormal{Quantile}_{1-\alpha} (A_1,\dots,A_N,B)\} - \P\{A_{N+1}\leq \textnormal{Quantile}_{1-\alpha} (A_1,\dots,A_N,A_{N+1})\}\right| 
    \\\leq d_{\textnormal{TV}}\big( (A_1,\dots,A_N,A_{N+1}), (A_1,\dots,A_N,B)\big) = \textnormal{d}_{\textnormal{TV}}(A_{N+1},B) = \textnormal{d}_{\textnormal{TV}}(P_0,P_1).\end{multline*}
    Finally, 
    \[1-\alpha\leq \P\{A_{N+1}\leq \textnormal{Quantile}_{1-\alpha} (A_1,\dots,A_N,A_{N+1})\}\leq 1-\alpha + \frac{1}{N+1},\]
    as in \cite{vovk2005algorithmic}, since $A_1,\dots,A_{N+1}$ are exchangeable (because they are i.i.d.) and are distinct almost surely (since $P_0$ is nonatomic). 
\end{proof}

\subsection{Proving the connection between calLCP and full conformal prediction }\label{app:proof_prop:calLCP_fullCP}

\begin{proof}[Proof of Proposition \ref{prop:calLCP_fullCP}]
For any dataset $(x_1,y_1),\dots,(x_{n+1},y_{n+1})$, the new score in Section \ref{sec:calLCP_fullCP} for the $i$th data point is defined as
\[\Big(T\big((x_1,y_1),\dots,(x_{n+1},y_{n+1})\big)\Big)_i = \sum_{j=1}^{n+1} w_{i,j} \ind\{s(x_j,y_j) < s(x_i,y_i)\},~~
\textnormal{where}~~
w_{i,j} = \frac{H(x_j,x_i)}{\sum_{k=1}^{n+1} H(x_k,x_i)}.\]
This implies the following equivalence between the two scores, $s$ and $T$:
\begin{equation}\label{eqn:equivalence_for_scores_calLCP_fullCP}
s(x_i,y_i)\leq \textnormal{Quantile}_{1-a}\left( \sum_{j=1}^{n+1} w_{i,j} \delta_{s(x_j,y_j)}\right) \iff \Big(T\big((x_1,y_1),\dots,(x_{n+1},y_{n+1})\big)\Big)_i < 1-a.
\end{equation}
Now, for a potential test response $y$, let's consider $Y_1^y,Y_2^y,\cdots,Y_{n+1}^y$ as in Section \ref{sec:defining calLCP}. Then, the corresponding transformed scores $T_1^y,T_2^y,\cdots, T_{n+1}^y$, are defined as
\[T_i^y=T\big((X_1,Y_1^y),\dots,(X_{n+1},Y_{n+1}^y)\big)\Big)_i =T\big((X_1,Y_1),\dots,(X_n,Y_n),(X_{n+1},y)\big)\Big)_i~~\textnormal{for all }i\in[n+1].\] By the equivalence~\eqref{eqn:equivalence_for_scores_calLCP_fullCP} established above, the recalibrated $\tilde{\alpha} (X_{n+1},y)$, i.e. 
\[
\tilde{\alpha}(X_{n+1},y) = \max\left\{a \in\Gamma(w): \sum_{i=1}^{n+1} \ind\left\{s(X_i,Y^y_i)\leq \textnormal{Quantile}_{1-a}\left( \sum_{j=1}^{n+1} w_{i,j} \delta_{s(X_i,Y^y_i)}  \right) \right\}\geq (1-\alpha)(n+1)\right\},\]
with
$w_{i,j} = \frac{H(X_j,X_i)}{\sum_{k=1}^{n+1} H(X_k,X_i)}, \ i,j\in[n+1]$, and with $\Gamma(w)$ defined as in~\eqref{eqn:def_Gamma_w_calLCP},
can now be expressed as
\[   \tilde{\alpha}(X_{n+1},y) = \max\left\{a\in\Gamma(w) : \sum_{i=1}^{n+1} \ind\left\{ T_i^y< 1-a\right\}\geq (1-\alpha)(n+1)\right\}.
\]
This means that $\tilde{\alpha}(X_{n+1},y)$ is the largest value in $\Gamma(w)$ satisfying $1-\tilde{\alpha}(X_{n+1},y) >\textnormal{Quantile}_{1-\alpha}(T_i^y,T_2^y,\cdots,T_{n+1}^y)$. Since $1-T^y_i$ takes values in the grid $\Gamma(w)$, then, $T_i^y<1-\tilde{\alpha}(X_{n+1},y)$ if and only if $T_i^y \leq \textnormal{Quantile}_{1-\alpha}(T_i^y,T_2^y,\cdots,T_{n+1}^y)$. Thus
once again using the same equivalence~\eqref{eqn:equivalence_for_scores_calLCP_fullCP}, one can write $\hat{C}_n^{\textnormal{calLCP}}(X_{n+1})$ as
\[\hat{C}_n^{\textnormal{calLCP}}(X_{n+1}) = \left\{y\in\calY : T_{n+1}^y< 1-\tilde{\alpha}(X_{n+1},y)\right\}=\left\{y\in\calY : T_{n+1}^y\leq \textnormal{Quantile}_{1-\alpha}(T_i^y,T_2^y,\cdots,T_{n+1}^y)\right\}.\]
The set on the right-hand side is exactly the output of the full conformal prediction interval, using the score map $T$. This establishes the result that $\hat{C}_n^{\textnormal{calLCP}}(X_{n+1})$ and $\hat{C}_n^{\textnormal{fullCP}}(X_{n+1})$ are exactly same.
\end{proof}

\subsection{A connection between $m$-RLCP and baseLCP}\label{app:mRLCP results}
To understand the behavior of $m$-RLCP from another perspective, we will look at an equivalent formulation. Define an average weight,
$$
W(X_i,X_{n+1})= \E_{\tX_{n+1}|X_{n+1}\sim H(X_{n+1},\cdot)} \left[\frac{H(X_i,\tX_{n+1})}{\sum_{j=1}^{n+1}H(X_j,\tX_{n+1})}\right],
$$
which is the expected value of the weight $\tilde{w}_i$ placed on training point $i$ when we run RLCP, averaged over the random draw of $\tX_{n+1}$. We then define
\[\tilde{p}(y;X_{n+1}) = 
\sum_{i=1}^nW(X_i,X_{n+1})\ind\{s(X_i,Y_i)\geq s(X_{n+1},y)\} + W(X_{n+1},X_{n+1}).\]
Comparing to the p-value $p(y;X_{n+1},\tX_{n+1}) $ that defines the RLCP method (see~\eqref{eqn:RLCP_as_pvalue}), we can see that this construction is identical except that we use the expected weights in place of the random $\tilde{w}_i$'s. We then correspondingly define a prediction interval
\[\tilde{C}_n(X_{n+1}) = \{y\in\calY : \tilde{p}(y;X_{n+1})>\alpha\}.\]

We now make a key observation: $\tilde{C}_n(X_{n+1})$ is actually equivalent to the baseLCP prediction interval $C_n^{\textnormal{baseLCP}}(X_{n+1})$, if in place of the localizer kernel $H$, we use a new kernel defined by $W$ (note, however, that we are slightly abusing the notation here, since $W(X_i,X_{n+1})$ depends not only on the two feature values $X_i,X_{n+1}$ but also on the remaining data points in the training set).

Now we again consider $m$-RLCP. Observe that the derandomized p-value $\bar{p}(y;X_{n+1},\tX_{n+1}^{[m]})$, defined in~\eqref{eqn:define_bar_p}, can be viewed as a stochastic approximation to $\tilde{p}(y;X_{n+1})$ (i.e., the expected weight, $W(X_i,X_{n+1})$, is approximated by averaging $m$ independent samples). Thus, $\hat{C}^{\textnormal{$m$-RLCP}}_n(X_{n+1})$ can be viewed as an approximate version of the interval $\tilde{C}_n(X_{n+1})$, which is the interval produced by baseLCP if run with a different kernel.

Based on this connection, we see that the empirically conservative behavior of $m$-RLCP is not surprising; our experiments, and the experiments of \citet{guan2023localized}, have demonstrated that baseLCP can often be quite conservative, and this property is then inherited by $m$-RLCP due to the connection between these two methods.

\section{Details for smoothing to avoid overcoverage}\label{app:smoothing}
In order to have a meaningful comparison between calLCP and RLCP in terms of local coverage, we should firstly have exact marginal coverage guarantee for both methods. For all methods in the conformal prediction framework (e.g., split CP), the standard theory ensures that marginal coverage will be \emph{at least} $1-\alpha$, but the method can overcover (i.e., coverage strictly higher than $1-\alpha$). In the exchangeable setting, any overcoverage is due only to two things: first, the issue of rounding (e.g., if $(1-\alpha)(n+1)$ is a noninteger, then calculating the $(1-\alpha)$-quantile for split conformal prediction as in~\eqref{eqn:split_CP} is necessarily slightly conservative), and second, the possibility of ties between scores.

For split and full conformal prediction (i.e., in the unweighted and unlocalized setting), \citet[Proposition 2.4]{vovk2005algorithmic} propose a randomization strategy, often also referred as ``smoothed conformal predictors'', to ensure exact coverage and avoid overcoverage. Taking inspiration from that randomization strategy, we will now develop the smoothed version of our method.

We will begin this section with reviewing the weighted conformal prediction (WCP) framework, and presenting its smoothed version. We will then extend these ideas to the setting of localized methods.

\subsection{Smoothing for weighted conformal prediction (WCP)}
\label{app:smoothed WCP}

The idea behind RLCP (and indeed, behind baseLCP and calLCP as well) is closely connected to the weighted conformal prediction (WCP) framework \citep{tibshirani2019conformal}. 
The weighted conformal prediction (WCP) method (\citet{tibshirani2019conformal}) builds a prediction interval of the form
\[
\hat{C}_n^{\textnormal{WCP}}(X_{n+1})=\left\{y\in\calY: s(X_{n+1},y)\leq \textnormal{Quantile}_{1-\alpha}\left(\sum_{i=1}^n w_i \delta_{s(X_i,Y_i)}+ w_{n+1} \delta_{+\infty} \right)\right\},
\]
for any pre-trained score function $s$.
In the covariate shift setting, where we have training data $\{(X_i,Y_i)\}_{i\in[n]}\stackrel{\textnormal{iid}}{\sim} P_X\times P_{Y|X}$ and a test point $(X_{n+1},Y_{n+1})\sim \tP_X\times P_{Y|X}$. 
 the weight for $i$th data point is given by
\[w_i=\frac{\frac{\mathsf{d}\tP_X}{\mathsf{d}P_X}(X_i)}{\sum_{j=1}^{n+1}\frac{\mathsf{d}\tP_X}{\mathsf{d}P_X}(X_j)}.\]
The marginal validity of this prediction interval ensures $\P\{Y_{n+1}\in \hat{C}_n^{\textnormal{WCP}}(X_{n+1})\}\geq1-\alpha$.

In general, any weighted conformal prediction interval can be extremely conservative, much more so than (unweighted) conformal prediction. This is because, when computing $\textnormal{Quantile}_{1-\alpha}(\cdot)$ for an unweighted empirical distribution over $n+1$ many values, rounding means that we take the $k$-th smallest value for $k=\lceil(1-\alpha)(n+1)\rceil$; in other words, we are actually computing the quantile at level $\frac{\lceil (1-\alpha)(n+1)\rceil}{n+1}$, but this value is bounded as $\leq 1-\alpha + \frac{1}{n+1}$. However, for WCP, the conservativeness caused by rounding when computing the quantile of a \emph{weighted} discrete distribution can potentially be as large as the largest weight---which might be much larger than $\frac{1}{n+1}$, if the weights are concentrated on just a few samples (i.e., low effective sample size). This means that providing a smoothed version to avoid overcoverage is particularly important in the weighted setting.

To do so, we begin by returning to the p-value based interpretation. An alternative way to express the construction of WCP  prediction interval would be as follows. Define a p-value for any potential response $y$ by comparing it with the weighted empirical distribution, where the weight for $i$th data point is defined as above. Then, $\hat{C}_n^{\textnormal{WCP}}(X_{n+1})$ can be also expressed as $\{y\in\calY: p^{\textnormal{WCP}}(y)> \alpha\}$ where \[p^{\textnormal{WCP}}(y)=\sum_{i=1}^n w_i \ind\{s(X_i,Y_i)\geq s(X_{n+1},y)\}+w_{n+1} .\]
In the covariate shift setting described above, this p-value can be shown to follow a super-uniform distribution, i.e., $\P\{p^{\textnormal{WCP}}(Y_{n+1})\leq \alpha\}\leq \alpha$; this claim is equivalent to the claim that $Y_{n+1}\in\hat{C}_n^{\textnormal{WCP}}(X_{n+1})$ with probability at least $1-\alpha$.

Now, to apply smoothing, we need to adjust this p-value so that its distribution is \emph{exactly} uniform, which will ensure that $Y_{n+1}\in\hat{C}_n^{\textnormal{WCP}}(X_{n+1})$ holds with probability \emph{exactly} $1-\alpha$. We modify the definition of the p-value to now be given by
\[p^{\textnormal{WCP}}(y)=\sum_{i=1}^n w_i \ind\{s(X_i,Y_i)> s(X_{n+1},y)\}+U\left(\sum_{i=1}^n w_i \ind\{s(X_i,Y_i)=s(X_{n+1},y)\}
+w_{n+1}\right),\]
where $U\sim \textnormal{Uniform}[0,1]$ is drawn independently of the data (note that typically, a single draw of $U$ is used  across all possible values $y\in\calY$). Under the covariate shift setting, the new p-value is now exactly uniformly distributed. This construction is analogous to the smoothed version of (unweighted) split conformal prediction, as proposed by \citet{vovk2005algorithmic}. Then the smoothed prediction interval, defined as $\hat{C}_n^{\textnormal{WCP}}(X_{n+1}) = \{y\in\calY: p^{\textnormal{WCP}}(y)> \alpha\}$, satisfies marginal coverage at level exactly $1-\alpha$:
\begin{proposition}\label{prop:WCP_smoothed}
Under the notation and definitions above, the smoothed WCP prediction interval has marginal coverage exactly equal to $1-\alpha$ under the covariate shift model.
\end{proposition}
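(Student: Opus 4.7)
The plan is to reduce the marginal coverage statement to showing that the smoothed p-value $p^{\textnormal{WCP}}(Y_{n+1})$ is exactly uniformly distributed on $[0,1]$ under the covariate-shift model. Once this is established,
\[\P\{Y_{n+1}\in\hat{C}_n^{\textnormal{WCP}}(X_{n+1})\} = \P\{p^{\textnormal{WCP}}(Y_{n+1})>\alpha\} = 1-\alpha\]
follows immediately from the definition $\hat{C}_n^{\textnormal{WCP}}(X_{n+1})=\{y:p^{\textnormal{WCP}}(y)>\alpha\}$.

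The first step is to invoke the weighted-exchangeability lemma of \citet{tibshirani2019conformal}. Writing $Z_i=(X_i,Y_i)$, $S_i=s(X_i,Y_i)$, and letting $\mathcal{E}$ denote the unordered multiset $\{\!\{Z_1,\ldots,Z_{n+1}\}\!\}$, this lemma says that conditional on $\mathcal{E}$, the index $K\in[n+1]$ that is assigned the test role is categorical with $\P\{K=k\mid\mathcal{E}\}=w_k$, where $w_k\propto(\mathsf{d}\tP_X/\mathsf{d}P_X)(X_k)$ is precisely the weight attached to $Z_k$ in the WCP construction. In particular, conditionally on $\mathcal{E}$, the pair $(S_{n+1},w_{n+1})$ is distributed as a weighted draw from the $n+1$ score--weight pairs.

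Next, I would rewrite the smoothed p-value symmetrically: since $\ind\{S_{n+1}>S_{n+1}\}=0$ and $\ind\{S_{n+1}=S_{n+1}\}=1$,
\[p^{\textnormal{WCP}}(Y_{n+1}) = \sum_{i=1}^{n+1} w_i\,\ind\{S_i>S_{n+1}\} + U\sum_{i=1}^{n+1} w_i\,\ind\{S_i=S_{n+1}\}.\]
Group the $n+1$ indices by distinct score value: let $\tau_1<\cdots<\tau_m$ be the distinct scores in $\mathcal{E}$ and set $W_j=\sum_{i:\,S_i=\tau_j} w_i$, so that $\sum_j W_j=1$. If $J$ denotes the tie class containing $S_{n+1}$, the p-value equals $\sum_{j>J} W_j + U\cdot W_J$, and the step above gives $\P\{J=j\mid\mathcal{E}\}=W_j$.

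Finally, since $U\sim\textnormal{Uniform}[0,1]$ is drawn independently of the data, the conditional distribution of $p^{\textnormal{WCP}}(Y_{n+1})$ given $\mathcal{E}$ is the mixture that, with probability $W_j$, samples uniformly from the interval $\bigl[\sum_{j'>j}W_{j'},\,\sum_{j'\geq j}W_{j'}\bigr]$ of length $W_j$. These intervals tile $[0,1]$ exactly, so the mixture is $\textnormal{Uniform}[0,1]$ conditionally on $\mathcal{E}$, and hence unconditionally as well. The only (minor) subtlety is the bookkeeping around ties: the stand-alone $w_{n+1}$ outside the sum in the definition of $p^{\textnormal{WCP}}$ is exactly what is needed to convert the original formula into the symmetric form used above, after which the independent randomization by $U$ spreads mass uniformly across each tie class. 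No hard step is anticipated; the main potential pitfall is applying the weighted-exchangeability identity carefully when the score distribution has atoms.
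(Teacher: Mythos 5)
Your proposal is correct and takes essentially the same route as the paper's proof: both invoke the weighted-exchangeability result of \citet{tibshirani2019conformal} to obtain that, conditionally on the unordered multiset of data (equivalently, the weighted empirical distribution of scores), the test score is a weighted draw from the observed score--weight pairs, and both then use the independent $U\sim\textnormal{Uniform}[0,1]$ to show the p-value is conditionally (hence unconditionally) uniform. Your write-up is merely more explicit about the bookkeeping over tie classes, which the paper compresses into the single identity $\P_{S\sim\hat{P}_{\textnormal{score}}}\{p(S)\leq\alpha\}=\alpha$.
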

\begin{proof}
Let $\hat{P}_{\textnormal{score}} = \sum_{i=1}^{n+1}w_i\delta_{s(X_i,Y_i)}$ be the weighted empirical distribution of the (training and test) score values, where $w_i\propto \frac{\mathsf{d}\tP_X}{\mathsf{d}P_X}(X_i)$ as defined above.
Following \citet{tibshirani2019conformal}'s formulation of the covariate shift setting through the language of weighted exchangeability (see, e.g., the proof of \citet[Lemma 1]{tibshirani2019conformal}), we can see that the conditional distribution of the test score is given by
\[s(X_{n+1},Y_{n+1})\mid \hat{P}_{\textnormal{score}} \sim \hat{P}_{\textnormal{score}}.\]
In words, if we condition on the unordered set (or multiset, in the case of repeated values) of scores $\{s(X_i,Y_i)\}_{i\in[n+1]}$, the chance that the test score $s(X_{n+1},Y_{n+1})$ takes any one of these values is proportional to the weight assigned to that value. Then
\begin{align*}
\P\{Y_{n+1}\in\hat{C}^{\textnormal{WCP}}_n(X_{n+1})\}
&=1 - \P\{p^{\textnormal{WCP}}(Y_{n+1})\leq \alpha\}\\
&=1 - \E\left[ \P\{p^{\textnormal{WCP}}(Y_{n+1})\leq \alpha\mid \hat{P}_{\textnormal{score}}\} \right],
\end{align*}
so it suffices to show that 
\[\P\{p^{\textnormal{WCP}}(Y_{n+1})\leq \alpha\mid \hat{P}_{\textnormal{score}}\} = \alpha\]
holds almost surely. 
By definition of $p^{\textnormal{WCP}}(y)$ observe that we can calculate
\[p^{\textnormal{WCP}}(Y_{n+1})
= \sum_{i=1}^{n+1}w_i \ind\{ s(X_i,Y_i)>s(X_{n+1},Y_{n+1})\} + U\left(\sum_{i=1}^{n+1}\ind\{ s(X_i,Y_i)=s(X_{n+1},Y_{n+1})\}\right),\]
or equivalently, writing
\[p(s) = \P_{S\sim \hat{P}_{\textnormal{score}}}\{S > s\} + U\cdot \P_{S\sim \hat{P}_{\textnormal{score}}}\{S = s\}, \] we have
\[p^{\textnormal{WCP}}(Y_{n+1})
=  p(s(X_{n+1},Y_{n+1})). \]
Now, since we know that, conditionally on $\hat{P}_{\textnormal{score}}$ we have $s(X_{n+1},Y_{n+1})\sim \hat{P}_{\textnormal{score}}$, this means that
\[\P\{p^{\textnormal{WCP}}(Y_{n+1})\leq \alpha\mid \hat{P}_{\textnormal{score}}\}
= \P\{ p(s(X_{n+1},Y_{n+1}))\leq \alpha \mid \hat{P}_{\textnormal{score}}\}
= \P_{S\sim \hat{P}_{\textnormal{score}}}\{p(S) \leq \alpha \mid \hat{P}_{\textnormal{score}}\}=\alpha,\]
where the last step holds by definition of $p(s)$. This completes the proof.
\end{proof}

\subsection{From smoothed WCP to smoothed RLCP}\label{app:smoothed_RLCP}
First, we recall the p-value based formulation of RLCP, as in Section~\ref{sec:derandomizing}: writing
\[p^{\textnormal{RLCP}}(y)=\sum_{i=1}^n\frac{H(X_i,\tX_{n+1})}{\sum_{j=1}^{n+1}H(X_j,\tX_{n+1})}\ind\{s(X_i,Y_i)\geq s(X_{n+1},y)\} + \frac{H(X_{n+1},\tX_{n+1})}{\sum_{j=1}^{n+1}H(X_j,\tX_{n+1})},\]
the prediction interval is given by $\hat{C}_n^{\textnormal{RLCP}}(X_{n+1})=\{y\in\calY: p^{\textnormal{RLCP}}(y) > \alpha\}$.

As we observed in the proof of Proposition~\ref{prop:key_property}, which established coverage for RLCP conditional on the ``synthetic prototype'' $\tX_{n+1}$, we can think of RLCP as constructing a weighted conformal prediction interval for the covariate shift problem that is (randomly) defined via the shifted feature distribution $\tP_X = P_X\circ H(\cdot,\tX_{n+1})$. In other words, in a sense we can view RLCP as an instance of WCP. This means that RLCP can be smoothed in the same way. Thus, we redefine the p-value as
\begin{multline*}p^{\textnormal{RLCP}}(y)=\sum_{i=1}^n\frac{H(X_i,\tX_{n+1})}{\sum_{j=1}^{n+1}H(X_j,\tX_{n+1})}\ind\{s(X_i,Y_i)> s(X_{n+1},y)\}\\{} + U\left(\sum_{i=1}^n\frac{H(X_i,\tX_{n+1})}{\sum_{j=1}^{n+1}H(X_j,\tX_{n+1})}\ind\{s(X_i,Y_i)= s(X_{n+1},y)\}+\frac{H(X_{n+1},\tX_{n+1})}{\sum_{j=1}^{n+1}H(X_j,\tX_{n+1})}\right),\end{multline*}
where $U\sim\textnormal{Uniform}[0,1]$ is drawn independently of the data,
and again define $\hat{C}_n^{\textnormal{RLCP}}(X_{n+1})=\{y\in\calY: p^{\textnormal{RLCP}}(y) > \alpha\}$.
This leads to the following marginal coverage guarantee:
\begin{theorem}[Marginal coverage for smoothed RLCP]\label{thm:marginal_smoothed}
The smoothed RLCP method defined above satisfies the exact marginal coverage property,i.e., 
for any distribution $P$, if  $(X_1,Y_1),\dots,(X_{n+1},Y_{n+1})\stackrel{\textnormal{iid}}{\sim} P$, then
\[\P\left\{Y_{n+1}\in\hat{C}^{\textnormal{RLCP}}_n(X_{n+1},\tX_{n+1})\right\}= 1-\alpha.\]
\end{theorem}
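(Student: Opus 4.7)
The plan is to derive Theorem~\ref{thm:marginal_smoothed} as a direct consequence of Proposition~\ref{prop:WCP_smoothed} by conditioning on $\tX_{n+1}$, just as the (non-smoothed) marginal coverage in Theorem~\ref{thm:marginal_coverage} was derived from the covariate-shift interpretation of RLCP established in the proof of Proposition~\ref{prop:key_property}. The key observation to invoke is equation~\eqref{eqn:RLCP_is_covariate_shift}: conditional on $\tX_{n+1}$, the training points are i.i.d.\ from $P$ and the test point is drawn from $(P_X\circ H(\cdot,\tX_{n+1}))\times P_{Y\mid X}$, which is precisely a covariate shift problem with weights $\tilde{w}_i \propto H(X_i,\tX_{n+1})$ matching the Radon--Nikodym derivative $\frac{\mathsf{d}(P_X\circ H(\cdot,\tX_{n+1}))}{\mathsf{d}P_X}(X_i) \propto H(X_i,\tX_{n+1})$.

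First, I would verify that the smoothed RLCP p-value $p^{\textnormal{RLCP}}(y)$ defined in Appendix~\ref{app:smoothed_RLCP} is literally the smoothed WCP p-value of Section~\ref{app:smoothed WCP}, applied to this conditional covariate-shift problem with the same auxiliary $U\sim\textnormal{Uniform}[0,1]$ (drawn independently of the data, and in particular independently of $\tX_{n+1}$). This is essentially by inspection: the weights $w_i$ in the WCP construction become $\tilde{w}_i = \frac{H(X_i,\tX_{n+1})}{\sum_j H(X_j,\tX_{n+1})}$, and the strict-inequality/equality/uniform smoothing structure matches term for term.

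Next, I would apply Proposition~\ref{prop:WCP_smoothed} conditionally on $\tX_{n+1}$. Since the conditional distribution of the data given $\tX_{n+1}$ is exactly the covariate-shift distribution to which Proposition~\ref{prop:WCP_smoothed} applies (and $U$ is independent of $\tX_{n+1}$), we obtain
\[
\P\!\left\{Y_{n+1}\in \hat C_n^{\textnormal{RLCP}}(X_{n+1},\tX_{n+1}) \,\Big|\, \tX_{n+1}\right\} = 1-\alpha \quad \text{almost surely.}
\]
Taking an expectation over $\tX_{n+1}$ via the tower law then gives the unconditional statement $\P\{Y_{n+1}\in \hat C_n^{\textnormal{RLCP}}(X_{n+1},\tX_{n+1})\} = 1-\alpha$, completing the proof.

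There is essentially no technical obstacle here: the heavy lifting has already been done in Proposition~\ref{prop:key_property} (the covariate-shift reformulation) and in Proposition~\ref{prop:WCP_smoothed} (exactness of smoothed WCP). The only minor subtlety worth flagging explicitly is that $U$ must be drawn independently of both the data and $\tX_{n+1}$ so that the conditional application of Proposition~\ref{prop:WCP_smoothed} given $\tX_{n+1}$ is legitimate; this is ensured by the construction in Appendix~\ref{app:smoothed_RLCP}. Equivalently, one can note that Proposition~\ref{prop:WCP_smoothed}'s proof only uses the conditional distribution of $s(X_{n+1},Y_{n+1})$ given the multiset of weighted scores, and that this conditional distribution is unchanged by additionally conditioning on $\tX_{n+1}$ since the weighted-exchangeability argument of \citet{tibshirani2019conformal} applies verbatim in the covariate-shifted conditional model~\eqref{eqn:RLCP_is_covariate_shift}.
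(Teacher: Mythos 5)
Your proposal is correct and follows essentially the same route as the paper: condition on $\tX_{n+1}$, invoke the covariate-shift reinterpretation~\eqref{eqn:RLCP_is_covariate_shift} to identify smoothed RLCP with smoothed WCP conditionally on $\tX_{n+1}$, apply Proposition~\ref{prop:WCP_smoothed} to get exact conditional coverage, and then integrate out $\tX_{n+1}$ via the tower law. The only difference is that you spell out the independence of $U$ from $\tX_{n+1}$ and the term-by-term matching of the p-value, which the paper leaves implicit; this is a reasonable clarification but not a different argument.
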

\begin{proof}
Following the same arguments as in the proof of Theorem~\ref{thm:marginal_coverage}, we can see that it suffices to prove a result analogous to Proposition~\ref{prop:key_property}: that is, we need to establish that for smoothed RLCP, it holds that
\[\E[\alpha(X_{n+1},\tX_{n+1})\mid \tX_{n+1}] = \alpha\textnormal{ almost surely}.\]
As in the proof of Proposition~\ref{prop:key_property}, this result follows from reinterpreting RLCP as an instance of WCP, in the covariate shift setting induced by conditioning on $\tX_{n+1}$; since Proposition~\ref{prop:WCP_smoothed} verifies exact marginal coverage under covariate shift for smoothed WCP, this completes the proof.
\end{proof}

\subsubsection{Consequences for coverage with respect to unknown covariate shift}
In Theorem~\ref{thm:covariate_shift}, we established a lower bound on coverage for the RLCP method, for  covariate shifts of the form $P_X\circ g$ where $g$ is unknown. In particular, in Section \ref{sec:insights_to_results}, we discussed that the result of the thoerem arises from the bound
\[\P\left\{Y_{n+1}\in\hat{C}^{\textnormal{RLCP}}_n(X_{n+1},\tX_{n+1})\right\}
\geq 1 - \alpha - \E_{\tP_{\tX}}\left[\textnormal{d}_{\textnormal{TV}}\left(P_X\circ H(\cdot,\tX), P_X\circ g\circ H(\cdot, \tX)\right)\right],\]
since this expected total variation term is bounded by the excess error term in the theorem.

For the smoothed setting, this expected total variation distance bounds the deviation from $1-\alpha$ coverage in either direction---that is, it bounds overcoverage as well as undercoverage. In particular, we have the following result:
\begin{theorem}[Robustness to covariate shift for smoothed RLCP]
    The smoothed RLCP method defined above satisfies the following property: for any distribution $P$, 
for every $g:\calX\rightarrow\RR_{\geq 0}$ with $P_X\circ g$ well-defined (i.e., with $0<\E_{P_X}[g(X)]<\infty$) it holds that
    \begin{multline*}
    \left| \P\left\{Y_{n+1} \in \hat{C}_n^{\textnormal{RLCP}}(X_{n+1},\tX_{n+1}) \right\} - (1-\alpha)\right| \leq \E_{\tP_{\tX}}\left[\textnormal{d}_{\textnormal{TV}}\left(P_X\circ H(\cdot,\tX), P_X\circ g\circ H(\cdot, \tX)\right)\right]\\ \leq \frac{\inf_{A\subseteq\calX, \epsilon>0}\left\{\epsilon \cdot L_{g,2\epsilon,A}  + \|g\|_\infty \cdot P_{(X,\tX)}\{\|X-\tX\|>\epsilon\}+ \|g\|_\infty \cdot P_X(A^c) \right\}}{\E_{P_X}[g(X)]},
    \end{multline*}
where the probability on the left-hand side is taken with respect to if  $(X_1,Y_1),\dots,(X_n,Y_n)\stackrel{\textnormal{iid}}{\sim} P$ and $(X_{n+1},Y_{n+1})\sim (P_X\circ g)\times P_{Y|X}$.
\end{theorem}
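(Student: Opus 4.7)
The plan is to adapt the proof sketch of Theorem~\ref{thm:covariate_shift} from Section~\ref{sec:insights_to_results}, using the fact that smoothed RLCP achieves \emph{exact} (rather than merely $\geq 1-\alpha$) marginal coverage. This strengthening to exact coverage is precisely what lets the total-variation argument deliver the two-sided control reflected in the absolute value on the left-hand side.

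First, I would introduce the two joint distributions on $(\{(X_i,Y_i)\}_{i\in[n]},X_{n+1},Y_{n+1},\tX_{n+1})$ already considered in Section~\ref{sec:insights_to_results}: call $Q$ the covariate-shifted joint law given in~\eqref{eqn:explain_marginal_coverage_wrt_now_with_g}, and $Q_0$ the reference joint law in~\eqref{eqn:explain_marginal_coverage_wrt}. These two distributions share the marginal over the training data (i.i.d.\ from $P$) and the marginal of $\tX_{n+1}$ (namely $\tP_{\tX}$); they differ only in the conditional law of $(X_{n+1},Y_{n+1})$ given $\tX_{n+1}$, which is $(P_X\circ g\circ H(\cdot,\tX_{n+1}))\times P_{Y|X}$ under $Q$ versus $(P_X\circ H(\cdot,\tX_{n+1}))\times P_{Y|X}$ under $Q_0$.

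Next, I would argue that under $Q_0$, smoothed RLCP has coverage exactly $1-\alpha$. The key observation, already used in the proof of Proposition~\ref{prop:key_property}, is that the coverage guarantee of (smoothed) RLCP is a conditional-on-$\tX_{n+1}$ statement: its proof never uses the marginal law of $\tX_{n+1}$, only the conditional law of the training data and test point given $\tX_{n+1}$. Since $Q_0$ reproduces exactly the conditional structure appearing in the proof of Theorem~\ref{thm:marginal_smoothed}, one obtains $\E_{Q_0}[\alpha(X_{n+1},\tX_{n+1})\mid \tX_{n+1}] = \alpha$ almost surely, and marginalizing over $\tX_{n+1}\sim \tP_{\tX}$ yields coverage exactly $1-\alpha$ under $Q_0$.

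Then, applying the standard total-variation inequality to the event $E = \{Y_{n+1}\in \hat{C}_n^{\textnormal{RLCP}}(X_{n+1},\tX_{n+1})\}$ gives
\[
\bigl|\P_Q(E) - \P_{Q_0}(E)\bigr| \leq \textnormal{d}_{\textnormal{TV}}(Q,Q_0).
\]
Because $Q$ and $Q_0$ agree on the training-data marginal, on the marginal of $\tX_{n+1}$, and on $Y_{n+1}\mid X_{n+1}\sim P_{Y|X}$, the joint total variation collapses to
\[
\textnormal{d}_{\textnormal{TV}}(Q,Q_0) \;=\; \E_{\tX\sim \tP_{\tX}}\!\left[\textnormal{d}_{\textnormal{TV}}\!\left(P_X\circ H(\cdot,\tX),\, P_X\circ g\circ H(\cdot,\tX)\right)\right],
\]
which gives the first inequality in the theorem. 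The second inequality is exactly the bound on this expected total variation already derived in Appendix~\ref{app:proof_thm:covariate_shift}, so no new calculation is needed there.

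The main obstacle is the step involving $Q_0$: one must carefully justify that Theorem~\ref{thm:marginal_smoothed}'s exact-coverage guarantee, originally stated for the i.i.d.\ data setting, transfers to $Q_0$, whose marginal on $\tX_{n+1}$ is $\tP_{\tX}$ rather than the one induced by $X_{n+1}\sim P_X$ and $\tX_{n+1}\mid X_{n+1}\sim H(X_{n+1},\cdot)$. The resolution is that the guarantee factors through a conditional-on-$\tX_{n+1}$ statement that depends only on the conditional law of $(X_{n+1},Y_{n+1})$ given $\tX_{n+1}$, which coincides in the two settings; once this is cleanly stated, everything else is a routine invocation of the TV bound and the calculations already established for the unsmoothed case.
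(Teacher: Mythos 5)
Your proof is correct and follows exactly the argument the paper has in mind: the paper itself omits the explicit proof, writing only that it ``follows from analogous arguments as the results above,'' and your proposal fills in precisely those arguments (exact conditional-on-$\tX_{n+1}$ coverage for smoothed RLCP transferring to $Q_0$, the TV bound, and the bound from Appendix~\ref{app:proof_thm:covariate_shift}). You also correctly identify the one subtle step—that the exact-coverage guarantee is a conditional statement depending only on the law of the training data and of $(X_{n+1},Y_{n+1})\mid\tX_{n+1}$, hence transfers to $Q_0$ despite its different $\tX_{n+1}$-marginal.
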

We omit the proof, since it follows from analogous arguments as the results above.

This result for covariate shift also leads to a result on relaxed test-conditional coverage (just like, for the unsmoothed version of RLCP, Theorem~\ref{thm:test_conditional} is a special case of Theorem~\ref{thm:covariate_shift}): for $\{(X_i,Y_i)\}_{i\in[n+1]}\stackrel{\textnormal{iid}}{\sim}P$, and for any subset $B\subseteq \calX$, we have
\begin{multline*}
    \left| \P\left\{Y_{n+1} \in \hat{C}_n^{\textnormal{RLCP}}(X_{n+1},\tX_{n+1})  \ \middle| \  X_{n+1}\in B\right\} - (1-\alpha)\right| \\ \leq \frac{\inf_{\epsilon>0}\left\{P_{(X,\tX)}\{\|X-\tX\|>\epsilon\}+P_X(\textnormal{bd}_{2\epsilon}(B))\right\}}{P_X(B)}.
    \end{multline*}

\subsection{Smoothed versions of baseLCP and calLCP}
For calLCP, \citet[Theorem 2.6]{guan2023localized} proposes a smoothed version of the calLCP method, and verifies that it achieves exactly $1-\alpha$ marginal coverage. An interesting observation is that, since we have seen in Proposition~\ref{prop:calLCP_fullCP} that calLCP is actually an instance of full CP implemented with a particular choice of the score function, \citet[Theorem 2.6]{guan2023localized}'s smoothing approach can also be derived by formulating calLCP as an instance of full CP, and then applying \citet[Proposition 2.4]{vovk2005algorithmic}'s smoothed version of full CP. 
Specifically, defining weights $w_{i,j} = \frac{H(X_j,X_i)}{\sum_{k=1}^{n+1} H(X_k,X_i)}$ as before, we compute a smoothed p-value as
\[p^{\textnormal{calLCP}}(y)=\sum_{i=1}^n \frac{1}{n+1}\ind\{T^y_i > T^y_{n+1}\}\\{} + U\left(\sum_{i=1}^n\frac{1}{n+1}\ind\{T^y_i = T^y_{n+1}\}+\frac{1}{n+1}\right),\]
where $U\sim\textnormal{Uniform}[0,1]$ is drawn independently of the data, and where we define $T^y\in\RR^{n+1}$ with entries
\[T^y_i = \begin{cases} \sum_{j=1}^n w_{i,j}\ind\{s(X_j,Y_j)<s(X_i,Y_i)\} + w_{i,n+1}\ind\{s(X_{n+1},y)<s(X_i,Y_i)\}, & i\in[n], \\  \sum_{j=1}^n w_{n+1,j}\ind\{s(X_j,Y_j)<s(X_{n+1},y)\},   & i=n+1.\end{cases}\]

For baseLCP, as we have discussed, the method does not in fact offer a guarantee of marginal coverage $\geq 1-\alpha$. Nonetheless, to treat the methods in a comparable way, we implement a smoothed version of baseLCP. Recalling our discussion in Section~\ref{sec:new look at baseLCP}, since baseLCP is also constructed based on the WCP framework (as is RLCP), we can define a smoothed version of baseLCP by leveraging the smoothed version of WCP. To do so, first we reformulate (unsmoothed) baseLCP via a p-value: for weights $w_{i,j}$ as above, defining
\[p^{\textnormal{baseLCP}}(y) = \sum_{i=1}^n\frac{H(X_i,X_{n+1})}{\sum_{j=1}^{n+1}H(X_j,X_{n+1})}\ind\{s(X_i,Y_i)\geq s(X_{n+1},y)\} + \frac{H(X_{n+1},X_{n+1})}{\sum_{j=1}^{n+1}H(X_j,X_{n+1})},\]
the baseLCP prediction interval can be equivalently written as $\hat{C}_n^{\textnormal{baseLCP}}(X_{n+1}) = \{y\in\calY : p^{\textnormal{baseLCP}}(y) >\alpha\}$. We then replace the p-value with a smoothed version,\begin{multline*}p^{\textnormal{baseLCP}}(y)=\sum_{i=1}^n\frac{H(X_i,X_{n+1})}{\sum_{j=1}^{n+1}H(X_j,X_{n+1})}\ind\{s(X_i,Y_i)> s(X_{n+1},y)\}\\{} + U\left(\sum_{i=1}^n\frac{H(X_i,X_{n+1})}{\sum_{j=1}^{n+1}H(X_j,X_{n+1})}\ind\{s(X_i,Y_i)= s(X_{n+1},y)\}+\frac{H(X_{n+1},X_{n+1})}{\sum_{j=1}^{n+1}H(X_j,X_{n+1})}\right),\end{multline*}
where again $U\sim\textnormal{Uniform}[0,1]$ is drawn independently of the data,
and then redefine $\hat{C}_n^{\textnormal{baseLCP}}(X_{n+1}) = \{y\in\calY : p^{\textnormal{baseLCP}}(y) >\alpha\}$ to obtain a smoothed interval.

\subsection{A note on implementation}

For all three methods, the p-value based formulation immediately suggests an efficient implementation to compute the prediction interval $\hat{C}_n(X_{n+1})$. Specifically, let $s_{(1)} < \dots < s_{(m)}$ be the \emph{unique} values of the scores $\{s(X_i,Y_i)\}_{i\in[n]}$, and then choose any values $s^*_{(0)},\dots,s^*_{(m)}$ satisfying
\[s^*_{(0)} < s_{(1)} < s^*_{(1)} < s_{(2)} < \dots < s_{(m-1)} < s^*_{(m-1)} < s_{(m)} < s^*_{(m)}.\]
Then it suffices to evaluate the (deterministic or smoothed) p-value at each of these $2m+1$ many scores. Specifically, abusing notation and writing the p-value as a function of the score $s=s(X_{n+1},y)$ rather than of the response value $y$,
\begin{itemize}
    \item For RLCP, for any $s\in\RR$, define
    \begin{multline*}p^{\textnormal{RLCP}}(s)=\sum_{i=1}^n\frac{H(X_i,\tX_{n+1})}{\sum_{j=1}^{n+1}H(X_j,\tX_{n+1})}\ind\{s(X_i,Y_i)> s\}\\{} + U\left(\sum_{i=1}^n\frac{H(X_i,\tX_{n+1})}{\sum_{j=1}^{n+1}H(X_j,\tX_{n+1})}\ind\{s(X_i,Y_i)= s\}+\frac{H(X_{n+1},\tX_{n+1})}{\sum_{j=1}^{n+1}H(X_j,\tX_{n+1})}\right),\end{multline*}
    where $U\sim\textnormal{Uniform}[0,1]$ (for the smoothed version) or $U\equiv 1$ (for the deterministic version).
        \item For calLCP, for any $s\in\RR$, define
\[p^{\textnormal{calLCP}}(s)=\sum_{i=1}^n\frac{1}{n+1}\ind\{T^s_i > T^s_{n+1}\}\\{} + U\left(\sum_{i=1}^n\frac{1}{n+1}\ind\{T^s_i = T^s_{n+1}\}+\frac{1}{n+1}\right),\]
    where $U\sim\textnormal{Uniform}[0,1]$ (for the smoothed version) or $U\equiv 1$ (for the deterministic version), and where
\[T^s_i = \begin{cases} \sum_{j=1}^n w_{i,j}\ind\{s(X_j,Y_j)<s(X_i,Y_i)\} + w_{i,n+1}\ind\{s<s(X_i,Y_i)\}, & i\in[n], \\  \sum_{j=1}^n w_{n+1,j}\ind\{s(X_j,Y_j)<s\},   & i=n+1.\end{cases}\]
        \item For baseLCP, for any $s\in\RR$, define
    \begin{multline*}p^{\textnormal{baseLCP}}(s)=\sum_{i=1}^n\frac{H(X_i,X_{n+1})}{\sum_{j=1}^{n+1}H(X_j,X_{n+1})}\ind\{s(X_i,Y_i)> s\}\\{} + U\left(\sum_{i=1}^n\frac{H(X_i,X_{n+1})}{\sum_{j=1}^{n+1}H(X_j,X_{n+1})}\ind\{s(X_i,Y_i)= s\}+\frac{H(X_{n+1},X_{n+1})}{\sum_{j=1}^{n+1}H(X_j,X_{n+1})}\right),\end{multline*}
    where $U\sim\textnormal{Uniform}[0,1]$ (for the smoothed version) or $U\equiv 1$ (for the deterministic version).
\end{itemize}
(For each method, in the smoothed case, note that a single random draw of $U$ is shared across all score values.)

Finally, for each method, writing $p(s)$ to denote $p^{\textnormal{RLCP}}(s)$ or  $p^{\textnormal{calLCP}}(s)$ or  $p^{\textnormal{baseCP}}(s)$  as appropriate, we can observe that $p(s)$ is a nonincreasing function of $s$ (i.e., a higher test score corresponds to a lower p-value). This means that the prediction interval can be defined as follows:
\[  \hat{C}_n(X_{n+1}) =    \begin{cases} \calY, & p(s^*_{(m)})>\alpha,\\
    \{y\in\calY: s(X_{n+1},y)\leq s_{(m)}\}, & p(s^*_{(m)})\leq \alpha < p(s_{(m)}),\\
    \{y\in\calY: s(X_{n+1},y)< s_{(m)}\}, & p(s_{(m)})\leq \alpha < p(s^*_{(m-1)}),\\
    \{y\in\calY: s(X_{n+1},y)\leq s_{(m-1)}\}, & p(s^*_{(m-1)})\leq \alpha < p(s_{(m-1)}),\\
    \dots\\
    \{y\in\calY: s(X_{n+1},y)\leq s_{(1)}\}, & p(s^*_{(1)})\leq \alpha < p(s_{(1)}),\\
        \{y\in\calY: s(X_{n+1},y)< s_{(1)}\}, & p(s_{(1)})\leq \alpha < p(s^*_{(0)}),\\
        \emptyset, & p(s^*_{(0)}) < \alpha.
    \end{cases}\]
(See \citet[Lemma 3.2]{guan2023localized} for an equivalent derivation of this implementation in the case of calLCP specifically.)

\section{Additional experiments and implementation details}
\subsection{Quantitative analysis of the effect of randomization on RLCP prediction intervals}\label{app:RLCP_deviation}
In Figure \ref{fig:RLCP_variability}, we showed a visual representation of the effect of randomization on the variability of the RLCP prediction interval. In the following experiments, we aim to quantify this variability in a  more precise way, to better understand the extent to which the RLCP prediction interval can vary across different random draws of $\tX_{n+1}$. In particular, we will consider the following measure of variability:
\begin{equation}\label{deviation definition}
    D(h)=\E_{D_n, X_{n+1}} \left[\frac{\mbox{MAD}\left(\lambda(\hat{C}_{n}^{\textnormal{RLCP}}(X_{n+1},\tX_{n+1})) \ \middle| \  D_n,X_{n+1}\right)}{\mbox{Median}\left(\lambda(\hat{C}_{n}^{\textnormal{RLCP}}(X_{n+1},\tX_{n+1})) \ \middle| \  D_n,X_{n+1}\right)}\right]
\end{equation}
where $\lambda$ denotes the Lebesgue measure, and where MAD, the median absolute deviation, is defined for a random variable $X$ as $\mbox{Median}(|X-\mbox{Median}(X)|)$. In other words, $D(h)$ measures the extent to which the width of the RLCP prediction interval can vary, relative to the typical width of this interval, given a \emph{fixed} data set (i.e., the only distribution being considered here is the random draw of $\tX_{n+1}$). We plot (an estimate of) this measure $D(h)$ for RLCP prediction intervals against a wide range of bandwidths $h$, for Setting 1 and Setting 2 of the univariate simulations that were carried out in Section~\ref{sec:simulation_univariate} (and analyzed in terms of the randomness of RLCP in Section~\ref{sec:derandomizing}). The results are shown in the left panel of Figure~\ref{fig:RLCP_deviation}.  
The results suggest that, at least for smaller bandwidths, a random instance of RLCP prediction interval can be approximately $10\%$ wider or narrower than the average interval width, so the deviation is small but by no means negligible. For larger bandwidths, since the localization effect is now reduced, the variability is consequently lower.
\begin{figure}[!h]
    \centering
    \includegraphics[width=\textwidth]{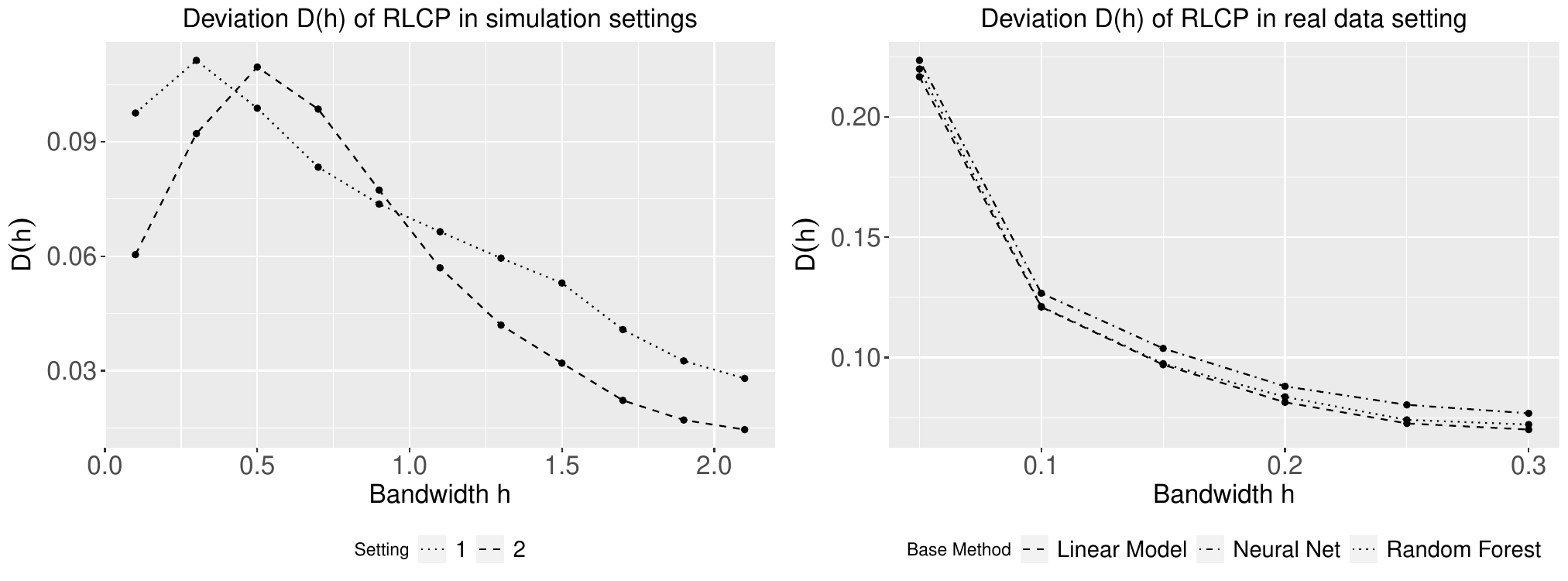}
    \caption{Quantification of the effect of randomization on RLCP via the estimated deviation $D(h)$, for the univariate simulations (Section~\ref{sec:simulation_univariate}) and the real data example (Section~\ref{sec:real_data}). See Section~\ref{app:RLCP_deviation} for details.}
    \label{fig:RLCP_deviation}
\end{figure}

We perform a similar analysis for our real data experiment (from Section~\ref{sec:real_data}), with results shown in the right panel of Figure~\ref{fig:RLCP_deviation}. These results show that, for our real data experiment, the RLCP prediction interval is more variable; a given draw of the RLCP prediction interval may be approximately $10$--$20\%$ wider or narrower in smaller bandwidth regimes and $5$--$10\%$ wider for relatively larger bandwidths, compared to the typical width.

\subsection{Experiments on $m$-RLCP}\label{app:mRLCP experiments}
\begin{figure}[!h]
    \centering
    \includegraphics[width=0.9\textwidth]{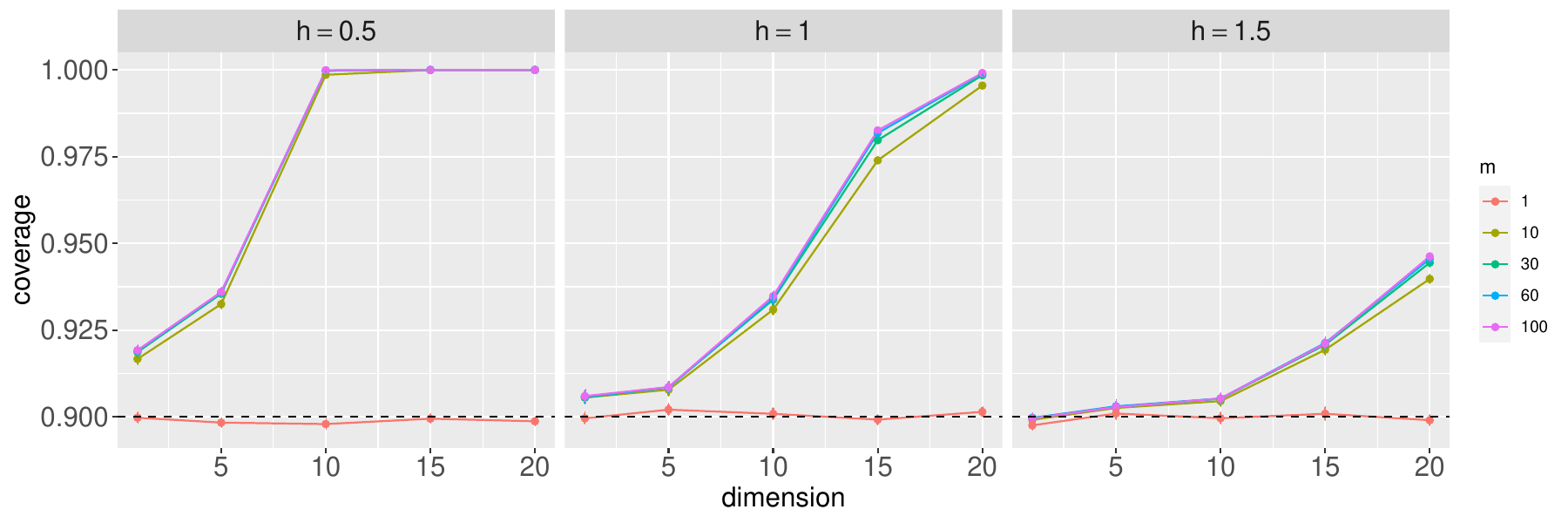}
    \caption{Marginal coverage of $m$-RLCP against increasing dimension for different choices of $m$.}
    \label{fig:mRLCP_coverage}
\end{figure}

The theoretical guarantee for $m$-RLCP only considers the issue of potential undercoverage (i.e., it provides a lower bound on coverage). In practice, $m$-RLCP can be extremely conservative. In Figure \ref{fig:mRLCP_coverage}, we plot the marginal coverage of $m$-RLCP against increasing dimension $d$, for the simulation setting previously studied in Section~\ref{sec:simulation_multivariate}.  The plots in Figure~\ref{fig:mRLCP_coverage} show marginal coverage over $30$ independent trials, with pretraining, calibration and test sample size each being $2000$.  Note that $m$-RLCP for $m=1$ is equivalent to the original RLCP method, while choosing large values of $m$ leads to a more derandomized method.
While RLCP (i.e., $m$-RLCP with $m=1$) shows coverage at the target level $1-\alpha$, all larger values of $m$ show substantial overcoverage and are thus extremely conservative, meaning that this form of aggregation does not offer a practical benefit.

\subsection{Additional experiments on local coverage in univariate settings}\label{app:additional_univariate_results}
\subsubsection{Local coverage experiments with additional bandwidths}
In Section~\ref{sec:simulation_univariate}, we showed results for our experiment comparing baseLCP, calLCP, and RLCP in the univariate setting, in terms of their local coverage properties across the range of values of the feature $X$. Here we show additional plots for this same experiment, with an expanded range of bandwidths, $h\in\{0.1,0.2,0.4,0.8,1.6\}$.
Settings 1 and 2 are shown in Figures \ref{fig:setting_1_local_coverage_full} and \ref{fig:setting_2_local_coverage_full}. All details for implementation are exactly as in Section~\ref{sec:simulation_univariate}.
\begin{figure}[!ht]
    \centering
    \includegraphics[width=0.75\textwidth]{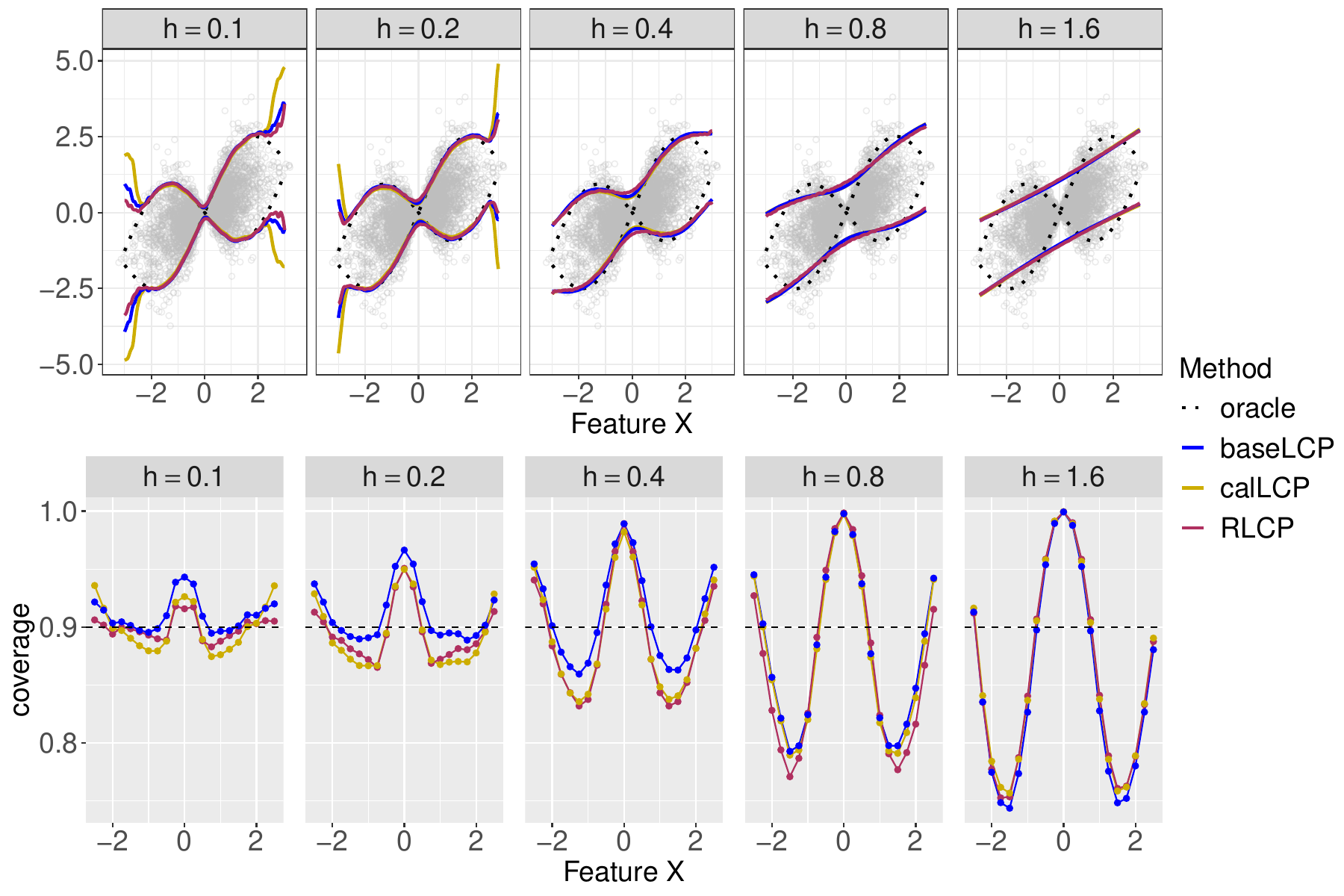}
    \caption{Additional results for Setting 1. Top panels: the average prediction intervals produced by each method (at each feature value $x$, the endpoints of the prediction interval are averaged over $50$ independent trials), compared to the oracle prediction interval. Bottom panels: The corresponding local coverage of each method, averaged over $50$ independent trials. See Section~\ref{sec:simulation_univariate} and Appendix~\ref{app:additional_univariate_results} for details.}
    \label{fig:setting_1_local_coverage_full}
\end{figure}

\begin{figure}[!ht]
    \centering
    \includegraphics[width=0.75\textwidth]{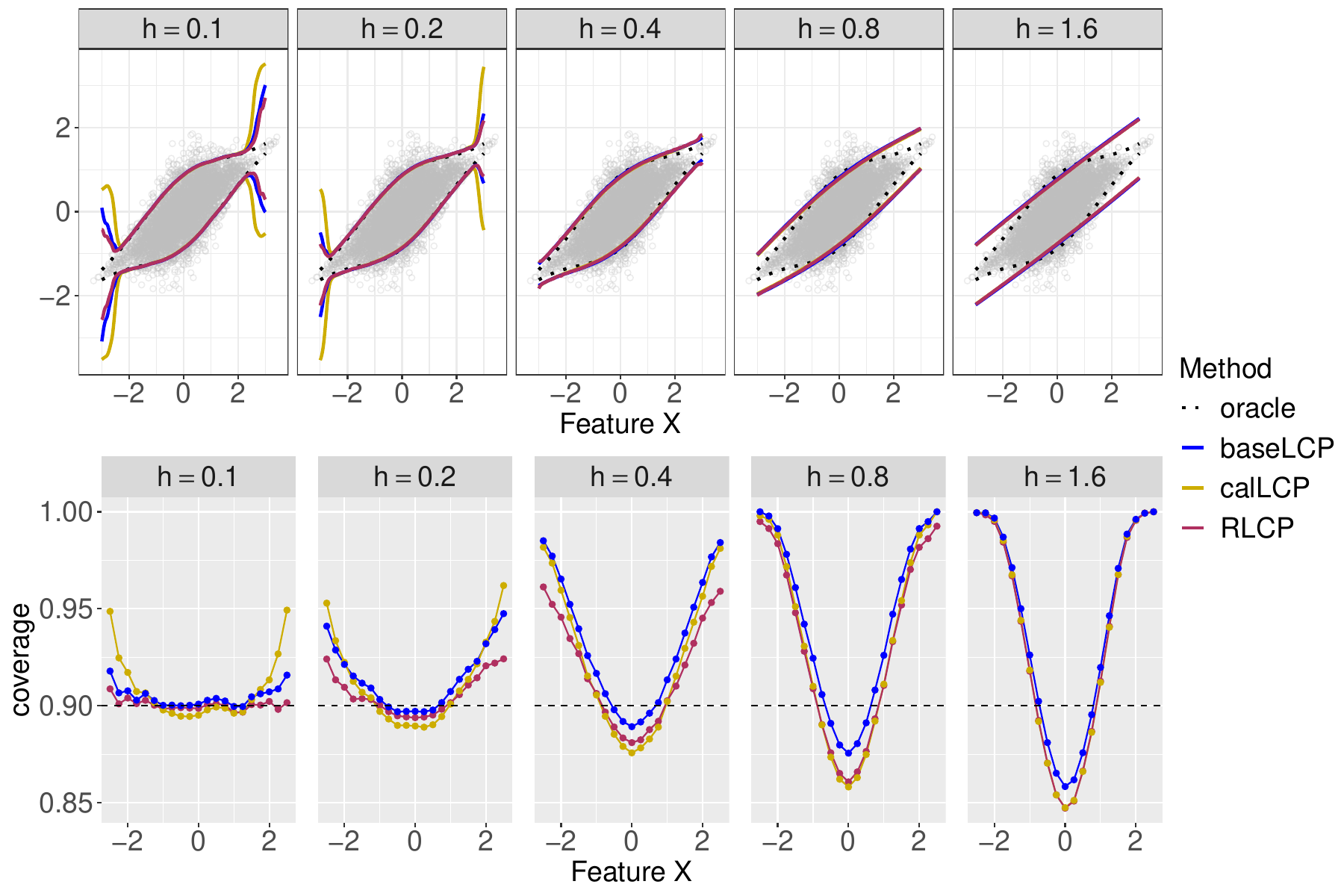}
    \caption{Additional results for Setting 2. Top panels: the average prediction intervals produced by each method (at each feature value $x$, the endpoints of the prediction interval are averaged over $50$ independent trials), compared to the oracle prediction interval. Bottom panels: The corresponding local coverage of each method, averaged over $50$ independent trials. See Section~\ref{sec:simulation_univariate} and Appendix~\ref{app:additional_univariate_results} for details.}
    \label{fig:setting_2_local_coverage_full}
\end{figure}
\subsubsection{Comparison with \citet{gibbs2023conformal}'s method}
In Section~\ref{sec:discussion_literature}, we have seen how localization based approaches, such as RLCP, and calibration based approaches, such as the method of \citet{gibbs2023conformal}, fit into a  more general framework of conformal algorithms. As mentioned before, the localization step of RLCP with a particular bandwidth, (Step (b) of the three-step workflow described in Section~\ref{sec:discussion_literature}), and the calibration step of \citet{gibbs2023conformal} with respect to a particular functional class $\mathcal F$ (Step (c) of the workflow) can be viewed as approximately interchangeable---both approaches serve to approximate test-conditional coverage via a local relaxation. In particular, the two approaches lead to a similar trade-off between localization and sample size. Choosing a richer class of functions $\mathcal{F}$ (for the method of \citet{gibbs2023conformal}) or a smaller bandwidth for the kernel $H$ (for RLCP) would have similar effects on the outcome, by placing a stronger requirement on the coverage properties of the algorithm and requiring more adaptivity to local properties of the data; on the other hand, for both types of methods, this would come at a cost of reduced effective sample size, and therefore, the potential to return wider or less informative prediction intervals.

In order to compare these methods, and better understand how they share the same trade-off,  we repeat the univariate experiment of Section~\ref{sec:simulation_univariate} (Setting 1).\footnote{The method of \citet{gibbs2023conformal} is run using the code provided in that paper, which is available at \url{https://github.com/jjcherian/conditional-conformal}.} In Figure \ref{fig:experiments_comparing_with_gibbs_conformal} we plot the local coverage (as defined in Section \ref{sec:simulation_univariate}) for each method, averaged over $50$ independent trials. Both \citet{gibbs2023conformal}'s method and RLCP are implemented in two versions, ``high adaptivity'' and ``low adaptivity''. For \citet{gibbs2023conformal}'s method,  the function class is chosen to be more fine or more coarse, 
\[\mathcal{F}_i = \left\{\ind\{x\in G\} : G \in\mathcal{G}_i\right\} ~\textnormal{for}~ i=1,2 \]
where, $\mathcal{G}_1$, $\mathcal{G}_2$ are partitions of $[-2.5,2.5]$ into $8$ bins of equal size or $4$ bins of equal size, respectively. For RLCP, we use a Gaussian kernel with bandwidth
\[h_1 = 0.025 \textnormal{ or }h_2 = 0.05.\]

The richer function class $\mathcal{F}_1$, and similarly the smaller bandwidth $h_1$---both provide more localization and ``high adaptivity'', and lead to better local coverage, as compared to $\mathcal{F}_2$ or $h_2$---the choices with ``low adaptivity''. For both methods, the ``high adaptivity'' version shows  local coverage lying very close to target coverage $0.9$ uniformly over the feature space, but we see larger fluctuations for the versions with ``low adaptivity''.

\begin{figure}[!h]
    \centering
    \includegraphics[width=0.75\textwidth]{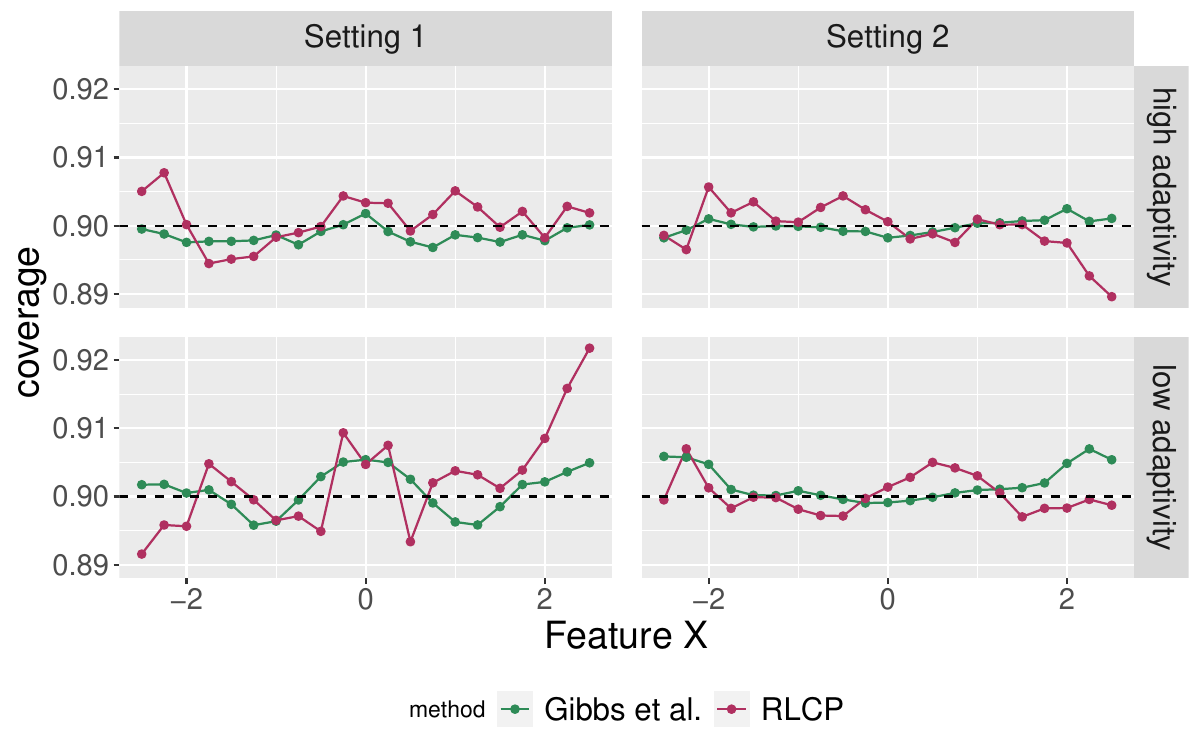}
    \caption{Local coverage of RLCP and the method from \citet{gibbs2023conformal} in setting $1$ for $d=1$}
    \label{fig:experiments_comparing_with_gibbs_conformal}
\end{figure}

\subsection{Additional experiments on local coverage in multivariate setting}\label{app:experiments_coverage_on_highD_bins}
In this section, we expand our experiments on local coverage in the multivariate setting. Recall in Section \ref{sec:simulation_multivariate} that we have studied coverage conditional on two sets, $B_{\textnormal{in}}$ and $B_{\textnormal{out}}$, which are defined to characterize the center and the tails of the multivariate normal data distribution. Here, to take a closer look at local coverage properties of RLCP as compared to calLCP, we will consider larger collections of sets by taking a finer partition of the space. Ideally, we will want to see empirical coverage on these sets  to cluster around the target level $0.9$---in particular, if each set $B$ is not too small, for RLCP our coverage guarantee in Theorem \ref{thm:test_conditional} ensures approximate coverage on each $B$. 

Specifically, we consider the data distribution given by
$$ X\sim \textnormal{Uniform}\bigl([-3,3]^d\bigr),~ Y|X\sim \calN \bigl(\sum_{i=1}^d\frac{ X_i}{2},\sum_{i=1}^d|\sin (X_i)|\bigr),$$
where $d\in \{5,10,\hdots,30\}$. We define intervals $I_i=[i,i+1)$, and we note that $\{I_i: i=-3,-2,-2,0,1,2\}$ define a partition of $[-3,3)$. Let us write the feature $x$ as $(x_1,x_2,\hdots,x_d)$, and consider the collection of sets $B$ of the form $B= I_i\times I_j\times I_k \times \RR^{d-3}$,
\[
\mathcal B = \left\{I_i\times I_j\times I_k \times \RR^{d-3} : i,j,k \in \{-3,-2,-2,0,1,2\}\right\}.
\]
This is a partition of the support of the distribution of $X$---we use the first $3$ coordinates of $x$ to determine which set $B\in\mathcal B$, it belongs to. 

We compute the empirical coverage of calLCP and RLCP conditional on each set $B\in\mathcal B$, averaged over $30$ repetitions. Each of the methods are run with $3000$ training and calibration samples, and $5000$ test samples with their own dimension-adaptive bandwidth choices, as defined in Section \ref{sec:simulation_multivariate}, which maintains a constant effective sample size of $50$ across all values of dimension $d$. Finally, in Figure \ref{fig:highd_local_coverage_full} we plot the kernel density estimates of the conditional coverages on these partitions across different dimensions. While both methods show conditional coverage that is distributed near the target value $0.9$, we do observe a difference between the two methods---in higher dimensions, the conditional coverage values are more widely spread around the target value $0.9$ for calLCP, as compared to RLCP. 
\begin{figure}[!h]
    \centering
    \includegraphics[width=0.75\textwidth]{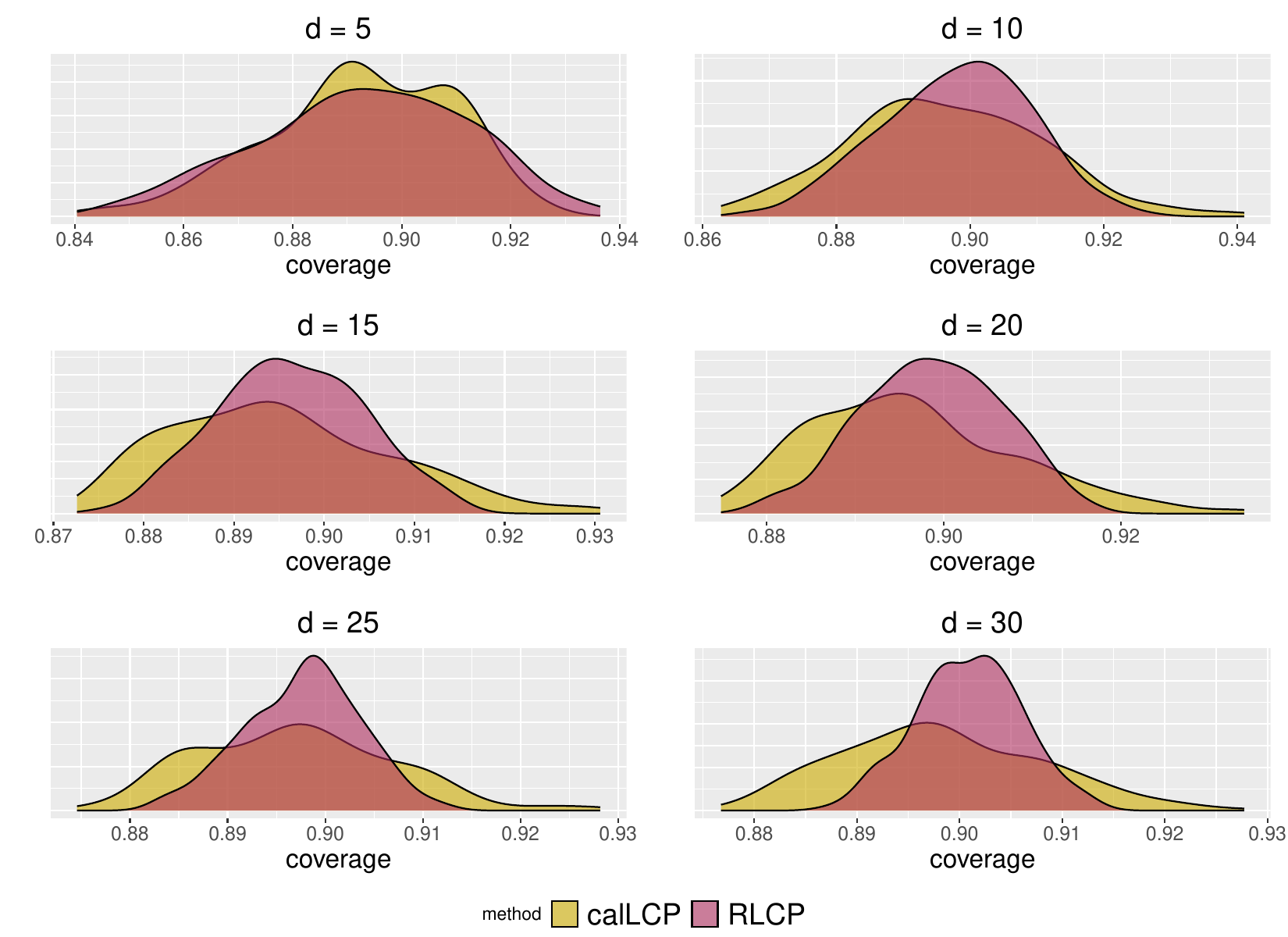}
    \caption{Local coverage of calLCP and RLCP prediction intervals across different dimensions.}
    \label{fig:highd_local_coverage_full}
\end{figure}

%\nocite{*}

\end{document}